\let\csname equation*\endcsname\relax
\let\csname endequation*\endcsname\relax
\newcounter{myalgctr}
\numberwithin{myalgctr}{section}
\DeclarePairedDelimiter\ceil{\lceil}{\rceil}
\DeclarePairedDelimiter\floor{\lfloor}{\rfloor}
\theoremstyle{plain}
\newtheorem{thm}{Theorem}[section]
\newtheorem{prop}[thm]{Proposition}
\newtheorem{cor}[thm]{Corollary}
\newtheorem{lemma}[thm]{Lemma}
\theoremstyle{definition}
\newtheorem{define}{Definition}
\theoremstyle{remark}
\newtheorem{remark}[thm]{Remark}
\def\pure{|\psi \rangle}
\def\dpure{|\psi \rangle \langle \psi |}
\def\N{\mathds{N}}
\def\C{\mathds{C}}
\def\H{\mathcal{H}}
\def\L{\mathcal{L}}
\def\N{\mathcal{N}}
\def \Span{{\rm span}}
\def\tr{{\rm{Tr}\,}}
\def \supp{{\rm supp}}
\def\cirtick{\makebox[0pt][l]{$\bigcirc$}\raisebox{-0.25ex}{\hspace{0.1em}$\checkmark$}}
\def\circross{\makebox[-0.5pt][l]{$\bigcirc$}\raisebox{-0ex}{\hspace{0.1em}${\bm\times}$}}
\begin{document}

\title[Generic pure states as steady states of quasi-local dissipative dynamics]
{Generic pure quantum states as steady states of quasi-local dissipative dynamics}

\author{Salini Karuvade$^1$, Peter D. Johnson$^{2}$, Francesco Ticozzi$^{3,1}$,  \\ and  Lorenza Viola$^1$}

\address{$^1$ Department of Physics and Astronomy, Dartmouth 
College, \\ 6127 Wilder Laboratory, Hanover, NH 03755, USA}

\address{$^2$ Department of Chemistry and Chemical Biology, Harvard University, \\ 
12 Oxford Street, Cambridge, MA 02138, USA}

\address{$^3$ Department of Information Engineering, University of Padua, \\ 
via Gradenigo 6/B,  35131 Padua, Italy}

\ead{lorenza.viola@dartmouth.edu}

\begin{abstract}
We investigate whether a {\em generic} multipartite pure state can be the unique asymptotic 
steady state of locality-constrained purely dissipative Markovian dynamics. In the simplest tripartite setting, 
we show that the problem is equivalent to characterizing the solution space of a set of linear equations and establish 
that the set of pure states obeying the above property has either measure zero or measure one, solely depending on 
the subsystems' dimension. A complete analytical characterization is given when the central subsystem is a qubit.
In the $N$-partite case, we provide conditions on the subsystems' size and the nature of the locality 
constraint, under which random pure states cannot be quasi-locally stabilized generically. Beside allowing for the 
possibility to {\em approximately stabilize} entangled pure states that cannot be exact steady states in settings where 
stabilizability is generic, our results offer insights into the extent to which random pure states may arise as unique 
ground states of frustration-free parent Hamiltonians. We further argue that, to high probability, pure quantum states 
sampled from a $t$-design enjoy the same stabilizability properties of Haar-random ones as long as 
suitable dimension constraints are obeyed and $t$ is sufficiently large. 
Lastly, we  demonstrate a connection between the tasks of quasi-local state stabilization and unique state reconstruction 
from local tomographic information, and provide a constructive procedure for determining a generic $N$-partite pure state 
based only on knowledge of the {\em support of any two} of the reduced density matrices of about half the parties, 
improving over existing results.
\end{abstract}

\date{\today}

\vspace{2pc}
\noindent{\it Keywords}:
 Generic and random pure quantum states; uniquely determined quantum states;
stability properties; 
Markovian quantum dynamics;
engineered dissipation

\maketitle

\section{Introduction}

{\em Generic} states of composite quantum systems have long played an important role in the exploration of 
fundamental questions ranging from quantum information theory to the foundations of quantum and 
statistical mechanics \cite{bengtsson,rand_review}. 
For instance, consideration of {\em pure} quantum states of $N$ parties, 
drawn at random from the uniform distribution, has been instrumental in shedding light on the extent to which 
knowledge of the ``parts'' may determine the ``whole'' generically \cite{wootters_random,red_states}: that is, 
in the context of the quantum marginal (or ``local consistency'') problem \cite{YiKai,Walter}, in understanding whether a 
given set of reduced density matrices (RDMs) is compatible with a global quantum state and, if so, whether the 
latter is uniquely determined by such ``local'' information. Random pure states have also enabled 
significant progress in characterizing the nature of the correlations that a multipartite quantum state may 
exhibit generically \cite{3qubitUJ} and in developing a simplified theory of entanglement in 
high-dimensional bipartite settings, by leveraging concentration of 
measure effects \cite{Hayden2006,bengtsson}. 
In turn, similar mathematical tools have provided a fresh approach to the problem of 
thermalization and the emergence of statistical ensembles  \cite{Popescu2006,Goldstein2006}.   

Our interest in generic quantum states in this work is motivated by the problem of characterizing the 
conditions under which convergence to a stable equilibrium may be achieved in naturally occurring or 
engineered dissipative evolutions subject to realistic resource constraints.  Specifically, we ask whether 
a generic quantum pure state $\pure$ on $N$ $d$-dimensional subsystems can be 
asymptotically stable under purely dissipative Markovian dynamics subject to specified ``quasi-locality'' 
constraints, namely, whether it is \emph{dissipatively quasi-locally stabilizable} (DQLS) in the sense 
of \cite{DQLS:p,FFQLS}, as we will formalize later.
Quantum stabilization problems have been extensively investigated in recent years in the context 
of dissipative quantum state preparation and quantum engineering, with special emphasis on stabilization of 
entangled pure states of relevance to quantum information processing (QIP) \cite{nielsen} and condensed-matter physics 
-- see also \cite{Kraus2008,TV2009,VWC2009,K2011,Diehl,QLS,FTS,Kapit}
and references therein for representative contributions. 

All existing characterizations, however, have addressed stabilization of a quantum state of interest, without 
offering insight on properties that {\em stabilizable sets} may enjoy as a whole. For a fixed multipartite structure 
on the underlying Hilbert space $\H$, are there features of a target state and a quasi-local (QL) constraint 
together, that make stabilization more likely than not?  Can we possibly identify non-trivial settings where 
pure-state stabilizability is almost always feasible, so, in fact, it is typical?  
Beside adding to the conceptual understanding of the constrained stabilization problem, generic stabilizability 
would offer new venues for {\em approximate stabilization} of states that may otherwise be unattainable -- 
so-called ``practical'' stabilization in control-theoretic parlance \cite{khalil}. Additional motivation for 
characterizing the stabilizability properties of generic pure states stems from the fact that any pure state 
that enjoys the DQLS properties can be equivalently thought of as a unique ground state of a frustration-free (FF) 
QL parent Hamiltonian \cite{DQLS:p}. Thus, from a many-body physics standpoint, such a characterization 
provides insight into the existence and structure of the corresponding QL parent Hamiltonians,  
thereby complementing recent studies of ground-state properties of generic QL Hamiltonians \cite{Ramis}. 

While, throughout our analysis, we will focus for concreteness on characterizing stabilizability of a generic 
$N$-partite state $\pure$ under {\em continuous-time} QL Markovian dynamics described by a semigroup 
master equation \cite{alicki}, we stress that our approach  and conclusions are equally applicable to QL 
{\em discrete-time} dynamics, as considered in \cite{FTS,discrete}.  
The content of the paper and our main results may be summarized as follows:

$\bullet$ Sec. \ref{sec:back} introduces the relevant system-theoretic background and establishes some 
preliminary results that serve as a foundation for our subsequent analysis. In particular, after formalizing the 
QL constraints of interest in terms of a specified {\em neighborhood structure} $\N$ on $\H$, and recalling the 
existing characterization of DQLS pure states, Sec. \ref{restate} provides a reformulation of the DQLS condition 
by directly constraining the action of suitable sets of neighborhood operators acting on $\pure$ [Theorem 2.4]. 
As a by-product, this yields an improved understanding of the DQLS property itself: while the latter was known to 
be invariant under the action of local unitary transformations on the target state $\pure$ \cite{DQLS:p}, we further 
prove here invariance under the larger class of {\em Stochastic Local Operations and Classical Communication} 
(SLOCC) transformations \cite{SLOCC}. In the simplest case of a tripartite ($N=3$) setting, the reformulation of the 
DQLS condition may be further recast in the form of a system of linear equations, the solution space of which 
determines whether stabilization of $\pure$ may be achieved.

$\bullet$ Sec. \ref{sec:generictrip} is the core section of our paper, presenting a detailed analysis of the tripartite 
setting in, say, dimensions  $d_a \times d_b \times d_c$ (where $d_a \leq d_c$ without loss of generality), relative 
to two-body QL constraints, $\N_{\text{tri}} \equiv \{ \N_{ab}, \N_{bc}\}$. By utilizing the maximal Schmidt rank property 
of generic pure states along any bi-partition,  a ``no-go'' result is 
established [Theorem \ref{No-Go}], preventing QL stabilization whenever $d_a d_b \leq d_c$. 
While this immediately implies that the DQLS property is a measure zero property in such cases, our analysis makes 
it clear that, remarkably, DQLS states form either a  measure zero or a measure one set in {\em all} other possible tripartite 
settings, when $d_a d_b > d_c$ [Theorem~\ref{generic:proof}].  Sec. \ref{sub:qubit} provides a complete characterization of 
the behavior of generic tripartite pure states when $d_b=2$, essentially identifying the combination of subsystem dimensions 
$d \times 2 \times (d+1)$ as the only one for which the DQLS property holds generically [Theorem \ref{qubitb}].  
For $d_b>2$, we are still able to analytically characterize a number of settings of interest [Theorem~\ref{special_cases}], 
and arrive at a general conjecture for arbitrary $d_a,d_b, d_c$ based on numerical exploration of the remaining instances. 

$\bullet$ Sec. \ref{sec:more} collects a number of extensions and implications of the tripartite results of Sec. 
\ref{sec:generictrip}. Notably, a no-go condition for the DQLS property is also established for generic $N$-partite 
pure states [Theorem \ref{No-Go:ext}], making precise the intuition that {\em scalable} QL stabilization cannot be 
achieved generically: as $N$ increases, the set of DQLS states relative to a specified neighborhood structure $\N$ 
has measure zero if the sizes of the neighborhoods $\N_j \in \N$ (that is, physically, the range of the couplings)
are kept fixed.  When such a no-go theorem applies, we further show that no FF QL Hamiltonian 
can have a generic pure state in its ground space, as one may intuitively expect.  Building on our analysis, an 
algorithmic procedure is presented, for determining whether the DQLS property holds for given inputs $(\pure, \N)$.   
Implications beyond exact stabilization and beyond generic pure states are addressed in Sec. \ref{practical} and Sec. 
\ref{tdesign}, respectively.  In particular, we show how maximally entangled pure states may be approximately stabilized 
to arbitrary accuracy in multipartite scenarios where DQLS holds generically, and argue that a number of conclusions 
on the DQLS property (or lack thereof) may be extended to quantum states whose RDMs (or ``marginals'') do retain 
{\em certain} generic features -- notably, states sampled from a $t$-design \cite{Joseph2003,caves}.

$\bullet$ Sec. \ref{sec:marginal} illustrates connections between the task of 
pure state dissipative QL stabilization and the one of reconstructing the global state using information from its 
neighborhood RDMs, as relevant in practice to quantum state tomography protocols via local measurements 
\cite{UDAvsUDP}. Our main result, 
contained in Proposition \ref{UDA:generic}, improves over existing ones in several ways: we show that knowledge about 
the {\em support of any two RDMs on no more than (about) half} of the subsystems suffices to uniquely determine a 
generic $\pure$ on $N$ qudits among all possible quantum states;
further to that, we provide a {\em constructive} procedure for generic-state reconstruction, 
that is directly informed by the DQLS characterization we leverage in our approach.

We conclude in Sec. \ref{sec:conclusion} by highlighting additional related questions that may warrant future 
investigation. In the Appendix, we collect a number of useful properties that generic matrices enjoy as well as complete
proofs of two of our main theorems [Theorems~\ref{qubitb} and~\ref{special_cases}], which we deemed too lengthy for 
inclusion in the main text.

\section{Background and preliminaries}
\label{sec:back}

\subsection{Dissipative quasi-local stabilization: Prior results}

Consider a multipartite quantum system composed of $N$ distinguishable, finite-dimensional subsystems (or ``qudits''). 
The corresponding Hilbert space, $\mathcal{H} \simeq \mathds{C}^d$, has a tensor product structure given by 
\begin{equation}
\label{eq:sys}
\H = \bigotimes_{a=1}^N\H_a, \quad a = 1,\dots,N, \quad \text{dim}(\H_a) = d_a, \quad \text{dim}(\H) =d. 
\end{equation}
Let $\mathcal{B}(\H)$ denote the space of bounded linear operators acting on $\H$. For $X \in \mathcal{B}(\H)$, 
its adjoint operator is given by $X^{\dagger}$. Physical states are represented by density operators which form a set 
of trace-one, positive semi-definite operators denoted by $\mathcal{D}(\H) \subset \mathcal{B}(\H)$. When using a 
matrix representation of operators, $\mathds{M}^{d_1 \times d_2}$ shall denote the space of complex matrices in 
$d_1 \times d_2$ dimensions. 

If $\rho_0$ represents the initial state of the system, say, at time $t=0$, then for any $t>0$ the evolved state is given by 
a completely-positive trace-preserving (CPTP) linear map (or quantum channel) $\mathcal{T}_t$ acting on $\mathcal{B}(\H)$, 
namely \cite{nielsen},
\begin{equation*}
\rho(t) \equiv \mathcal{T}_t(\rho_0) = \sum_k M_k\rho_0M_k^\dagger, 
 \quad \sum_k M_k^\dagger M_k = I, \quad M_k \in \mathcal{B}(\H),
\end{equation*}
 where $I$ is the identity operator in $\mathcal{B}(\H)$. The operator-sum representation of $\mathcal{T}_t$ given above, although not unique, exists if and only if the map is CP, whereas the second equality enforces the TP property.
The evolution of a large class of quantum systems of physical interest can be described using \emph{Quantum Dynamical Semigroups} (QDS). A QDS is a  continuous, one-parameter family of CPTP maps $\{\mathcal{T}_t\}_{t\geq 0}$, with $\mathcal{T}_0 = \mathcal{I}$, the identity map on $\mathcal{B}(\H)$, and obeying the Markov property, $\mathcal{T}_t \circ \mathcal{T}_s = \mathcal{T}_{t+s}$ for $t,s \geq 0$. These conditions ensure the existence of a linear map $\L$ (or Liouvillian) that generates the semigroup via $\mathcal{T}_t = e^{t \L}$ and in terms of which the resulting dynamics may be represented in the canonical Gorini-Kossakowskii-Sudarshan-Lindblad form ($\hbar \equiv 1$) \cite{alicki,nielsen}:   
\begin{equation} 
\label{LME}
\dot{\rho}(t) \equiv \L[\rho(t)] = -i[H,\rho(t)]+\sum_k{ \Big(L_k\rho(t)L_k^\dagger - \frac{1}{2}\{L_k^\dagger L_k,\rho(t)\}\Big)}, 
\quad t \geq 0.
\end{equation}
Here, $H=H^\dagger$ is the system's effective Hamiltonian and $\{L_k\}$ are the Lindblad (noise) operators,
that are associated to the non-unitary component of the dynamics. In this paper we consider only {\em time-invariant} dynamics, so that $H,\{L_k\}$ do not depend on time. A given choice of $H$ and $\{L_k\}$ unambiguously identifies the QDS generator associated with it, which we denote by $\L(H,\{L_k\})$. Vice-versa, the choice of $H$ and $\{L_k\}$ that specify the action of a given $\L$ on $\mathcal{B}(\H)$ is not unique: any transformation of the form 
\begin{equation}
\label{gen_transform}
\overline{L}_k \equiv L_k+c_k I, \quad \overline{H} \equiv H-\frac{i}{2}\sum_k{(c_k^*L_k-c_kL_k^\dagger)}, \quad c_k \in \mathds{C},
\end{equation}
can be verified to leave the generator unchanged, that is, $\L(H,\{L_k\}) = \L(\overline{H},\{\overline{L}_k\})$. 

The lattice geometry or coupling topology in the multipartite system under consideration typically impose constraints on the 
structure of $\L$. Building on our previous work \cite{DQLS:p,FFQLS,QLS,FTS,discrete}, we introduce locality constraints by specifying subsets of subsystems that are allowed to be acted upon non-trivially by the dynamics:
\begin{define} 
For the multipartite system \eqref{eq:sys}, a \emph{neighborhood} $\N_j\subsetneq \{1,\dots,N\}$ is a set of indexes that labels a group of subsystems. A \emph{neighborhood structure} is a collection  of neighborhoods $\N = \{\N_j\}_{j=1}^M$  for some 
finite $M$.  
\end{define}  

\noindent 
The \emph{neighborhood complement} of $\N_j$, denoted by $\overline{\N}_j$, consists of subsystems that are not included 
in $\N_j$, such that $\N_j \cup \overline{\N}_j = \{1,\dots,N\}$. Given a quantum state $\rho\in \mathcal{D}(\H)$, its reduced density matrix (RDM) on the neighborhood $\N_j$ is given by $\rho_{\N_j} = \tr_{\overline{\N}_j}(\rho)$, where $\tr_{\overline{\N}_j}$ is the partial trace over the subsystems in $\overline{\N}_j$. In this work we consider only neighborhood structures that are \emph{complete}, that is, $ \bigcup_{j=1}^M \N_j = \{1,\dots,N\}$.  For fixed $\N$, 
a \emph{neighborhood operator} is an operator that acts non-trivially on only one neighborhood. For example, 
$X = X_{\N_j}\otimes I_{\overline{\N}_j}$, with $X_{\N_j}\in \mathcal{B}(\H_{\N_j})$ and $I_{\overline{\N}_j} = 
\otimes_{a \notin \N_j}I_a$, is a neighborhood operator on $\N_j$. 

\begin{define}
Given a neighborhood structure $\N$, the QDS generator $\L$ is \emph{Quasi-Local} (QL) if it may be expressed 
as a sum of neighborhood generators, that is, 
$\L = \sum_{j=1}^M{\L_{\N_j} \otimes \mathcal{I}_{\overline{\N}_j}},$
where $\mathcal{I}_{\overline{\N}_j}$ is the identity map acting on $\mathcal{B}(\H_{\overline{\N}_j})$.
\end{define}

\noindent 
Quasi-locality of the generator ${\cal L}$ implies \cite{FFQLS} that there exists a representation $\L \equiv \L(H,\{L_k\})$, 
such that the associated $H$ and $\{L_k\}$ both satisfy QL constraints, that is,  
$H = \sum_{j=1}^M H_{\N_j} \otimes I_{\overline{\N}_j}$ and $L_k = L_{k,\N_j} \otimes I_{\overline{\N}_j}$, for all $k$. 

As our main aim is to explore stabilizability properties by QL-constrained dynamics, we next recall the relevant stability 
notions:

\begin{define}
A state $\rho_d$ is \emph{Globally Asymptotically Stable} (GAS) if it is an invariant state of the generator ~\eqref{LME}, that is, $\L[\rho_d]=0,$ and for any initial state the evolution approaches $\rho_d$ in the long time limit, that is, 
$\lim_{t \to \infty} e^{\L t}[\rho_0] = \rho_d,$ for all  $\rho_0 \in \mathcal{D}(\H).$
\end{define}

\noindent 
As shown in \cite{QLS} (Corollary 1), if a generator $\L(H,\{L_k\})$ makes a target {\em pure state} $\rho_d=|\psi\rangle\langle\psi|$ GAS, then by a suitable transformation of $H$ and $\{L_k\}$ as specified in Eq.~\eqref{gen_transform},
the {\em same} generator can be represented in a \emph{standard form} $\L(\overline{H},\{\overline{L}_k\})$, 
such that $\overline{H}\pure = h\pure$ and $\overline{L}_k\pure = 0$, for all $k$. Notice that such a transformation preserves the QL structure of the Hamiltonian and the Lindblad operators. 
In view of this, the following definition may be given for pure-state stabilization: 
\begin{define} 
\label{DQLS:define}
A pure state $ \rho_d =|\psi\rangle\langle\psi| \in \mathcal{D}(\H)$ is \emph{Dissipatively Quasi-Locally Stabilizable} (DQLS) relative to a neighborhood structure $\N$ if there exists a QL generator in standard form $\L(\overline{H},\{\overline{L}_k\})$, with $\overline{H} \equiv 0$, for which $\rho_d$ is GAS. 
\end{define}

\noindent 
The above definition implies that if $\rho_d$ is DQLS, then $\rho_d\in \ker(\L(\overline{L}_k))$ for each $k$, which means 
that $\rho_d$ must be invariant under the dynamics relative to each neighborhood. 

Finally, in order to characterize the set of DQLS states, we shall need the following linear-algebraic tools \cite{FFQLS}:

\begin{define} 
\label{schmidt_span}
Consider two inner product spaces $V_A, \,V_B$. Given a vector $\vec{v} \in V_A \otimes V_B$ with Schmidt decomposition 
$\vec{v} = \sum_i{\mu_i \,\vec{a}_i \otimes \vec{b}_i}$, the \emph{Schmidt span} of $\vec{v}$ relative to $V_A$ is the subspace
$\Sigma_A(\vec{v}) \equiv \Span\{\vec{a}_i \in V_A : \vec{v} = \sum_i{\mu_i\, \vec{a}_i \otimes \vec{b}_i}, \, \, \vec{b}_i \in V_B\}.$
The \emph{extended Schmidt span} of $\vec{v}$, denoted by $\overline{\Sigma}_{A}(\vec{v})$, is constructed from 
$\Sigma_A(\vec{v})$ as $\overline{\Sigma}_{A}(\vec{v}) \equiv \Sigma_{A}(\vec{v}) \otimes V_B$.
\end{define}

\noindent 
For notational simplicity, from now on we represent a pure state $\rho = \dpure$ by the corresponding {\em normalized} state vector $\pure \in \H$. With a slight abuse of terminology, $\Sigma_{\N_j}(\pure)$ will be denoted as the Schmidt span of $\pure$ relative to the neighborhood $\N_j$ instead of the space $\H_{\N_j}$. By construction, notice that 
$$\Sigma_{\N_j}(\pure) = \supp (\rho_{\N_j}), \quad 
\overline{\Sigma}_{\N_j}(\pure) = \supp (\rho_{\N_j}\otimes I_{\overline{\N}_j}).$$

The first characterization of DQLS states has been provided in \cite{DQLS:p}, and is reported below in 
its equivalent form in terms of the Schmidt span:
 
\begin{thm} [\cite{DQLS:p,FFQLS}] 
\label{DQLS:th}
A multipartite pure state $\pure \in \H$ is DQLS relative to the neighborhood structure $\N = \{\N_j\}_{j=1}^M$ if and only if
\begin{equation}
\label{DQLS:eq}
\bigcap_{j=1}^M \overline{\Sigma}_{\N_j}(\pure) \equiv \H_0(\pure) = \Span\{\pure\}.
\end{equation}
\end{thm}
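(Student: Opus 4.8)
The plan is to prove the two implications separately, using the following reductions. By Definition~\ref{DQLS:define}, a generator witnessing that $\pure$ is DQLS is automatically in standard form with $\overline{H}\equiv 0$ and $\overline{L}_k\pure=0$ for all $k$; and global asymptotic stability of $\dpure$ forces it to be the \emph{unique} invariant state of $\L$, since any invariant $\sigma$ satisfies $\sigma=e^{\L t}[\sigma]\to\dpure$. Also, $\pure\in\overline{\Sigma}_{\N_j}(\pure)=\supp(\rho_{\N_j}\otimes I_{\overline{\N}_j})$ for every $j$, so $\pure\in\H_0(\pure)$ automatically and only the inclusion $\H_0(\pure)\subseteq\Span\{\pure\}$ is at stake.

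\emph{Necessity.} First I would unpack the constraint $L_k\pure=0$ for a QL standard-form generator. Each $L_k=L_{k,\N_{j(k)}}\otimes I_{\overline{\N}_{j(k)}}$ acts on some $\N_{j(k)}\in\N$; writing $\pure=\sum_i\mu_i\,\vec{a}_i\otimes\vec{b}_i$ in its Schmidt decomposition relative to the bipartition $\N_{j(k)}\,|\,\overline{\N}_{j(k)}$, with the $\vec{b}_i$ linearly independent, the equation $\sum_i\mu_i(L_{k,\N_{j(k)}}\vec{a}_i)\otimes\vec{b}_i=0$ forces $L_{k,\N_{j(k)}}\vec{a}_i=0$ for all $i$; hence $L_{k,\N_{j(k)}}$ annihilates $\Sigma_{\N_{j(k)}}(\pure)$ and $L_k$ annihilates $\overline{\Sigma}_{\N_{j(k)}}(\pure)$. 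In particular every $L_k$ annihilates $\H_0(\pure)=\bigcap_j\overline{\Sigma}_{\N_j}(\pure)$, so for any density operator $\rho$ with $\supp(\rho)\subseteq\H_0(\pure)$ one has $L_k\rho=\rho L_k^\dagger=0$, whence $L_k\rho L_k^\dagger=0$, $\{L_k^\dagger L_k,\rho\}=0$, and (since $\overline{H}=0$) $\L[\rho]=0$. If $\dim\H_0(\pure)>1$ this yields infinitely many invariant states, contradicting GAS; therefore $\H_0(\pure)=\Span\{\pure\}$.

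\emph{Sufficiency.} Here one must exhibit a QL standard-form generator with $\overline{H}\equiv 0$ for which $\dpure$ is GAS. For each $\N_j$ I would take neighborhood noise operators that annihilate $\Sigma_{\N_j}(\pure)$ and ``pump'' its orthogonal complement in $\H_{\N_j}$ into it -- for instance $L^{(j)}_b\equiv\bigl(|s^{(j)}\rangle\langle e^{(j)}_b|\bigr)\otimes I_{\overline{\N}_j}$, with $|s^{(j)}\rangle\in\Sigma_{\N_j}(\pure)$ and $\{|e^{(j)}_b\rangle\}_b$ an orthonormal basis of $\Sigma_{\N_j}(\pure)^{\perp}$ in $\H_{\N_j}$. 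Then $\L=\sum_j\L_{\N_j}$ is QL and in standard form ($L^{(j)}_b\pure=0$ as in the necessity step), and each $\L_{\N_j}$ by itself drives every trajectory $e^{\L_{\N_j}t}[\rho]$ to a state supported on $\overline{\Sigma}_{\N_j}(\pure)$, i.e.\ its attractive subspace is exactly $\overline{\Sigma}_{\N_j}(\pure)$. To finish I would show that the attractive subspace of the full $\L$ is the intersection $\bigcap_j\overline{\Sigma}_{\N_j}(\pure)$, which by hypothesis equals $\Span\{\pure\}$; this gives $e^{\L t}[\rho]\to\dpure$ for all $\rho$, hence the DQLS property.

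The crux -- and the expected main obstacle -- is precisely this last step: converting the purely linear-algebraic condition $\bigcap_j\overline{\Sigma}_{\N_j}(\pure)=\Span\{\pure\}$ into genuine global asymptotic stability, i.e.\ showing that the several \emph{local} relaxations toward the subspaces $\overline{\Sigma}_{\N_j}(\pure)$ compose into a single global basin of attraction at $\pure$, with no cancellation producing an extra invariant state and no peripheral (oscillatory) dynamics. This is the frustration-free composition property of \cite{FFQLS}; I expect to prove it by an invariant-subspace / Lyapunov-type argument exploiting that all the neighborhood generators $\L_{\N_j}$ annihilate the \emph{same} $\pure$, which is what rules out the cancellations that would otherwise enlarge the fixed-point set (as happens for generic collections of GKLS generators without a common fixed point).
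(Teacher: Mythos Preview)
This theorem is not proved in the present paper; it is quoted from \cite{DQLS:p,FFQLS}, and the only comment the paper makes (immediately after the statement) is that the proof in \cite{DQLS:p} identifies $\H_0(\pure)$ as the smallest subspace containing $\pure$ that can be rendered GAS by QL dissipation alone. So there is no in-paper argument to compare against line by line.

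Your proposal is nonetheless on target and mirrors the logic of the cited references. The necessity direction is complete and correct: a neighborhood Lindblad operator in standard form annihilates the whole extended Schmidt span on its neighborhood, hence all of $\H_0(\pure)$, so every state supported there is a fixed point---this is exactly why $\H_0(\pure)$ is the \emph{smallest} QL-stabilizable subspace, as the paper remarks. For sufficiency, your pumping construction is the standard one (it is dual to the canonical FF parent Hamiltonian in Corollary~\ref{FFQL}, built from the same projectors $\Pi_{\Sigma_{\N_j}(\pure)}$), and you have correctly isolated the one substantive step that remains: showing that the \emph{sum} $\sum_j \L_{\N_j}$ has $\H_0(\pure)=\bigcap_j\overline{\Sigma}_{\N_j}(\pure)$ as its global attractor, i.e.\ that no cancellations among the neighborhood pieces produce extra invariant states or peripheral dynamics. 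Your plan to handle this via an invariant-subspace/Lyapunov argument exploiting the common fixed point $\pure$ is the right instinct and is how the proofs in \cite{DQLS:p,FFQLS} proceed; just be aware that a naive per-neighborhood Lyapunov function $\tr\!\big((I-\Pi_{\overline{\Sigma}_{\N_j}})\rho\big)$ need not be monotone under the \emph{other} $\L_{\N_k}$, so the composition step genuinely requires the structured invariance analysis of those references rather than a one-line monotonicity bound.
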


\noindent 
The proof of this theorem in \cite{DQLS:p} makes it clear that the target-dependent subspace $\H_0(\pure)$ 
defined by \eqref{DQLS:eq} is the {\em smallest} subspace containing $\pure$, which can be made GAS using QL 
dissipation alone (that is, a QL generator that has $\overline{H}\equiv 0$ in standard form).
We shall refer to $\H_0(\pure)$ as the \emph{DQLS subspace associated to $\pure$}, relative to $\N$. 
Clearly, $\pure$ is DQLS if and only if its corresponding DQLS subspace is one-dimensional. 

An equivalent characterization of DQLS states, which also provides them with a transparent physical interpretation,  
is also proved as a corollary to Theorem~\ref{DQLS:th} in \cite{DQLS:p}:

\begin{cor}[\cite{DQLS:p}] 
\label{FFQL}
A multipartite pure state $\pure \in \H$ is DQLS relative to $\N$ if and only if it is the unique ground state of a 
FF QL parent Hamiltonian respecting the same neighborhood structure.
\end{cor}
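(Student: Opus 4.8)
The plan is to read off the corollary from the Schmidt-span characterization of Theorem~\ref{DQLS:th}, by identifying the DQLS subspace $\H_0(\pure) = \bigcap_{j=1}^M \overline{\Sigma}_{\N_j}(\pure)$ with the zero-energy eigenspace of a frustration-free (FF) QL Hamiltonian. Both implications hinge on the elementary fact that, for a positive semidefinite neighborhood operator $A = A_{\N_j}\otimes I_{\overline{\N}_j}$, one has $A\pure = 0$ precisely when $\tr(A_{\N_j}\rho_{\N_j}) = 0$ (take the expectation value in $\pure$), precisely when $\supp(\rho_{\N_j}) = \Sigma_{\N_j}(\pure) \subseteq \ker A_{\N_j}$, i.e., when $\overline{\Sigma}_{\N_j}(\pure) \subseteq \ker A$; here the last two equivalences use positivity of $A_{\N_j}$ and of $\rho_{\N_j}$.

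For the ``only if'' direction, assume $\pure$ is DQLS, so $\H_0(\pure) = \Span\{\pure\}$ by Theorem~\ref{DQLS:th}. I would exhibit the canonical parent Hamiltonian $H \equiv \sum_{j=1}^M \Pi_j$, where $\Pi_j \equiv (I_{\N_j} - P_{\N_j})\otimes I_{\overline{\N}_j}$ and $P_{\N_j}$ is the orthogonal projector onto $\Sigma_{\N_j}(\pure) = \supp(\rho_{\N_j})$. Each $\Pi_j$ is a neighborhood operator on $\N_j$ with $\Pi_j \geq 0$ and $\Pi_j\pure = 0$, the latter because $\pure \in \Sigma_{\N_j}(\pure)\otimes\H_{\overline{\N}_j} = \overline{\Sigma}_{\N_j}(\pure)$. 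Hence $H$ is QL, $H \geq 0$, $H\pure = 0$, and $\pure$ minimizes each term $\Pi_j$ separately, so $H$ is FF. Its ground (zero-energy) space is $\ker H = \bigcap_j \ker \Pi_j = \bigcap_j \overline{\Sigma}_{\N_j}(\pure) = \H_0(\pure) = \Span\{\pure\}$, so $\pure$ is the unique ground state of a FF QL parent Hamiltonian respecting $\N$.

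For the converse, suppose $\pure$ is the unique ground state of some FF QL parent Hamiltonian $H = \sum_{j=1}^M H_{\N_j}\otimes I_{\overline{\N}_j}$. Using frustration-freeness I would first shift each $H_{\N_j}$ so that its smallest eigenvalue is $0$; then $H \geq 0$ with ground-state energy $0$, and $H\pure = 0$ forces $(H_{\N_j}\otimes I_{\overline{\N}_j})\pure = 0$ for every $j$, since the summands are positive semidefinite. By the elementary fact above this yields $\overline{\Sigma}_{\N_j}(\pure) \subseteq \ker(H_{\N_j}\otimes I_{\overline{\N}_j})$ for each $j$; intersecting over $j$ gives $\H_0(\pure) = \bigcap_j \overline{\Sigma}_{\N_j}(\pure) \subseteq \ker H = \Span\{\pure\}$, and since $\pure \in \H_0(\pure)$ always holds, $\H_0(\pure) = \Span\{\pure\}$. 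Theorem~\ref{DQLS:th} then gives that $\pure$ is DQLS.

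The two directions are short, and the only point that requires care is the correct use of the definition of ``frustration-free'', namely that the global ground-state energy equals the sum of the local minima: this is exactly what licenses passing from $H\pure = 0$ to $(H_{\N_j}\otimes I_{\overline{\N}_j})\pure = 0$ term by term in the converse, and hence the key inclusion $\overline{\Sigma}_{\N_j}(\pure) \subseteq \ker(H_{\N_j}\otimes I_{\overline{\N}_j})$ that makes the argument close. Equivalently, this step re-expresses the minimality of $\H_0(\pure)$ among all zero-energy eigenspaces of FF QL Hamiltonians that contain $\pure$ -- a property already implicit in the proof of Theorem~\ref{DQLS:th} as discussed after its statement.
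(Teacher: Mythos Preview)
Your proof is correct and follows essentially the same approach as the paper: the ``only if'' direction uses the canonical parent Hamiltonian $H_{\pure} = \sum_j (I - \Pi_{\Sigma_{\N_j}(\pure)}\otimes I_{\overline{\N}_j})$ that the paper explicitly exhibits right after the corollary, while your converse argument (shift each local term to have ground energy zero, then use positivity to get $\overline{\Sigma}_{\N_j}(\pure) \subseteq \ker(H_{\N_j}\otimes I_{\overline{\N}_j})$ and intersect) is precisely the content of Proposition~\ref{ground_states} later in the paper. The paper itself defers the full proof to \cite{DQLS:p}, so your write-up is in fact more self-contained than what appears here.
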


\noindent
Recall that a QL Hamiltonian $H=\sum_j H_j$ is FF if any ground state of $H$ is also a ground state of {\em each} $H_j$. 
A {\em canonical} FF QL parent Hamiltonian $H_{\pure}$ may be constructed from the state 
$\pure$ itself by letting $H_{\pure} = \sum_j{H_j},$ $H_j \equiv I - \Pi_{\Sigma_{\N_j} (\pure)} \otimes I_{\overline{\N}_j},$
where $\Pi_{\Sigma_{\N_j} (\pure)}$ is the (orthogonal) projector onto the Schmidt span $\Sigma_{\N_j} (\pure)$.

\medskip

\begin{remark}
As mentioned in the Introduction, we shall focus henceforth on continuous-time Markovian dynamics, 
described by Eq.~\eqref{LME}.  The relevant notions of quasi-locality and stabilization are also applicable, 
however, to multipartite systems that evolve according to discrete-time Markovian dynamics.  
In the general non-homogeneous (time-varying) case, the state evolution is then given by a series of 
CPTP maps $\{\mathcal{T}_t\},$ such that
$\rho_{t+1} = \mathcal{T}_t(\rho_{t}), $ $\rho_t \in \mathcal{D}(\H),$ $t= 0,1,2,\ldots, $
and each such map is constrained to be a neighborhood map, namely, for each $t$ we may write $\mathcal{T}_t = 
\mathcal{T}_{t,\, \N_j} \otimes {\cal I}_{\overline{\N}_j}$ for some $j$.
It has been shown in \cite{discrete} that the mathematical characterization of pure states that are stabilizable by 
QL discrete-time dynamics yields the {\em same} necessary and sufficient condition as that for the DQLS states in 
Eq.~\eqref{DQLS:eq}. Thus, it follows that our analysis is applicable to QL discrete-time stabilizing dynamics as well. 
\end{remark}

\subsection{Reformulation of the DQLS condition for general pure states}
\label{restate}

Our main aim is to study the DQLS property of generic pure states. 
Given $\pure \in \H$ and its DQLS subspace $\H_0(\pure)$ for a specified neighborhood structure $\N$, 
the key mathematical idea underlying our approach is to characterize $\H_0(\pure)$
in terms of sets of neighborhood operators acting on $\pure$. This characterization can be employed to reformulate the 
DQLS condition in Eq.~\eqref{DQLS:eq} for an arbitrary number $N$ of subsystems. For $N=3$, the restated DQLS condition can be further reformulated as a set of linear equations whose coefficients are determined by the entries of $\pure$ in the standard basis. The solution space to this set of equations is shown to be in correspondence with $\H_0(\pure)$ and hence directly determines the DQLS property of $\pure$. In Sec. \ref{sec:generictrip}, such a reformulation will be instrumental in leveraging the properties of generic pure states toward understanding their DQLS nature. 

\subsubsection{$N$-partite setting.} 
The desired characterization of the DQLS subspace of a pure state $\pure$ in terms of sets of neighborhood operators 
stems from the following:

\begin{thm}
 \label{DQLS:subspace}
Let $\pure \in \H$ be a multipartite pure state and $\H_0(\pure)$ the corresponding DQLS subspace relative to the neighborhood structure $\N = \{\N_j\}_{j=1}^M$. Another pure state $|\psi'\rangle$ belongs to  $\H_0(\pure)$ if and only if there exists  operators $X_{\overline{\N}_j} \in \mathcal{B}(\H_{\overline{\N}_j})$ such that:
\begin{equation} 
\label{psi'}
|\psi'\rangle = (I_{\N_j} \otimes X_{\overline{\N}_j})\pure,  \quad j=1,\dots,M.
\end{equation}
\end{thm}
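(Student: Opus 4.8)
The plan is to reduce the $M$-neighborhood statement to a single bipartite identity applied once per neighborhood, and then simply intersect. Recall from \eqref{DQLS:eq} that $\H_0(\pure)=\bigcap_{j=1}^{M}\overline{\Sigma}_{\N_j}(\pure)$, so $|\psi'\rangle\in\H_0(\pure)$ if and only if $|\psi'\rangle\in\overline{\Sigma}_{\N_j}(\pure)$ for every $j$; hence it is enough to establish, for each fixed $j$, the operator-range description
\[ \overline{\Sigma}_{\N_j}(\pure)=\big\{\,(I_{\N_j}\otimes X_{\overline{\N}_j})\pure \ :\ X_{\overline{\N}_j}\in\mathcal{B}(\H_{\overline{\N}_j})\,\big\}. \]
Granting this identity, $|\psi'\rangle\in\H_0(\pure)$ is equivalent to the existence, for every $j$, of some $X_{\overline{\N}_j}$ with $|\psi'\rangle=(I_{\N_j}\otimes X_{\overline{\N}_j})\pure$, which is precisely condition \eqref{psi'}.

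To prove the displayed identity, fix $j$ and abbreviate $A\equiv\N_j$, $B\equiv\overline{\N}_j$, writing a Schmidt decomposition $\pure=\sum_{i=1}^{r}\mu_i\,\vec{a}_i\otimes\vec{b}_i$ with $\mu_i>0$ and $\{\vec{a}_i\}$, $\{\vec{b}_i\}$ orthonormal, so that $\Sigma_A(\pure)=\Span\{\vec{a}_1,\dots,\vec{a}_r\}$ and $\overline{\Sigma}_A(\pure)=\Sigma_A(\pure)\otimes\H_B$. For the easy inclusion, note that $(I_A\otimes X_B)\pure=\sum_i\mu_i\,\vec{a}_i\otimes(X_B\vec{b}_i)$ lies in $\Span\{\vec{a}_1,\dots,\vec{a}_r\}\otimes\H_B=\overline{\Sigma}_A(\pure)$ for any $X_B\in\mathcal{B}(\H_B)$. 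For the reverse inclusion, since $\{\vec{a}_1,\dots,\vec{a}_r\}$ is a basis of $\Sigma_A(\pure)$, every $|\phi\rangle\in\Sigma_A(\pure)\otimes\H_B$ can be written as $|\phi\rangle=\sum_{i=1}^{r}\vec{a}_i\otimes\vec{c}_i$ for suitable $\vec{c}_i\in\H_B$; because $\{\vec{b}_i\}_{i=1}^{r}$ is orthonormal and the $\mu_i$ are strictly positive, the operator $X_B\in\mathcal{B}(\H_B)$ defined by $X_B\vec{b}_k=\mu_k^{-1}\vec{c}_k$ for $k=1,\dots,r$ and $X_B\equiv 0$ on the orthogonal complement of $\Span\{\vec{b}_1,\dots,\vec{b}_r\}$ is well defined, and it satisfies $(I_A\otimes X_B)\pure=\sum_k\mu_k\,\vec{a}_k\otimes(X_B\vec{b}_k)=\sum_k\vec{a}_k\otimes\vec{c}_k=|\phi\rangle$. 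This establishes the identity, and hence the theorem.

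Because the argument is elementary linear algebra, I do not expect a substantial obstacle; the only points requiring care are that the Schmidt coefficients $\mu_i$ are strictly positive (so they may be inverted in the definition of $X_B$), that the orthonormality of the $\{\vec{b}_i\}$ makes $X_B$ a well-defined bounded operator, and that an arbitrary element of $\Sigma_A(\pure)\otimes\H_B$ can indeed be expanded using the very vectors $\vec{a}_i$ occurring in the Schmidt decomposition of $\pure$. It is also worth remarking that the equivalence is purely linear-algebraic and holds verbatim for any vector $|\psi'\rangle$, the normalization being irrelevant, and that, since $\overline{\Sigma}_{\N_j}(\pure)=\supp(\rho_{\N_j}\otimes I_{\overline{\N}_j})$, the identity makes the connection to the support-based characterization of Theorem~\ref{DQLS:th} transparent.
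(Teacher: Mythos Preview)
Your proof is correct and follows essentially the same approach as the paper: both directions rest on the Schmidt decomposition of $\pure$ across the $\N_j\,|\,\overline{\N}_j$ cut, with the operator $X_{\overline{\N}_j}$ constructed via its action on the Schmidt vectors $\vec{b}_i$ using the invertibility of the strictly positive $\mu_i$. Your framing as an operator-range identity for each $\overline{\Sigma}_{\N_j}(\pure)$ separately, followed by intersection, is in fact slightly cleaner than the paper's version, which detours through a Schmidt decomposition of $|\psi'\rangle$ as well.
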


\begin{proof}
To prove the forward implication, let $\pure = \sum_l{\mu_l |\eta_l\rangle_{\N_j}\otimes |\gamma_l\rangle_{\overline{\N}_j}}$ and $|\psi'\rangle = \sum_k{\sigma_k |\chi_k\rangle_{\N_j}\otimes |\zeta_k\rangle_{\overline{\N}_j}}$ be the Schmidt decomposition of the two pure states relative to the $\N_j|\overline{\N}_j$ bipartition. Since $|\psi'\rangle \in \H_0(\pure)$, then $|\psi'\rangle \in \overline{\Sigma}_{\N_j}(\pure), \forall j$. Hence, it follows that 
$\Span\{|\chi_k\rangle_{\N_j}\} = \Sigma_{\N_j}(|\psi'\rangle) \subseteq \Sigma_{\N_j}(\pure) = \Span\{|\eta_l\rangle_{\N_j}\},$ 
for all $j.$ Thus, we may express $|\chi_k\rangle_{\N_j} = \sum_l {c_{k,l} |\eta_l\rangle_{\N_j}}$, with $c_{k,l} \in \mathds{C}$, for all $k$. 
Choose the operator $X_{\overline{\N}_j}$ such that $X_{\overline{\N}_j} |\gamma_l \rangle_{\overline{\N}_j} = \sum_k c_{k,l} (\sigma_k/\mu_l) \, |\zeta_k \rangle_{\overline{\N}_j} $. This guarantees that $(I_{\N_j} \otimes X_{\overline{\N}_j})\pure = |\psi'\rangle $, for all $j$, as claimed.
Conversely, to prove that any $|\psi'\rangle$ that is expressed by Eq.~\eqref{psi'} belongs to $\H_0(\pure)$, note that since $I_{\N_j} \otimes X_{\overline{\N}_j}$ is a linear transformation acting non-trivially only on $\H_ {\overline{\N}_j}$, $\Sigma_{\N_j}((I_{\N_j} \otimes X_{\overline{\N}_j})\pure) \subseteq \Sigma_{\N_j}(\pure)$. 
Hence, for any $|\psi'\rangle$ satisfying Eq.~\eqref{psi'},  
\begin{equation*}
\Span \{|\psi'\rangle\} \subseteq \bigcap_j \overline{\Sigma}_{\N_j}(|\psi'\rangle) \subseteq \bigcap_j \overline{\Sigma}_{\N_j}(\pure) = \H_0(\pure), 
\end{equation*}
from which the stated result follows.
\end{proof}

\noindent 
As a corollary,  the anticipated restatement of the DQLS condition in Eq. \eqref{DQLS:eq} follows:

\begin{cor}
\label{DQLS:restate}
A multipartite pure state $\pure \in \H$ is DQLS relative to the neighborhood structure $\N =\{\N_j \}_{j=1}^M $ if and only if the only 
operators $\{X_{\overline{\N}_j}\} \subseteq \mathcal{B}(\H_{\overline{\N}_j})$ that satisfy Eq.~\eqref{psi'} 
act as a scalar multiple of the identity on their Schmidt spans $\Sigma_{\overline{\N}_j}$, for all $j$.
\end{cor}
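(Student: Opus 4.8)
The plan is to read off Corollary~\ref{DQLS:restate} from Theorem~\ref{DQLS:subspace} together with the fact that $\pure$ is DQLS precisely when $\H_0(\pure)$ is one-dimensional, i.e. when every normalized pure state in $\H_0(\pure)$ is proportional to $\pure$. Theorem~\ref{DQLS:subspace} already translates ``$|\psi'\rangle \in \H_0(\pure)$'' into ``there exist $X_{\overline{\N}_j}\in\mathcal{B}(\H_{\overline{\N}_j})$ with $|\psi'\rangle = (I_{\N_j}\otimes X_{\overline{\N}_j})\pure$ for all $j$,'' so the whole statement can be phrased in terms of tuples $(X_{\overline{\N}_1},\dots,X_{\overline{\N}_M})$ that map $\pure$ to a common vector. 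The only genuinely new ingredient is a short Schmidt-decomposition computation pinning down the action of such an $X_{\overline{\N}_j}$ on the Schmidt span $\Sigma_{\overline{\N}_j}(\pure)$.

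First I would record the following elementary equivalence, for a fixed $j$: writing the Schmidt decomposition $\pure = \sum_l \mu_l\, |\eta_l\rangle_{\N_j}\otimes|\gamma_l\rangle_{\overline{\N}_j}$ with $\mu_l>0$ and $\{|\eta_l\rangle\}$, $\{|\gamma_l\rangle\}$ orthonormal, one has $(I_{\N_j}\otimes X_{\overline{\N}_j})\pure = \sum_l \mu_l\,|\eta_l\rangle\otimes(X_{\overline{\N}_j}|\gamma_l\rangle)$. Comparing coefficients of the linearly independent $|\eta_l\rangle$ shows that $(I_{\N_j}\otimes X_{\overline{\N}_j})\pure = c\,\pure$ for some $c\in\mathds{C}$ if and only if $X_{\overline{\N}_j}|\gamma_l\rangle = c\,|\gamma_l\rangle$ for every $l$ (here the positivity of the $\mu_l$ is what lets one cancel them); and since $\Sigma_{\overline{\N}_j}(\pure) = \Span\{|\gamma_l\rangle\}$, this is exactly the statement that $X_{\overline{\N}_j}$ acts as $c$ times the identity on $\Sigma_{\overline{\N}_j}(\pure)$.

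For the forward direction, assume $\pure$ is DQLS and let $\{X_{\overline{\N}_j}\}$ satisfy Eq.~\eqref{psi'} for some common $|\psi'\rangle$. Theorem~\ref{DQLS:subspace} gives $|\psi'\rangle\in\H_0(\pure)=\Span\{\pure\}$, so $|\psi'\rangle = c\,\pure$ for a single scalar $c$; applying the equivalence above to each $j$ then shows every $X_{\overline{\N}_j}$ acts as $c\,I$ on its Schmidt span, as claimed. For the converse, suppose every tuple satisfying Eq.~\eqref{psi'} has each $X_{\overline{\N}_j}$ acting as a scalar on $\Sigma_{\overline{\N}_j}(\pure)$. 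Take any pure state $|\psi'\rangle\in\H_0(\pure)$; Theorem~\ref{DQLS:subspace} produces operators $X_{\overline{\N}_j}$ realizing Eq.~\eqref{psi'}, the hypothesis makes each act as some $c_j$ on $\Sigma_{\overline{\N}_j}(\pure)$, and the equivalence above gives $|\psi'\rangle = (I_{\N_j}\otimes X_{\overline{\N}_j})\pure = c_j\,\pure$ for every $j$. Since $\pure\neq0$ the $c_j$ must all coincide, so $|\psi'\rangle\in\Span\{\pure\}$; as $\H_0(\pure)$ is a subspace containing $\pure$ whose normalized vectors are all proportional to $\pure$, it is one-dimensional and $\pure$ is DQLS.

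The argument is essentially bookkeeping on top of Theorem~\ref{DQLS:subspace}, so I do not expect a real obstacle; the two spots needing care are the coefficient-comparison step (one must use that the Schmidt coefficients appearing are strictly positive, which is automatic since the $|\gamma_l\rangle$ range only over genuine Schmidt vectors) and the observation that the individual scalars $c_j$ are forced to be equal because they all yield the same $|\psi'\rangle$ from the nonzero $\pure$. I would state this last point explicitly in the write-up so that the phrase ``scalar multiple of the identity on their Schmidt spans'' is not misread as permitting essentially $j$-dependent scalars.
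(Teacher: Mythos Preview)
Your proof is correct and follows essentially the same approach as the paper's own proof: both directions are read off from Theorem~\ref{DQLS:subspace} together with the characterization $\H_0(\pure)=\Span\{\pure\}$ for DQLS states. Your write-up is in fact more explicit than the paper's, which simply asserts that $(I_{\N_j}\otimes X_{\overline{\N}_j})\pure\propto\pure$ ``is possible only if'' $X_{\overline{\N}_j}$ acts as a scalar on $\Sigma_{\overline{\N}_j}(\pure)$; you supply the Schmidt-decomposition coefficient comparison that justifies this, and you also make explicit the observation that the a~priori $j$-dependent scalars $c_j$ must coincide.
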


\begin{proof}
Assume that $\pure$ is DQLS relative to $\N$. Thus, $\H_0 (\pure)= \Span\{\pure\}$ and any $|\psi'\rangle \in \H_0(\pure)$ is $\pure$ itself, up to a constant scalar factor. Following Eq.~\eqref{psi'}, this is possible only if $X_{\overline{\N}_j}$ act as the identity (up to a constant scalar factor) on the corresponding $\Sigma_{\overline{\N}_j}(\pure)$, for all $j$.
To show the reverse implication, assume that the only choice of $\{X_{\overline{\N}_j}\}$ satisfying Eq. \eqref{psi'} is indeed the identity 
(up to a constant scalar factor) on $\Sigma_{\overline{\N}_j}(\pure)$, for all $j$. This means that any $|\psi'\rangle \in \H_0$ is proportional to $\pure$ and hence the latter is DQLS.
\end{proof}

The above results are useful to establish an important feature of the DQLS property of pure states, namely, its invariance under the action of SLOCC transformations, through which a given quantum state may be converted into another with a non-zero probability of success \cite{SLOCC}. Recall that two pure states are said to be in the same {\em SLOCC class} when they are related by an invertible local transformation of the form $\otimes_{a=1}^N{X_a}$, where $X_a \in \mathcal{B}(\H_a)$ for each $a$ \cite{tripartite}\footnote{Such a transformation need not be norm-preserving. However, the DQLS behavior of a pure state can be verified to be independent of its norm. This is further discussed in Remark~\ref{norm}. }. The following result then holds:
\begin{prop}
\label{SLOCC:DQLS}
The dimension of the DQLS subspace $\H_0(\pure)$ of a multipartite pure state $\pure \in \H$, 
relative to any fixed neighborhood structure, is preserved under arbitrary SLOCC transformations. 
\end{prop}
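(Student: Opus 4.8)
The plan is to show that an arbitrary SLOCC transformation carries the DQLS subspace of $\pure$ bijectively onto that of its image, which immediately forces the two to have the same dimension. Write the SLOCC map as $G = \bigotimes_{a=1}^N X_a$ with each $X_a \in \mathcal{B}(\H_a)$ invertible; since the DQLS behavior is insensitive to normalization (cf.\ Remark~\ref{norm}), there is no harm in working with the unnormalized vector $G\pure$. For a fixed neighborhood $\N_j$ it is convenient to factor $G = G_{\N_j} \otimes G_{\overline{\N}_j}$, where $G_{\N_j} \equiv \bigotimes_{a \in \N_j} X_a$ and $G_{\overline{\N}_j} \equiv \bigotimes_{a \notin \N_j} X_a$ are again invertible.

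The crux is the covariance of the (extended) Schmidt spans under such a map, namely $\overline{\Sigma}_{\N_j}(G\pure) = G\,\overline{\Sigma}_{\N_j}(\pure)$ for every $j$. I would establish it from the description $\Sigma_{\N_j}(\pure) = \Span\{(I_{\N_j}\otimes\langle\phi|)\pure : |\phi\rangle \in \H_{\overline{\N}_j}\}$, which is just $\supp(\rho_{\N_j})$ written out. Pushing the factored map through a bra gives $(I_{\N_j}\otimes\langle\phi|)(G\pure) = G_{\N_j}\,(I_{\N_j}\otimes\langle\phi|G_{\overline{\N}_j})\pure$, and as $|\phi\rangle$ ranges over $\H_{\overline{\N}_j}$ the functional $\langle\phi|G_{\overline{\N}_j}$ ranges over the full dual space because $G_{\overline{\N}_j}$ is invertible; hence $\Sigma_{\N_j}(G\pure) = G_{\N_j}\,\Sigma_{\N_j}(\pure)$. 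Tensoring with $\H_{\overline{\N}_j}$ and using $G_{\overline{\N}_j}\H_{\overline{\N}_j} = \H_{\overline{\N}_j}$ then yields $\overline{\Sigma}_{\N_j}(G\pure) = G_{\N_j}\Sigma_{\N_j}(\pure)\otimes\H_{\overline{\N}_j} = G\,\overline{\Sigma}_{\N_j}(\pure)$. An equivalent route is to deduce the same relation from Theorem~\ref{DQLS:subspace} by conjugating each complement operator $X_{\overline{\N}_j}$ by $G_{\overline{\N}_j}$, which is a bijection of $\mathcal{B}(\H_{\overline{\N}_j})$.

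Finally, since $G$ is a linear bijection of $\H$ it commutes with intersections of subspaces, so
\begin{equation*}
\H_0(G\pure) = \bigcap_{j=1}^M \overline{\Sigma}_{\N_j}(G\pure) = \bigcap_{j=1}^M G\,\overline{\Sigma}_{\N_j}(\pure) = G\Big(\bigcap_{j=1}^M \overline{\Sigma}_{\N_j}(\pure)\Big) = G\,\H_0(\pure),
\end{equation*}
and invertibility of $G$ gives $\dim\H_0(G\pure) = \dim\H_0(\pure)$. In particular $\pure$ is DQLS relative to $\N$ iff $G\pure$ is, recovering the local-unitary invariance of \cite{DQLS:p} as the special case of unitary $X_a$. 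I do not anticipate a real obstacle here: the single point needing care is the displayed Schmidt-span identity, whose proof hinges entirely on the invertibility of the complement factor $G_{\overline{\N}_j}$; the passage to the intersection and to dimensions is then purely formal.
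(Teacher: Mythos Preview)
Your proof is correct and follows essentially the same approach as the paper: both establish the covariance $\overline{\Sigma}_{\N_j}(G\pure) = G\,\overline{\Sigma}_{\N_j}(\pure)$ for each neighborhood, then pass to the intersection using invertibility of $G$ to conclude $\H_0(G\pure) = G\,\H_0(\pure)$ and hence equal dimensions. The only difference is cosmetic: the paper asserts the Schmidt-span relation $\Sigma_{\N_j}(G\pure) = G_{\N_j}\Sigma_{\N_j}(\pure)$ in one line ``thanks to the linearity of $X_a$'', whereas you spell out a concrete argument via partial bras and the surjectivity of $\langle\phi|\mapsto\langle\phi|G_{\overline{\N}_j}$.
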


\begin{proof}
Let the two pure states $\pure, |\phi\rangle \in \H$ belong to the same SLOCC class. This means they are related as
$|\phi\rangle = (\otimes_{a=1}^N{X_a})\pure,$ where each $X_a \in \mathcal{B}(\H_a)$ is an invertible operator.
Thanks to the linearity of $X_a$, the Schmidt spans of these two states are related as $\Sigma_{\N_j}(|\phi\rangle) = 
(\otimes_{a\in \N_j}{X_a})\Sigma_{\N_j}(\pure)$ for any neighborhood $\N_j$ and, correspondingly, 
$\overline{\Sigma}_{\N_j}(|\phi\rangle) = (\otimes_{a=1}^N {X_a})\overline{\Sigma}_{\N_j}(\pure)$ for the extended Schmidt spans. 
Therefore,
\begin{equation}
\label{H_0-SLOCC}
\bigcap_{j=1}^M\overline{\Sigma}_{\N_j}(|\phi\rangle) = \Big(\bigotimes_{a=1}^N {X_a}\Big)
\bigcap_{j=1}^M\overline{\Sigma}_{\N_j}(\pure),
\end{equation}
whereby it follows that the corresponding DQLS subspaces relative to $\N = \{\N_j\}_{j=1}^M$ are related as $\H_0(|\phi\rangle) = (\otimes_{a=1}^N {X_a}) \, \H_0(\pure)$. 
Thanks to the invertible nature of all these transformations, $\dim (\H_0(|\phi\rangle)) = \dim (\H_0(\pure))$.
\end{proof}
\noindent 
As a corollary, if $\pure$ is DQLS, then any $|\phi\rangle$ belonging to the same SLOCC class is also DQLS relative to the chosen 
$\N$. In the non-DQLS case, the proof establishes the relation between the DQLS subspaces of two pure states that are in the same SLOCC class through Eq.~\eqref{H_0-SLOCC}. Notice that the converse of the above proposition is clearly not true, as the 
example of a W-state and GHZ-state on 3 qubits illustrates: these two states have  two-dimensional DQLS subspaces relative to 
any non-trivial neighborhood structure \cite{DQLS:p}, yet they do not belong to the same SLOCC class \cite{tripartite}. 

\begin{figure}[t]
\centering
\includegraphics[width=0.45\textwidth]{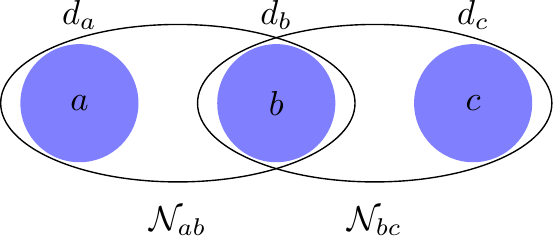}
\caption{Neighborhood structure $\N_{\rm tri} = \{\N_{ab},\N_{bc}\}$ relevant to the tripartite setting. 
We assume that the subsystem dimensions obey $d_c\geq d_a$, with $\bar{d}\equiv d_c-d_a$.}
\label{fig:tripartite}
\end{figure}

\subsubsection{Tripartite setting.}
\label{reform}
We now specialize to tripartite quantum systems, with the corresponding Hilbert space $\H \equiv \H_a \otimes \H_b \otimes \H_c$ 
and dimensions $d_a \times d_b \times d_c$, where we assume $d_c \geq d_a$ without loss of generality. The relevant neighborhood structure is denoted by $\N_{\rm tri} \equiv \{\N_{ab},\N_{bc}\}$ (see also Fig.~\ref{fig:tripartite}).
Let $\pure \in \H$. Then, by Theorem~\ref{DQLS:subspace}, the operators  $X_a \in \mathcal{B}(\H_a), X_c\in \mathcal{B}(\H_c)$ that satisfy the equation
\begin{equation} 
\label{tri:DQLS}
(X_a \otimes I_b \otimes I_c) \pure = (I_a \otimes I_b \otimes X_c) \pure,
\end{equation}
determine the DQLS subspace of $\pure$ relative to $\N_{\rm tri}$. As long as $\pure$ is fixed, Eq. (\ref{tri:DQLS}) is linear in 
$\{X_a,X_c\}$. The operators $X_a,X_c$ can thus be treated as unknowns and solved for, in order to characterize the DQLS 
property of $\pure$. For this purpose, we are going to rewrite Eq.~\eqref{tri:DQLS} as a set of linear equations in a selected 
basis, and establish an explicit connection between its solution space and $\H_0(\pure)$.

Let $\{|h\rangle_a\}$, $\{|i\rangle_b\}$ and $\{|j\rangle_c\}$ denote standard (orthonormal) bases in $\H_a$, $\H_b$ and $\H_c$, 
respectively. Decomposing $\pure$ with respect to the central subsystem $b$ yields 
\begin{equation} 
\label{phi_i}
\pure = \sum_{i=0}^{d_b-1}{|\phi_i\rangle_{ac}\otimes|i\rangle_b}, 
\end{equation}
from which we may rewrite Eq.~\eqref{tri:DQLS} in terms of $\{|\phi_i\rangle_{ac}\}$ as 
\begin{equation} 
\label{phi2}
(X_a \otimes  I_c) |\phi_i\rangle_{ac} = (I_a  \otimes X_c) |\phi_i\rangle_{ac}, \quad i=0,\dots ,d_b-1.
\end{equation}
In what follows, we will denote by ${}_{b}\langle \cdot |\psi\rangle$ the partial inner product with respect to the subsystem $b$, 
that is defined as ${}_{b}\langle i|\psi\rangle \equiv \sum_{h,j}{[{}_{ac}\langle hj |\otimes {}_{b}\langle i|\psi\rangle]\, |hj\rangle_{ac} = |\phi_i\rangle_{ac}}$, where $|hj\rangle_{ac} \equiv |h\rangle_a \otimes |j\rangle_{c}$, and similarly for the other subsystems.

Let $\vec{v}$ denote the vector form of $|v\rangle$ in the standard basis for the corresponding Hilbert space, 
and define a partial transpose operation $\mathcal{P}_T: \mathds{C}^{d_1} \otimes \mathds{C}^{d_2} \to 
\mathds{M}^{d_1\times d_2}$ as
\begin{equation*}
\mathcal{P}_T \Big(\sum_{ij}\lambda_{ij} \vec{u} \otimes\vec{v}\Big) \equiv \sum_{ij}\lambda_{ij}  \vec{u}\,
\vec{v}^{\,T}, \quad  \vec{u} \in \mathds{C}^{d_1},  \vec{v} \in \mathds{C}^{d_2}, \lambda_{ij} \in \mathds{C},
\end{equation*}
with $\vec{v}^{\, T}$ denoting its transpose in the standard basis. It is immediate to verify that $\mathcal{P}_T (X\vec{u}\otimes\vec{v}) = X(\vec{u} \,\vec{v}^{\,T})$ and $\mathcal{P}_T(\vec{u}\,\otimes \,Y\vec{v}) = (\vec{u}\, \vec{v}^{\,T})Y^T$, where $X \in  \mathds{M}^{d_1\times d_1}$,  $Y \in  \mathds{M}^{d_2\times d_2}$ and the matrix transpose is taken in the standard basis.
Let now $\vec{\phi}_i$ denote the vector representation of $|\phi_i\rangle_{ac}$ in the standard basis for $\H_a \otimes \H_c$, 
and $ \vec{\phi}_i = \sum_{k=1}^{\bar{d}_i}{\sigma_{i,k}\vec{u}_{i,k} \otimes \vec{v}_{i,k}}$ its Schmidt-decomposed form, with the corresponding Schmidt rank being given by $\bar{d}_i$. Applying the partial transpose operation on $\vec{\phi}_i$ with respect to 
the $a|b$ bipartition yields
\begin{equation} 
\label{A_i}
A_i \equiv \mathcal{P}_T (\vec{\phi}_i) = \sum_{k=1}^{\bar{d}_i}{\sigma_{i,k}\,\vec{u}_{i,k}\, \vec{v}_{i,k}^{\,T}}, 
\quad A_i \in  \mathds{M}^{d_a \times d_c}.
\end{equation}
It is easy to verify that Eq.~\eqref{A_i} is the singular value decomposition of $A_i$. For this reason, 
$\text{range}(A_i) = \Span\{\vec{u}_{i,k}\}$ and $\text{range}(A_i^T) = \Span\{\vec{v}_{i,k}\}$.  Thus, 
range$(A_i)= \Sigma_a(|\phi_i\rangle_{ac})$ and, similarly, range$(A_i^T)=\Sigma_c(|\phi_i\rangle_{ac})$, where the Schmidt spans are represented in standard basis for the respective Hilbert spaces. 

Next, recall Eq.~\eqref{phi2}. Thanks to the linear nature of $\mathcal{P}_T$, 
$\mathcal{P}_T((X_a\otimes I_c)\vec{\phi}_i) = X_aA_i$ and  $\mathcal{P}_T((I_a\otimes X_c)\vec{\phi}_i) = A_i X_c^T$, 
we may rewrite Eq.~\eqref{tri:DQLS} in terms of the matrices $\{A_i\}$ as
\begin{equation} 
X_a \, A_i = A_i \,X_c^T, \quad i = 0,\dots, d_b-1,
\label{conditions}
\end{equation}
with $\{X_a,X_c\}$, in matrix form, treated as unknowns. This makes it clear that 
the nature of the solution space of this linear system depends upon the DQLS property of the input state $\pure$. 
However, the original DQLS condition given in Eq.~\eqref{tri:DQLS} is based on the action of the operators 
$X_a \in \mathcal{B}(\H_a), X_c \in \mathcal{B}(\H_c)$ {\em restricted to the Schimdt spans} $\Sigma_a(\pure), \Sigma_c(\pure)$, respectively, by virtue of Theorem~\ref{DQLS:subspace}. At the same time, solutions to Eq. (\ref{conditions})
describe the action of $X_a,X_c$, in matrix form, {\em restricted to the combined} range of $\{A_i\}$ and $\{A_i^T\}$,
that is, $\Span\{\text{range}(A_i)\}_{i=0}^{d_b-1}$ and $\Span\{\text{range}(A_i^T)\}_{i=0}^{d_b-1}$, respectively.
Hence, in order to formally show that Eq.~\eqref{tri:DQLS} is equivalent to Eq. (\ref{conditions}), 
we need to further show that the corresponding subspaces indeed coincide. 

\begin{lemma} 
\label{row-column}
With $A_i$ defined as in Eq. (\ref{A_i}), the following equalities hold:
$$ \Sigma_a(\pure) = 
\Span\{{\rm{range}}(A_i)\}_{i=0}^{d_b-1}, \quad \Sigma_c(\pure) =
\Span\{{\rm{range}}(A_i^T)\}_{i=0}^{d_b-1}.$$
\end{lemma}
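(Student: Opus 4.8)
The plan is to deduce both equalities from a single elementary observation about Schmidt spans: $\Sigma_a(\pure)$ is recovered by contracting $\pure$ against arbitrary bra vectors on the complementary subsystems. First I would record the auxiliary fact that, for any bipartite vector $\vec{w}\in V_A\otimes V_B$ with Schmidt decomposition $\vec{w}=\sum_i\mu_i\,\vec{a}_i\otimes\vec{b}_i$ (with $\mu_i>0$ and $\{\vec{b}_i\}$ orthonormal),
\[
\Sigma_A(\vec{w})=\Span\big\{(I_A\otimes\langle\xi|)\,\vec{w}\;:\;\vec{\xi}\in V_B\big\};
\]
indeed each such contraction equals $\sum_i\mu_i\langle\xi|\vec{b}_i\rangle\,\vec{a}_i\in\Span\{\vec{a}_i\}=\Sigma_A(\vec{w})$, while the choice $\vec\xi=\vec{b}_j$ returns $\mu_j\vec{a}_j$, so the right-hand span exhausts $\Sigma_A(\vec{w})$. (Equivalently, this is the identity $\Sigma_A(\vec{w})=\supp(\tr_B|\vec{w}\rangle\langle\vec{w}|)$ already invoked in the text.)

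Applying this with $V_A=\H_a$ and $V_B=\H_b\otimes\H_c$, and using that product bras ${}_b\langle i|\otimes{}_c\langle\gamma|$ span all bras on $\H_b\otimes\H_c$ — so that the image of the (linear-in-the-bra) contraction map is the span of its values on product bras — the decomposition \eqref{phi_i} gives $({}_b\langle i|\otimes{}_c\langle\gamma|)\pure={}_c\langle\gamma|\phi_i\rangle_{ac}$ for every $i$ and $|\gamma\rangle_c\in\H_c$, whence
\[
\Sigma_a(\pure)=\Span\big\{\,{}_c\langle\gamma|\phi_i\rangle_{ac}\;:\;0\le i\le d_b-1,\ |\gamma\rangle_c\in\H_c\big\}.
\]
For each fixed $i$ the inner set spans $\Sigma_a(|\phi_i\rangle_{ac})$ by the same auxiliary fact (now with $V_A=\H_a$, $V_B=\H_c$), and this was identified in the paragraph preceding the lemma with $\mathrm{range}(A_i)$. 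Taking the span over $i=0,\dots,d_b-1$ then gives $\Sigma_a(\pure)=\Span\{\mathrm{range}(A_i)\}_{i=0}^{d_b-1}$, the first claimed equality.

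The second equality follows by running the identical argument with the roles of $a$ and $c$ exchanged: contracting $\pure$ against product bras on $\H_a\otimes\H_b$ yields $\Sigma_c(\pure)=\Span\{\Sigma_c(|\phi_i\rangle_{ac})\}_{i=0}^{d_b-1}$, and $\Sigma_c(|\phi_i\rangle_{ac})=\mathrm{range}(A_i^T)$ was likewise noted above (the transpose appears because $\mathcal{P}_T$ places the $a$-factor of $|\phi_i\rangle_{ac}$ into the rows and the $c$-factor into the columns of $A_i$). I do not anticipate any real obstacle here: the only points requiring care are the bookkeeping of subsystem orderings in \eqref{phi_i} and stating the contraction characterization of the Schmidt span cleanly, since the whole argument rests on it. An equivalent, equally short route — which I would mention but not take as the main line — is to observe that $\rho_a=\tr_{bc}(\dpure)=\sum_i\tr_c(|\phi_i\rangle_{ac}\langle\phi_i|)$ and to invoke the standard fact that the support of a sum of positive semidefinite operators equals the span of their individual supports.
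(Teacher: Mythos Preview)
Your proof is correct, and it shares the same high-level skeleton as the paper's: both reduce the claim to the equality $\Sigma_a(\pure)=\Span\{\Sigma_a(|\phi_i\rangle_{ac})\}_{i=0}^{d_b-1}$ and then invoke the identification $\Sigma_a(|\phi_i\rangle_{ac})=\mathrm{range}(A_i)$ already established before the lemma. Where you diverge is in how that key equality is obtained. The paper works with the explicit Schmidt decomposition $\pure=\sum_l\mu_l|\gamma_l\rangle_a\otimes|\eta_l\rangle_{bc}$, reads off the inclusion $\Sigma_a(|\phi_i\rangle_{ac})\subseteq\Sigma_a(\pure)$ directly, and then closes the gap by a short contradiction argument (a putative $|v\rangle_a$ orthogonal to every $\Sigma_a(|\phi_i\rangle_{ac})$ would force ${}_a\langle v|\psi\rangle=\vec{0}_{bc}$). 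You instead isolate the general ``contraction'' characterization $\Sigma_A(\vec{w})=\Span\{(I_A\otimes\langle\xi|)\vec{w}\}$ and apply it twice, nested, so that both inclusions fall out at once with no contradiction step needed. Your route is slightly more streamlined and packages the argument into a single reusable lemma; the paper's route stays closer to the Schmidt vectors and is perhaps more transparent about \emph{which} vectors span $\Sigma_a(\pure)$. The alternative you mention at the end---using $\rho_a=\sum_i\tr_c(|\phi_i\rangle_{ac}\langle\phi_i|)$ together with the support-of-a-sum fact---is also valid and arguably the shortest path of all.
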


\begin{proof}
For notational simplicity, we denote subspaces like $\Sigma_a(|\phi_i\rangle_{ac})$ by their vector representation 
in the standard basis.
Following Eq.~\eqref{A_i}, we already know that range$(A_i)$ = $\Sigma_a(|\phi_i\rangle_{ac})$. Therefore, $\Span\{\text{range}(A_i)\}_{i=0}^{d_b-1} = \Span\{\Sigma_a(|\phi_i\rangle_{ac})\}_{i=0}^{d_b-1}$. Similarly, 
$\Span\{\text{range}(A_i^T)\}_{i=0}^{d_b-1} = \Span\{\Sigma_c(|\phi_i\rangle_{ac})\}_{i=0}^{d_b-1}$. 
Following Eq.~\eqref{phi_i}, $|\phi_i\rangle_{ac} = {}_{b}\langle i\pure$. If  $\pure = \sum_l{\mu_l \,|\gamma_l\rangle_{a} \otimes |\eta_l\rangle_{bc}}$ in the Schmidt-decomposed form for the $a|bc$ bipartition, then $|\phi_i\rangle_{ac} = \sum_l{\mu_l\,|\gamma_l\rangle_a}\otimes ({}_{b}\langle i|\eta_l\rangle_{bc})$ for all $i$. Thus, it follows that 
$$ \Sigma_a(|\phi_i\rangle_{ac}) \subseteq \Span \{|\gamma_l\rangle_a : {}_{b}\langle i|\eta_l\rangle_{bc} \neq \vec{0}_c\} \subseteq \Sigma_a(\pure), \quad i= 0,\dots , d_b-1,$$
where $\vec{0}_c$ denotes the null vector in $\H_c$. 
Therefore, $\Span\{\Sigma_a(|\phi_i\rangle_{ac})\}_{i=0}^{d_b-1} \subseteq \Sigma_a(\pure)$. To show that the two subspaces indeed coincide, assume that a vector $|v\rangle_a \in \H_a$ exists such that $|v\rangle_a \in \Sigma_a(\pure)$ but 
$|v\rangle_a \perp \Span\{\Sigma_a(|\phi_i\rangle_{ac})\}$. The latter implies that ${}_{a}\langle v|\phi_i\rangle_{ac} = \vec{0}_c$, for all $i$, which leads to ${}_{a} \langle v\pure = \vec{0}_{bc}$, thus contradicting our assumption. 
Hence, $\Span\{\Sigma_a(|\phi_i\rangle_{ac})\} = \Sigma_a(\pure)$.
This further shows that $\Span\{\text{range}(A_i)\}_{i=0}^{d_b-1} = \Sigma_a(\pure)$.  In a similar way, it can be shown that   
$\Span\{\text{range}(A_i^T)\}_{i=0}^{d_b-1} = \Sigma_c(\pure)$.
\end{proof}
We can now combine the above lemma with Corollary~\ref{DQLS:restate} to obtain the desired reformulation of the DQLS 
condition for general tripartite pure states: 
\begin{prop}
\label{DQLS:reform}
Consider a tripartite pure state $\pure \in \H_a \otimes \H_b \otimes \H_c$, and let the set of matrices $\{A_i\}$ 
be constructed from $\pure$ according 
to Eqs. \eqref{phi_i} and~\eqref{A_i}. Then $\pure$ is DQLS relative to $\N_{\rm tri}= \{\N_{ab},\N_{bc}\} $ if and only if 
the only solutions $\{X_a, X_c\}$ to the linear set of conditions
$X_a A_i = A_i X_c^T,$ $i = 0, \dots , d_b -1,$
act as a scalar multiple of the identity matrix on $\Span\{{\rm{range}}(A_i)\}_{i=0}^{d_b-1}$ and 
$\Span\{{\rm{range}}(A_i^T)\}_{i=0}^{d_b-1}$, respectively.
\end{prop}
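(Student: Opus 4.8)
The plan is to obtain the Proposition by chaining together three facts already in place: the tripartite specialization of Corollary~\ref{DQLS:restate}, the matrix reformulation of Eq.~\eqref{tri:DQLS} carried out in the paragraphs immediately above, and Lemma~\ref{row-column}. First I would specialize Corollary~\ref{DQLS:restate} to $\N_{\rm tri}=\{\N_{ab},\N_{bc}\}$. Here $\overline{\N}_{ab}=\{c\}$ and $\overline{\N}_{bc}=\{a\}$, so the operators $\{X_{\overline{\N}_j}\}$ appearing in Eq.~\eqref{psi'} are precisely $X_c\in\mathcal{B}(\H_c)$ and $X_a\in\mathcal{B}(\H_a)$, and eliminating $|\psi'\rangle$ between the two instances of Eq.~\eqref{psi'} reproduces exactly Eq.~\eqref{tri:DQLS}, i.e. $(X_a\otimes I_b\otimes I_c)\pure=(I_a\otimes I_b\otimes X_c)\pure$. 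Consequently Corollary~\ref{DQLS:restate} reads, in this setting: $\pure$ is DQLS relative to $\N_{\rm tri}$ if and only if every pair $\{X_a,X_c\}$ solving Eq.~\eqref{tri:DQLS} acts as a scalar multiple of the identity on $\Sigma_{\overline{\N}_{bc}}(\pure)=\Sigma_a(\pure)$ and on $\Sigma_{\overline{\N}_{ab}}(\pure)=\Sigma_c(\pure)$, respectively.

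Second, I would transcribe this statement into matrix language. Decomposing $\pure$ as in Eq.~\eqref{phi_i} turns Eq.~\eqref{tri:DQLS} into the family~\eqref{phi2}, and applying the linear, invertible partial transpose $\mathcal{P}_T$ to each member, using $\mathcal{P}_T((X_a\otimes I_c)\vec{\phi}_i)=X_aA_i$ and $\mathcal{P}_T((I_a\otimes X_c)\vec{\phi}_i)=A_iX_c^T$, shows that $\{X_a,X_c\}$ solves Eq.~\eqref{tri:DQLS} if and only if the matrices $X_a,X_c$ in the standard basis satisfy $X_aA_i=A_iX_c^T$ for $i=0,\dots,d_b-1$, which is exactly the system~\eqref{conditions}. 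Since $\mathcal{P}_T$ is a bijection, this is a bijection of solution sets; and because the transpose is absorbed into the equation, the matrix representing the operator $X_a$ (resp.\ $X_c$) acts as a scalar multiple of the identity on the vector representation of $\Sigma_a(\pure)$ (resp.\ $\Sigma_c(\pure)$) exactly when the operator does, the property ``scalar multiple of the identity'' being in any case invariant under transposition.

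Third and last, I would substitute the identifications supplied by Lemma~\ref{row-column}, namely $\Sigma_a(\pure)=\Span\{{\rm range}(A_i)\}_{i=0}^{d_b-1}$ and $\Sigma_c(\pure)=\Span\{{\rm range}(A_i^T)\}_{i=0}^{d_b-1}$, into the statement reached in the first step. This delivers the Proposition essentially verbatim: $\pure$ is DQLS relative to $\N_{\rm tri}$ iff the only solutions $\{X_a,X_c\}$ of $X_aA_i=A_iX_c^T$, $i=0,\dots,d_b-1$, act as scalar multiples of the identity on $\Span\{{\rm range}(A_i)\}$ and $\Span\{{\rm range}(A_i^T)\}$, respectively. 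I do not expect a genuinely hard step here: all the substance is carried by Theorem~\ref{DQLS:subspace}, Corollary~\ref{DQLS:restate}, and Lemma~\ref{row-column}, which I am entitled to invoke. The single point that calls for a line of care is checking that the passage from the operator equation~\eqref{tri:DQLS} to the matrix equation~\eqref{conditions} through $\mathcal{P}_T$ neither enlarges nor shrinks the solution set and leaves the ``acts as $\lambda I$ on the relevant Schmidt span'' condition intact; both assertions follow immediately from linearity and invertibility of $\mathcal{P}_T$ together with the range identifications ${\rm range}(A_i)=\Sigma_a(|\phi_i\rangle_{ac})$, ${\rm range}(A_i^T)=\Sigma_c(|\phi_i\rangle_{ac})$ established while verifying Eq.~\eqref{A_i}.
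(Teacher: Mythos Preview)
Your proposal is correct and follows essentially the same route as the paper: the paper does not give a separate formal proof of this proposition but simply observes that it follows by combining Lemma~\ref{row-column} with Corollary~\ref{DQLS:restate} through the partial-transpose reformulation already carried out in the preceding paragraphs. Your write-up makes explicit the three steps the paper leaves implicit, including the care about $\mathcal{P}_T$ being a bijection of solution sets, but there is no difference in approach.
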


\noindent
In this way, we have recast the problem of determining the DQLS property of a tripartite pure state $\pure$ 
into the one of characterizing the set of matrices $X_a$ and $X_c$ that solves the linear system in 
Eq.~\eqref{conditions}. The latter is guaranteed to have at least one (non-zero) solution, which corresponds to 
$X_a$ and  $X_c$ being proportional to the identity when restricted to the appropriate subspaces. The existence of any 
other non-trivial solution implies that $\pure$ is {\em not} DQLS. Thus, the structure of the solution space of  Eq.~\eqref{conditions} determines whether or not $\pure$ is DQLS. We next show that the {\em dimension} of the solution space is in fact 
the only relevant parameter in the case of generic tripartite states.

\section{Stabilizability properties of generic tripartite pure states}
\label{sec:generictrip}

In this section, we study the implications of the reformulated DQLS condition given in Proposition~\ref{DQLS:reform} 
in the context of generic tripartite pure states. 

\subsection{Random and generic tripartite pure states: Basic features} 
\label{generic}

A \emph{random pure state} in a finite-dimensional Hilbert space $\H \simeq \mathds{C}^d$ is akin to a 
random variable that is sampled from the set of pure states according to the (unique) unitarily invariant Fubini-Study measure \cite{wootters_random,bengtsson}. 
Random states may be generated as the orbit of a reference pure state $|\phi\rangle \in \H$ 
under the action of a random, Haar-distributed unitary operator $U \in  \mathcal{U}(d)$, where $\mathcal{U}(d)$ is the unitary 
group of degree $d$. The entries of a random (un-normalized) pure-state vector relative to any basis are i.i.d. complex Gaussian 
random variables \cite{rand_review}. As the Fubini-Study measure does not make any reference to the underlying factorization 
of $\H$, note that the sampling described above gives rise, in turn, to random pure states without a particular tensor structure.
A \emph{generic pure state} $\pure \in \H$ is a {\em typical} element in the set of random pure states, in the sense that it possesses all the properties that hold for a measure one set of states with respect to the Fubini-Study measure. Thus, the collection of such states is itself a measure one subset of pure states with respect to the same measure. In view of this, to establish whether random pure states are DQLS (or not) with probability one, it suffices to prove that a generic $\pure$ is DQLS (or not). In particular, with probability one a generic $N$-partite pure state has maximal Schmidt rank with respect to {\em any} bipartition of the subsystems, 
and the corresponding Schmidt coefficients are distinct \cite{lloyd} -- both of which properties will be heavily used in the following. 

A random tripartite pure state can be given a decomposition with respect to the central subsystem $b$ according to Eq.~\eqref{phi_i}. Notice that the entries of the vector representation of each  $|\phi_i\rangle_{ac}$ are a subset of the entries of the 
random pure-state vector with respect to the standard basis in $\H_a \otimes \H_b \otimes \H_c$. For this reason, each 
$|\phi_i\rangle_{ac}$, in this basis, is a random complex vector in $d_a\times d_c$ dimensions. Thus, $\{|\phi_i\rangle_{ac}\}$ 
correspond to a set of random pure (un-normalized) vectors in $\H_a \otimes \H_c$ that are linearly independent on each other. 
Recall that each $A_i$ is a reshaped version of the $|\phi_i\rangle_{ac}$ in its Schmidt basis, given by Eq.~\eqref{A_i}. 
Accordingly, $A_i$ is a random matrix of appropriate dimension. Further to that, if $\pure $ is a generic 
state in $\H$, then, by construction, $\{A_i\}$ is a set of linearly independent, generic matrices in $\mathds{M}^{d_a \times d_c}$ 
(see also Appendix \ref{sec:generic}).

\medskip 

\begin{remark} 
\label{norm}
We have so far dealt with {\em normalized} states $\pure$. For this reason, each state $|\phi_i\rangle_{ac}$ that is derived from $\pure$ using  Eq.~\eqref{phi_i} has norm at most equal to one. For the corresponding matrices $\{A_i\}$, this translates into the sum of the squares of their singular values obeying $\sum_k\sigma_{i,k}^2 \leq 1$, for all $i$.
While a generic matrix need {\em not} satisfy this condition, this problem can be circumvented by noticing that the 
characterization of DQLS states given in Eq.~\eqref{DQLS:eq} is independent upon normalization, as 
$\Span\{\pure\} = \Span\{\alpha\pure, \:\forall \alpha \in \mathds{C}\}$. Indeed, it can be verified that $\H_0(\alpha\pure) = \H_0(\pure)$.
The reformulated DQLS condition for tripartite states given by Proposition~\ref{DQLS:reform}  reflects this property: $\pure \mapsto \alpha\pure$ implies that $\{A_i\} \mapsto \{\alpha A_i\}$ and the set $\{X_a,X_c\}$ that satisfies $X_aA_i = A_iX_c^T$ also satisfies $X_a(\alpha A_i) = (\alpha A_i)X_c^T$. Thus, each $A_i$ that is derived from a normalized generic $\pure$ can be viewed as a representative of the family of generic matrices $\alpha A_i$ and the restriction on the sum of its singular values is irrelevant.
\end{remark}

\subsection{General results for arbitrary tripartite settings}

\subsubsection{No-go results and generic nature of the DQLS property.}
Our first result is a no-go theorem for the DQLS property of generic tripartite pure states, valid under certain combination of the subsystem dimensions. 

\begin{thm} 
\label{No-Go}
Consider a generic tripartite pure state $\pure \in 
\H_a \otimes \H_b \otimes \H_c$, where $d_a \leq d_c$ without loss of generality. $\pure$ is not DQLS relative to 
$\N_{\rm tri} = \{\N_{ab},\N_{bc}\}$ if $d_ad_b \leq d_c$. The dimension of the DQLS subspace $\H_0(\pure)$ is $d_a^2$. 
\end{thm}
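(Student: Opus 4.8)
The plan is to compute $\H_0(\pure)$ exactly and show $\dim\H_0(\pure)=d_a^2$; since $d_a\ge 2$ in a non-degenerate tripartite setting this exceeds $1$, so by Theorem~\ref{DQLS:th} the state cannot be DQLS. Specializing Theorem~\ref{DQLS:subspace} to $\N_{\rm tri}$, where $\overline{\N}_{ab}=\{c\}$ and $\overline{\N}_{bc}=\{a\}$, a vector lies in $\H_0(\pure)$ if and only if it equals $(X_a\otimes I_b\otimes I_c)\pure$ for some $X_a\in\mathds{M}^{d_a\times d_a}$ that admits a partner $X_c$ with $(I_a\otimes I_b\otimes X_c)\pure$ equal to the same vector --- equivalently, by Eq.~\eqref{conditions}, some $X_a$ for which there is an $X_c$ solving $X_aA_i=A_iX_c^T$ for all $i$. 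Thus $\H_0(\pure)\subseteq\{(X_a\otimes I_b\otimes I_c)\pure:X_a\in\mathds{M}^{d_a\times d_a}\}$ automatically, and the argument reduces to (i) the reverse inclusion and (ii) injectivity of $X_a\mapsto(X_a\otimes I_b\otimes I_c)\pure$.

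For (i) I would fix an arbitrary $X_a$ and construct $X_c$ column by column. Letting $\vec y_j$ denote the $j$-th column of $X_c^T$, the equations $X_aA_i=A_iX_c^T$ read $A_i\vec y_j=X_a\vec a_{i,j}$ for every column index $j$ and every $i$, where $\vec a_{i,j}$ is the $j$-th column of $A_i$. Stacking the $\{A_i\}$ vertically into a single matrix $\mathbf{A}\in\mathds{M}^{(d_ad_b)\times d_c}$, this becomes $\mathbf{A}\vec y_j=\big(X_a\vec a_{0,j},\dots,X_a\vec a_{d_b-1,j}\big)^{T}$, which is solvable for every right-hand side precisely when $\mathbf{A}$ has full row rank $d_ad_b$. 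The key observation is that $\mathbf{A}$ is, up to a permutation of its rows, the reshaping of $\pure$ across the $ab\,|\,c$ cut, so $\mathrm{rank}(\mathbf{A})$ equals the Schmidt rank of $\pure$ along $ab\,|\,c$; a generic $\pure$ attains the maximal value $\min(d_ad_b,d_c)$, which is exactly $d_ad_b$ under the hypothesis $d_ad_b\le d_c$. Collecting the solutions $\vec y_j$ as the columns of $X_c^T$ then yields the required partner, giving $\H_0(\pure)=\{(X_a\otimes I_b\otimes I_c)\pure:X_a\in\mathds{M}^{d_a\times d_a}\}$.

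For (ii), any $X_a$ in the kernel satisfies $X_aA_i=0$ for all $i$, i.e.\ $X_a$ annihilates $\Span\{\mathrm{range}(A_i)\}_i=\Sigma_a(\pure)$ by Lemma~\ref{row-column}; but a generic $\pure$ has maximal Schmidt rank $d_a$ along the $a\,|\,bc$ bipartition, so $\Sigma_a(\pure)=\H_a$ and $X_a=0$. Hence the map is injective and $\dim\H_0(\pure)=\dim\mathds{M}^{d_a\times d_a}=d_a^2$, which (being $\ge 4>1$) shows $\pure$ is not DQLS relative to $\N_{\rm tri}$.

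The step I expect to demand the most care is the generic full-rank claim for $\mathbf{A}$: one must verify that vertically stacking the $\{A_i\}$ really reproduces (a row permutation of) the $ab\,|\,c$ reshaping of $\pure$ --- so that the relevant rank is genuinely the $ab\,|\,c$ Schmidt rank --- and then invoke the measure-one maximal-Schmidt-rank property (cf.\ \cite{lloyd} and Appendix~\ref{sec:generic}). Everything else is routine linear algebra with partial transposes and matrix reshapings.
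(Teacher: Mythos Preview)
Your proof is correct, but it takes a noticeably more circuitous route than the paper's. The paper argues directly from the definition $\H_0(\pure)=\overline{\Sigma}_{ab}(\pure)\cap\overline{\Sigma}_{bc}(\pure)$: under $d_ad_b\le d_c$ the generic $ab|c$ Schmidt rank is $d_ad_b$, so $\Sigma_{ab}(\pure)=\H_a\otimes\H_b$ and hence $\overline{\Sigma}_{ab}(\pure)=\H$; the intersection then collapses to $\overline{\Sigma}_{bc}(\pure)$, whose dimension is $d_a\cdot d_a=d_a^2$ because the generic $a|bc$ Schmidt rank is $d_a$. That is the whole argument---three lines, no linear systems.

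Your approach instead passes through the operator reformulation (Theorem~\ref{DQLS:subspace} and Eq.~\eqref{conditions}), building the stacked matrix $\mathbf{A}$ and solving $\mathbf{A}\vec y_j=\cdots$ column by column. This is correct and your identification of $\mathrm{rank}(\mathbf{A})$ with the $ab|c$ Schmidt rank is exactly right (it is, up to row permutation, the matricization of $\pure$ along that cut). But notice that what you ultimately establish---that $\H_0(\pure)=\{(X_a\otimes I_{bc})\pure:X_a\}$ with the map $X_a\mapsto(X_a\otimes I_{bc})\pure$ a linear isomorphism---is precisely the statement $\H_0(\pure)=\overline{\Sigma}_{bc}(\pure)$ in disguise, once $\Sigma_a(\pure)=\H_a$. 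So the two proofs invoke the same two generic maximal-rank facts; the paper reads them off immediately from the Schmidt-span definition, whereas you rediscover them through the $\{A_i\}$ machinery. Your route does have the minor virtue of exercising the linear-algebraic framework that becomes essential in the harder cases ($d_ad_b>d_c$), but for this no-go result it is overkill.
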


\begin{proof}
The Schmidt rank of $\pure$ for the $ab|c$ bipartition is 
$\min\{\dim (\H_{ab}), \dim(\H_c)\} = \min\{d_ad_b,d_c\}$, 
generically. If $d_ad_b \leq d_c$, then $\dim(\Sigma_{ab}(\pure)) = d_ad_b$ and  $\Sigma_{ab}(\pure)=\H_a \otimes \H_b$. As a result, the extended Schmidt span $\overline{\Sigma}_{ab}(\pure) = \Sigma_{ab}(\pure) \otimes \H_c = \H$, the entire Hilbert space itself. The DQLS subspace is 
now $\H_0(\pure) = \overline{\Sigma}_{ab}(\pure) \cap \overline{ \Sigma}_{bc}(\pure)=\overline{ \Sigma}_{bc}(\pure)$, the smallest of the two subspaces. On the other hand, the Schmidt rank of the $a|bc$ bipartition is $d_a$, generically ($d_a \leq d_c$ assumed). Therefore, 
$\overline{ \Sigma}_{bc}(\pure)$ is a $d_a^2$-dimensional subspace. Thus, $\dim (\H_0) = d_a^2>1$ which violates the necessary and sufficient condition for DQLS in Theorem~\ref{DQLS:th}.
\end{proof}

The above result shows that as the difference between the outer subsystems' dimensions $d_a$ and $d_c$ gets bigger, 
generic pure states tend to be non-DQLS. An immediate corollary is that DQLS pure states, for $d_ad_b \leq d_c$, form a 
measure zero set. While for tripartite systems that do not fall under the above no-go condition (that is, for which $d_a d_b >d_c$), 
the analysis is more involved, we next show that, remarkably, the DQLS behavior or its absence {\em is}, nonetheless, a generic feature. 
The result builds on a number of properties we collect in the following:
\begin{prop}
\label{DQLS:generic}
	Consider a generic tripartite pure state $\pure \in \H_a \otimes \H_b \otimes \H_c,$ $\N_{\rm tri} = \{\N_{ab},\N_{bc}\}$ and, without loss of generality, 
	$d_a \leq d_c.$  If the subsystem dimensions obey $d_a d_b > d_c$, then:
	\begin{enumerate}[i.]
		\item $\Sigma_{a}(\pure) = \H_a$ and $\Sigma_{c}(\pure) = \H_c$. 
		\item $\pure$ is DQLS relative to $\N_{\rm tri}$ if and only if $\{X_a,X_c\}$ that satisfy Eqs. \eqref{conditions}, 
		act as a scalar multiple of the identity on $\H_a$ and $\H_c$, respectively.
		\item The elements of the DQLS subspace $\H_0(\pure)$ relative to $\N_{\rm tri}$ are in 1-1 correspondence with pairs $\{X_a,X_c\}$  
		of solutions of Eqs. \eqref{conditions}, hence the resulting dimensions are equal.
	\end{enumerate}
\end{prop}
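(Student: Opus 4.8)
The plan is to read off all three items from the maximal-Schmidt-rank property of a generic $\pure$, combined with the reformulation supplied by Lemma~\ref{row-column} and Proposition~\ref{DQLS:reform}. For item~(i): a generic $\pure$ has maximal Schmidt rank across every bipartition, so its rank across the $a|bc$ cut is $\min\{d_a, d_b d_c\} = d_a$ (using $d_a \leq d_c \leq d_b d_c$), whence $\Sigma_a(\pure) = \supp(\rho_a) = \H_a$; likewise its rank across the $c|ab$ cut is $\min\{d_c, d_a d_b\}$, and here it is precisely the hypothesis $d_a d_b > d_c$ that forces this minimum to equal $d_c$, so that $\Sigma_c(\pure) = \H_c$.

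Item~(ii) then follows by direct substitution. Lemma~\ref{row-column} identifies $\Span\{{\rm range}(A_i)\}_{i} = \Sigma_a(\pure)$ and $\Span\{{\rm range}(A_i^T)\}_{i} = \Sigma_c(\pure)$, which by item~(i) are all of $\H_a$ and $\H_c$. Feeding this into Proposition~\ref{DQLS:reform} gives exactly the statement that $\pure$ is DQLS if and only if every solution $\{X_a, X_c\}$ of~\eqref{conditions} is a scalar multiple of the identity on $\H_a$, respectively on $\H_c$ (the two scalars necessarily coinciding, since $X_a A_i = A_i X_c^T$ with $A_i \neq 0$).

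For item~(iii) the idea is to exhibit an explicit linear bijection between the solution space of~\eqref{conditions} and $\H_0(\pure)$. Specializing Theorem~\ref{DQLS:subspace} to $\N_{\rm tri}$ — whose neighborhood complements are $\overline{\N}_{ab} = \{c\}$ and $\overline{\N}_{bc} = \{a\}$ — a vector $|\psi'\rangle$ lies in $\H_0(\pure)$ iff there exist $X_a \in \mathcal{B}(\H_a)$, $X_c \in \mathcal{B}(\H_c)$ with $|\psi'\rangle = (X_a \otimes I_{bc})\pure = (I_{ab} \otimes X_c)\pure$; and, as shown just before Lemma~\ref{row-column}, reshaping the $b$-slices via $\mathcal{P}_T$ turns the joint condition $(X_a \otimes I_{bc})\pure = (I_{ab}\otimes X_c)\pure$ into precisely the system~\eqref{conditions}. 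Hence the linear map $\Phi\colon (X_a, X_c) \mapsto (X_a \otimes I_{bc})\pure$, defined on the (vector) solution space of~\eqref{conditions}, is well defined and surjective onto $\H_0(\pure)$. It remains to check that $\Phi$ is injective: using the full-Schmidt-rank decomposition $\pure = \sum_{l=1}^{d_a}\mu_l\,|\gamma_l\rangle_a \otimes |\eta_l\rangle_{bc}$ guaranteed by~(i), the orthonormality of $\{|\eta_l\rangle_{bc}\}$ and positivity of the $\mu_l$ force $(X_a \otimes I_{bc})\pure = 0 \Rightarrow X_a = 0$; symmetrically, the full Schmidt rank $d_c$ across $c|ab$ (again from $d_a d_b > d_c$) gives $(I_{ab} \otimes X_c)\pure = 0 \Rightarrow X_c = 0$. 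Since on the solution set the two expressions for $|\psi'\rangle$ agree, $\Phi$ separates pairs $\{X_a,X_c\}$, so it is a linear isomorphism and $\dim \H_0(\pure)$ equals the dimension of the solution space of~\eqref{conditions}.

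The step I expect to be the real — if modest — obstacle is this injectivity argument, because it is where both dimension hypotheses do genuine work: $d_a \leq d_c \leq d_b d_c$ pins $X_a$ down via the maximal rank $d_a$ across $a|bc$, while $d_a d_b > d_c$ pins $X_c$ down via the maximal rank $d_c$ across $c|ab$, so that no solution of~\eqref{conditions} is lost in the kernel of $\Phi$ and the two dimensions coincide exactly. Everything else is bookkeeping built on top of the already-established Theorem~\ref{DQLS:subspace}, Lemma~\ref{row-column}, and Proposition~\ref{DQLS:reform}.
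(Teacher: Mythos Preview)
Your proposal is correct and follows essentially the same route as the paper. Items~(i) and~(ii) are argued identically; for item~(iii), the paper also builds the bijection via Theorem~\ref{DQLS:subspace} and checks uniqueness of $\{X_a,X_c\}$ given $|\psi'\rangle$, but phrases the injectivity step in the matrix picture (a single generic $A_i$ has trivial left kernel when $d_a\leq d_c$, and the stacked block $\mathcal{A}=[A_0;\ldots;A_{d_b-1}]\in\mathds{M}^{d_ad_b\times d_c}$ has trivial right kernel when $d_ad_b>d_c$), whereas you argue directly from the full Schmidt rank across $a|bc$ and $c|ab$ --- these are equivalent by Lemma~\ref{row-column}.
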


\noindent
{\em Proof.}
	We prove each statement separately.  
		\begin{enumerate}[{\em i.}]
		\item  The Schmidt rank of a generic $\pure$ for the $a|bc$ bipartition is $d_a$, since $d_a \leq d_c$. Thus, 
		$\Sigma_{a}(\pure) = \H_a$. Due to the extra assumption $d_ad_b > d_c$, the Schmidt rank of the $ab|c$ bipartition 
		is equal to $d_c$, generically, and therefore, $\Sigma_{c}(\pure) = \H_c$.
		
		\item Combining Lemma~\ref{row-column} with the result just proved, the combined range of $\{A_i\}$, namely, 
		$\Span\{{\rm range}(A_i)\}_{i=1}^{d_b-1}$, coincide with $\H_a$. Similarly, the combined range of $\{A_i^T\}$ coincides 
		with $\H_c$. Hence, it follows from Proposition~\ref{DQLS:reform} that $\pure$ is DQLS relative to $\N_{\rm tri}$ if and only if 
		$X_a$ and $X_c$ act as the identity (up to a constant scalar factor) on $\H_a$ and $\H_c$, respectively.
		
		\item In Corollary~\ref{DQLS:restate}, we proved that $|\psi'\rangle \in \H_0(\pure)$ if and only if it satisfies 
		Eq. \eqref{psi'} for some 
		choice of $X_{\overline{\N}_j} \in \mathcal{B}(\H_{\overline{\N}_j})$, for all $j$. This observation, applied to the tripartite case 
		with $\N_{\rm tri}=\{\N_{ab},\N_{bc}\}$, implies that $|\psi'\rangle \in \H_0(\pure)$ if and only if there exist some 
		$X_a \in  \mathcal{B}(\H_a)$ and $X_c \in  \mathcal{B}(\H_c)$ such that 
		\begin{equation}
		|\psi'\rangle = (X_a\otimes I_{b} \otimes I_c) \pure =  (I_{a} \otimes I_b \otimes X_c) \pure.
		\label{psi'tri}
		\end{equation}
		What remains to be shown is that any fixed choice of $\{X_a,X_c\}$ that satisfies this equation can be identified with a 
		{\em unique} $|\psi'\rangle \in \H_0(\pure)$, and vice-versa. The forward implication follows straightforwardly from linearity.
		To prove the reverse implication, assume that there exist two different choices,  $\{X_a,X_c\}$  and  $\{Y_a,Y_c\}$, such 
		that they both satisfy 
		Eq.~\eqref{psi'tri} 
		for the same $|\psi'\rangle$. Restructuring this equation leads to the form given in Eq.~\eqref{conditions}. 
		After eliminating $|\psi'\rangle$, we get
		\begin{equation}
		(X_a-Y_a)A_i = \textbf{0}, \quad
		A_i(X_c-Y_c)^T = \textbf{0}, \quad  i = 0,\dots, d_b-1.\label{eq:psi'} 
		\end{equation}
		The first equation results in $X_a = Y_a$. This is because a generic matrix 
		$A_i \in \mathds{M}^{d_a \times d_c}$, with $d_a \leq d_c$, has no non-trivial left kernel. Now, form another matrix 
		\begin{equation*}
		\mathcal{A} = \left[\begin{array}{c}
		A_0 \\ \vdots \\ A_{d_b-1} \end{array} \right],
		\end{equation*}
		which is also generic, by construction. The second equation in Eq. (\ref{eq:psi'}) 
		then implies $\mathcal{A}(X_c-Y_c)^T = \textbf{0}$. Since 
		$ d_ad_b>d_c$,  $\mathcal{A}\in \mathds{M}^{d_ad_b \times d_c}$  has no non-trivial right kernel, hence  $X_c = Y_c$. 
		Therefore, for a given $|\psi'\rangle \in \H_0(\pure)$, there exists a unique choice of $\{X_a,X_c\}$ that satisfies 
		Eq.~\eqref{psi'tri}, whereby it also follows that the dimensions of these two spaces are equal. \hfill$\Box$
	\end{enumerate}

Notice that the assumption 
$d_a d_b > d_c$ 
is key to show that the Schmidt spans of a generic $\pure$ on subsystems $a$ and $c$ coincide with the entire space. 
Had that not been the case, the search for any non-trivial solutions to Eqs. \eqref{conditions} would have to be restricted to some subspace of the corresponding Hilbert spaces, making the analysis more involved. Although statement (ii) in Proposition~\ref{DQLS:generic} is essentially a restatement of Proposition~\ref{DQLS:reform} for {\em generic} tripartite pure states, combining statements (ii) and (iii) allows us to determine the {\em dimension} of the DQLS subspace of the target state, in addition to its DQLS behavior. Altogether, we are now in a position to prove the anticipated result on the generic nature of the DQLS property itself in those tripartite settings that do not meet the no-go condition: 

\begin{thm}
\label{generic:proof}
Consider a tripartite quantum system on $\H= \H_a \otimes \H_b \otimes \H_c$, with neighborhood structure $\N_{\rm tri} = \{\N_{ab},\N_{bc}\}$ and $d_a \leq d_c$, without loss of generality. When the subsystem dimensions obey the condition $d_a d_b > d_c$, DQLS pure states relative to $\N_{\rm tri}$  form either a  measure zero or a measure one set.  
\end{thm}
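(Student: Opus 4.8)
The plan is to reduce the assertion, via Proposition~\ref{DQLS:generic}, to a statement about the rank of a matrix whose entries depend polynomially on the coordinates of $\pure$, and then to invoke the standard fact that such a rank is constant off a proper algebraic (hence measure-zero) set. Concretely, under the hypothesis $d_a d_b > d_c$, Proposition~\ref{DQLS:generic}(iii) identifies $\dim\H_0(\pure)$ with the dimension of the solution space of the linear system $X_a A_i = A_i X_c^T$, $i=0,\dots,d_b-1$, in the unknowns $(X_a,X_c)\in\mathds{M}^{d_a\times d_a}\times\mathds{M}^{d_c\times d_c}$, provided the finitely many genericity conditions invoked there (maximal Schmidt rank along every bipartition, distinct Schmidt coefficients, and the matrix-genericity of the $\{A_i\}$ in the sense of Appendix~\ref{sec:generic}) hold. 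I would first bundle all of these into a single requirement and note that the set $U_2$ of pure states satisfying it is the complement of a proper algebraic subset of the projective space of pure states, hence has full Fubini--Study measure.

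Next comes the rank argument. Writing the system as $L_\pure(X_a,X_c)=\mathbf 0$, the matrix $L_\pure$ has size $(d_ad_bd_c)\times(d_a^2+d_c^2)$, and its entries are linear in the standard-basis coordinates of $\pure$, since by Eqs.~\eqref{phi_i}--\eqref{A_i} the entries of each $A_i$ are (a reshuffling of) a subset of those coordinates. The rank of $L_\pure$ is lower semicontinuous in the Zariski topology: for each $r$, the locus $\{\pure:\operatorname{rank}(L_\pure)\ge r\}$ is the complement of the common zero set of the $r\times r$ minors of $L_\pure$, hence Zariski open. Thus $\operatorname{rank}(L_\pure)$ attains its maximal value $r_0$ on a Zariski-dense open set $U_1$, whose complement (where the rank drops) is a proper algebraic subset and so has measure zero by the Appendix. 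On $U_1$ one has the constant value $\dim\ker L_\pure = d_a^2 + d_c^2 - r_0 =: k_0$.

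Finally, set $U=U_1\cap U_2$, which is again the complement of a proper algebraic set and hence of full Fubini--Study measure; for every $\pure\in U$ we have $\dim\H_0(\pure)=k_0$. Since $X_a=I_a,\,X_c=I_c$ always solves \eqref{conditions} and $\pure\in\H_0(\pure)$ always, $k_0\ge 1$, and two mutually exclusive cases arise. If $k_0=1$, every $\pure\in U$ meets the necessary and sufficient DQLS condition of Theorem~\ref{DQLS:th} (equivalently, by Proposition~\ref{DQLS:generic}(ii), the only solutions of \eqref{conditions} are scalar multiples of the identity), so the set of DQLS states contains the measure-one set $U$ and is therefore of measure one. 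If instead $k_0\ge 2$, then no $\pure\in U$ is DQLS, so the set of DQLS states is contained in $\H\setminus U$ and has measure zero. Either way the DQLS states form a set of measure zero or measure one.

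I expect the genuine difficulty here to be conceptual rather than computational: recognizing that $\dim\H_0(\pure)$ — a priori defined through intersections of Schmidt spans — is computed by the kernel dimension of a fixed matrix with polynomial entries, which is exactly what Proposition~\ref{DQLS:generic} supplies. Once that reformulation is available, the dichotomy follows from the standard semicontinuity of matrix rank. The residual hard work — deciding \emph{which} alternative, $k_0=1$ or $k_0\ge2$, holds for a given triple $(d_a,d_b,d_c)$ — is deferred to Sections~\ref{sub:qubit}ff., and is where the real effort lies.
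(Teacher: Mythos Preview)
Your proposal is correct and follows essentially the same strategy as the paper: reduce DQLS, via Proposition~\ref{DQLS:generic}, to a kernel/rank condition on a matrix whose entries are linear in the coordinates of $\pure$, and then invoke the standard dichotomy that such a condition either holds identically or fails only on a measure-zero algebraic set. The only cosmetic difference is that the paper packages the rank condition into a single determinant polynomial $D(\psi)=\det\!\big((M(\vec\psi)^T V)^\dagger M(\vec\psi)^T V\big)$ on a $(d_a^2+d_c^2-1)$-dimensional parametrization of $\mathcal T$, whereas you argue directly via Zariski semicontinuity of the rank of $L_\pure$ on the full $(d_a^2+d_c^2)$-dimensional space of pairs $(X_a,X_c)$; the two formulations are equivalent, and your explicit bookkeeping of the genericity set $U_2$ needed for Proposition~\ref{DQLS:generic}(iii) is if anything slightly cleaner than the paper's implicit treatment.
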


\begin{proof}

Consider a generic $\pure \in \H$. Since $d_a d_b > d_c$ and $d_a \leq d_c$, by Proposition~\ref{DQLS:generic} we know that 
$\Sigma_{a}(\pure) = \H_{a}$, $\Sigma_{c}(\pure) = \H_{c}$, and $\pure$ is DQLS relative to $\N_{\rm tri}$ if and only if Eq.~\eqref{psi'tri} is satisfied for $X_a = \lambda I_a$ and $X_c =  \lambda I_c$, with $\lambda \in \mathds{C}$. Define $\mathcal{T}  \equiv \Span \{X_a \otimes I_b \otimes I_c, I_a \otimes I_b \otimes X_c \,\vert \, X_a \in \mathcal{B}(\H_a), X_c \in \mathcal{B}(\H_c)\}$, which is a $(d_a^2+d_c^2-1)$-dimensional subspace of $\mathcal{B}(\H)$. In terms of $\mathcal{T}$, 
$\pure$ is DQLS relative to $\N_{\rm tri}$ if and only if 
$Y \pure =0$ for $Y \in \mathcal{T}$ implies that $Y = \textbf{0}$. Consider the representation of $\pure$ and $X_a,X_c$ in 
the standard basis for the respective spaces. If we vectorize $Y$ by stacking up its columns into a single vector denoted ${\rm vec}(Y)$, and use the property ${\rm vec}( AXB) =  (B^T\otimes A){\rm vec}(X)$ \cite{bhatia}, the condition $Y\pure=0$ is equivalent to
\begin{equation}
\label{M(psi)}
M(\vec{\psi}\,)^T {\rm vec} (Y) = {0},\quad \textup{where} \quad M(\vec{\psi}\,) \equiv \vec{\psi} \otimes I_{a} \otimes I_b \otimes I_c.
\end{equation}
Thus, the existence of any non-trivial ${\rm vec} (Y)$ in the right kernel of $M^T(\vec{\psi}\,)$ is equivalent to 
the non-DQLS nature of the generic $\pure$ which satisfies our assumption. To check whether this is the case, let us re-parametrize
${\cal T}$ by introducing an isometric embedding $V : \mathds{C}^{d_a^2+d_c^2-1} \rightarrow 
\text{vec}(\mathcal{T})$, such that \(V(\vec{e}_i)=\text{vec}(t_i),\)
where $\{t_i\}$ denotes a fixed basis in $\mathcal{T}$ and $\{\vec{e}_i\}$ is the standard basis in 
$\mathds{C}^{d_a^2+d_c^2-1}$. 		
Using the fact that $V$ is full (column-) rank, we have that $Y\neq \textbf{0}$ can satisfy Eq.~\eqref{M(psi)}, for a pure state $\pure,$ 
if and only if $A\equiv M(\vec{\psi})^T V$ has a non-trivial right kernel. 
This is equivalent to saying that $A^\dagger A$ is non-invertible. Therefore, a generic $\pure$ fails to be DQLS if and only if   
\begin{equation} 
\label{det}
D(\psi) \equiv {\rm det}[(M(\vec{\psi}\,)^TV)^\dagger M^T(\vec{\psi}\,)V]= 0.
\end{equation}
The above may be viewed as a {\em polynomial equation}, whose coefficients are determined by $V$ and 
whose unknowns are given by the entries of $\dpure$ in the standard basis. The roots of a polynomial in $\C^n$, 
for any $n$, are a measure zero set whenever the latter is non-zero, or a measure one set for the trivial (zero) polynomial.
In the first case, when $D(\psi)\geq 0$, since Eq. \eqref{det} implies that $Y= \mathbf{0}$ is the only solution 
to Eq. \eqref{M(psi)} for all but a zero measure set of instances, the DQLS property is measure one. If $D(\psi)\equiv 0$, 
all $V$-dependent coefficients vanish identically, hence non-trivial solutions $Y$ always exists and DQLS is, 
correspondingly, a measure zero property.
\end{proof}

Since the proof of the above theorem hinges on the fact that, for $d_ad_b>d_c$, the Schmidt spans of a 
generic $\pure$ with respect to the $a|bc$ ($ab|c$) bi-partitions coincide with the full $\H_a$  ($\H_c$), 
we can extend some of our conclusions to {\em any} general tripartite state $|\phi\rangle$, whose Schmidt spans on the relevant bi-partitions are maximal.  This is equivalent to saying that the corresponding one-body RDMs, $\rho_{a} \equiv \rho_{\overline{\N}_{bc}}$ and $\rho_{c} \equiv \rho_{\overline{\N}_{ab}}$, are full-rank. 
Observe that when DQLS is a measure zero property, the polynomial 
$D(\phi)$ defined in Eq.~\eqref{det} from a state $|\phi\rangle$ (generic or otherwise), whose RDMs $\rho_{a}$ and $\rho_{c}$ are full-rank, is also identically zero. We can thus give the following corollary:
\begin{cor}
\label{full_red}
Consider a tripartite quantum system on $\H=\H_a \otimes \H_b \otimes \H_c$ and neighborhood structure $\N_{\rm tri} =\{\N_{ab},\N_{bc}\} $, such that the subsystem dimensions obey $d_ad_b>d_c$ and $d_a \leq d_c$. If a generic pure state $\pure \in \H$ is DQLS relative to $\N_{\rm tri}$ with  probability zero, then an arbitrary pure state $|\phi\rangle \in \H$, with full-rank RMDs 
$\rho_{a} \in \mathcal{D}(\H_a)$ and $\rho_c \in \mathcal{D}(\H_c)$, is also not DQLS relative to $\N_{\rm tri}$.
\end{cor}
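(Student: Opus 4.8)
The plan is to observe that the chain of equivalences in the proof of Theorem~\ref{generic:proof}, which ends with the polynomial criterion ``$D(\psi)=0\Leftrightarrow\pure$ is not DQLS'', never uses genericity of $\pure$ beyond the requirement that the one-body RDMs $\rho_a$ and $\rho_c$ be full rank. The first step is therefore to rerun that chain for an arbitrary $|\phi\rangle$ with full-rank $\rho_a,\rho_c$. Since then $\Sigma_a(|\phi\rangle)=\H_a$ and $\Sigma_c(|\phi\rangle)=\H_c$, Lemma~\ref{row-column} gives $\Span\{\mathrm{range}(A_i)\}_{i=0}^{d_b-1}=\H_a$ and $\Span\{\mathrm{range}(A_i^T)\}_{i=0}^{d_b-1}=\H_c$ for the matrices $A_i$ built from $|\phi\rangle$ via Eqs.~\eqref{phi_i} and \eqref{A_i}. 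Consequently Proposition~\ref{DQLS:reform} specializes --- exactly as in Proposition~\ref{DQLS:generic}(ii)--(iii) --- to the statement that $|\phi\rangle$ is DQLS relative to $\N_{\rm tri}$ if and only if the only $Y$ in the fixed $(d_a^2+d_c^2-1)$-dimensional subspace $\mathcal{T}\equiv\Span\{X_a\otimes I_b\otimes I_c,\,I_a\otimes I_b\otimes X_c\mid X_a\in\mathcal{B}(\H_a),\,X_c\in\mathcal{B}(\H_c)\}$ with $Y|\phi\rangle=0$ is $Y=\textbf{0}$.

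With that in hand, I would mirror the vectorization argument in the proof of Theorem~\ref{generic:proof} to recast this condition as $D(\phi)\neq 0$, where $D$ is the \emph{same} polynomial as in Eq.~\eqref{det}: its coefficients depend only on the isometry $V$, hence only on $d_a,d_b,d_c$ and $\N_{\rm tri}$, while the state enters solely through the argument $\vec{\phi}$, so $D$ is one fixed polynomial that may be evaluated at any state vector. Now invoke the hypothesis: ``a generic $\pure$ is DQLS with probability zero'' means, by the dichotomy established at the end of the proof of Theorem~\ref{generic:proof} (namely, $D\not\equiv 0$ forces DQLS to be measure one, while $D\equiv 0$ forces it to be measure zero), that $D$ is the trivial (identically zero) polynomial. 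An identically zero polynomial vanishes everywhere, so in particular $D(\phi)=0$. Combining this with the equivalence of the first step furnishes a nonzero $Y\in\mathcal{T}$ with $Y|\phi\rangle=0$, i.e.\ $|\phi\rangle$ is not DQLS relative to $\N_{\rm tri}$, which is the claim.

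The only real work --- and the step to watch --- is the bookkeeping in the first paragraph: confirming that the equivalence ``$|\phi\rangle$ DQLS $\Leftrightarrow$ ($Y|\phi\rangle=0$, $Y\in\mathcal{T}\Rightarrow Y=\textbf{0}$)'' genuinely survives dropping genericity. Genericity entered the proof of Proposition~\ref{DQLS:generic} in only two places: (a) to force $\Sigma_a(\pure)=\H_a$ and $\Sigma_c(\pure)=\H_c$, which we now assume outright; and (b) in the ``no nontrivial kernel'' arguments for each $A_i$ and for the stacked matrix $\mathcal{A}=[A_0;\dots;A_{d_b-1}]$ used in part (iii) to pin down the uniqueness of $X_a,X_c$. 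But the relevant kernels are controlled by $\Span\{\mathrm{range}(A_i)\}=\H_a$ and $\Span\{\mathrm{range}(A_i^T)\}=\H_c$, which again follow from full-rank $\rho_a,\rho_c$ through Lemma~\ref{row-column} rather than from genericity; no use is made of distinctness of Schmidt coefficients or any other fine generic property, so the argument transfers unchanged.
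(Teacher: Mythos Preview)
Your proposal is correct and follows essentially the same approach as the paper, which amounts to the observation in the paragraph immediately preceding the corollary: the polynomial $D(\cdot)$ in Eq.~\eqref{det} is identically zero under the hypothesis, and the equivalence ``$D(\phi)\ne 0\Leftrightarrow|\phi\rangle$ DQLS'' requires only that $\Sigma_a(|\phi\rangle)=\H_a$ and $\Sigma_c(|\phi\rangle)=\H_c$, which is precisely the full-rank assumption on $\rho_a,\rho_c$. Your careful bookkeeping in the third paragraph---in particular, noting that the kernel arguments in Proposition~\ref{DQLS:generic}(iii) really only need the combined ranges $\Span\{\mathrm{range}(A_i)\}=\H_a$ and $\Span\{\mathrm{range}(A_i^T)\}=\H_c$ rather than genericity of each individual $A_i$---is more explicit than what the paper provides, but the underlying argument is the same.
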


\subsubsection{SLOCC canonical form of generic pure states.} 
\label{sec:SLOCC}

While Theorem~\ref{generic:proof} guarantees that the DQLS property is either measure zero or measure one 
in the set of pure states, further analysis is required to characterize the behavior in cases where 
$d_a d_b > d_c$. Toward this, we leverage the SLOCC-invariance of the DQLS property to simplify our analysis,
 by transforming the given generic state into another state with useful properties. The following lemma will be needed:
\begin{lemma}
\label{max_entangle}
Any bipartite pure state $|\phi\rangle \in \H_1 \otimes \H_2$ with maximal Schmidt rank may be transformed into the maximally entangled (un-normalized) state, namely, $|\Omega\rangle = \sum_k |e_k\rangle_1 | f_k\rangle_2,$ 
through the action of local invertible operators. Here, $\{|e_l\rangle\}_{l=1}^{d_1}$ and $\{|f_m\rangle\}_{m=1}^{d_2}$ 
represent the standard bases in $\H_1$ and $\H_2$, respectively.

\end{lemma}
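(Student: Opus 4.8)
The plan is to construct the two local invertible operators explicitly from a Schmidt decomposition of $|\phi\rangle$. Assume without loss of generality that $d_1\le d_2$, so that ``maximal Schmidt rank'' means Schmidt rank equal to $d_1$; the case $d_2<d_1$ is handled symmetrically by interchanging the two factors. Write the Schmidt decomposition $|\phi\rangle=\sum_{k=1}^{d_1}\mu_k\,|a_k\rangle_1\otimes|b_k\rangle_2$ with all $\mu_k>0$, where $\{|a_k\rangle\}_{k=1}^{d_1}$ is an orthonormal basis of $\H_1$ and $\{|b_k\rangle\}_{k=1}^{d_1}$ an orthonormal set in $\H_2$, which I would complete to an orthonormal basis $\{|b_k\rangle\}_{k=1}^{d_2}$ of $\H_2$.

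First I would define $X_1\in\mathcal{B}(\H_1)$ as the linear map sending $|a_k\rangle\mapsto \mu_k^{-1}|e_k\rangle$ for $k=1,\dots,d_1$, and $X_2\in\mathcal{B}(\H_2)$ as the unitary sending $|b_k\rangle\mapsto|f_k\rangle$ for $k=1,\dots,d_2$. Since $X_1$ carries the orthonormal basis $\{|a_k\rangle\}$ to $\{\mu_k^{-1}|e_k\rangle\}$, which is again a basis of $\H_1$ (the $\mu_k$ being nonzero), $X_1$ is invertible; $X_2$ is manifestly invertible. A one-line computation then gives $(X_1\otimes X_2)|\phi\rangle=\sum_{k=1}^{d_1}\mu_k(\mu_k^{-1}|e_k\rangle)\otimes|f_k\rangle=\sum_{k=1}^{d_1}|e_k\rangle_1\otimes|f_k\rangle_2=|\Omega\rangle$, which is the assertion.

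Alternatively, and more in line with the matrix reformulation used elsewhere in the paper, one may reshape $|\phi\rangle$ into $M\equiv\mathcal{P}_T(\vec{\phi})\in\mathds{M}^{d_1\times d_2}$; maximal Schmidt rank is equivalent to $M$ having full rank $d_1$, and the rank normal form over $\mathds{C}$ furnishes invertible $P\in\mathds{M}^{d_1\times d_1}$ and $Q\in\mathds{M}^{d_2\times d_2}$ with $PMQ=[\,I_{d_1}\,|\,0\,]$, which is precisely $\mathcal{P}_T(\vec{\Omega})$. Setting $X_1=P$, $X_2=Q^T$, and using $\mathcal{P}_T((X_1\otimes X_2)\vec{\phi})=X_1 M X_2^T$ together with the injectivity of $\mathcal{P}_T$, the claim follows. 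Either route makes clear that $X_1$ (and $X_2$, in the matrix version) need not be norm-preserving, consistent with the SLOCC setting and with Remark~\ref{norm}.

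I do not expect a genuine obstacle here; the only points requiring a little care are the bookkeeping in the non-square case $d_1\ne d_2$ — where the target $|\Omega\rangle$ carries only $\min\{d_1,d_2\}$ nonzero Schmidt terms and the surplus basis vectors of the larger factor are simply carried through by an arbitrary completion of the Schmidt basis — and verifying, as above, that the constructed $X_1,X_2$ are invertible rather than merely well-defined.
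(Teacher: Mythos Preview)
Your first argument is correct and is essentially identical to the paper's own proof: both assume $d_1\le d_2$, write the Schmidt decomposition, define the operator on $\H_1$ by $|a_k\rangle\mapsto\mu_k^{-1}|e_k\rangle$, and take the operator on $\H_2$ to be the unitary sending the (completed) Schmidt basis to the standard basis. Your additional rank-normal-form argument via $\mathcal{P}_T$ is not in the paper but is a perfectly valid alternative that ties nicely into the matrix reformulation of Section~\ref{reform}.
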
 
\begin{proof}
The maximal Schmidt rank of a bipartite state is equal to the dimension of the smallest component Hilbert space in the bipartition. Assume $d_1 \leq d_2$, without loss of generality, and let the Schmidt decomposition of $|\phi\rangle$ be given by $|\phi\rangle = \sum_{k=1}^{d_1}\mu_k|x_k\rangle_1|y_k\rangle_2$. Construct the linear operator $M_1\in\mathcal{B(H}_1)$ such that $M_1=\sum_{k=1}^{d_1}(1/\mu_k) |e_k\rangle\langle x_k|$. This is invertible since $\{|x_k\rangle\}_{k=1}^{d_1}$ forms a basis of $\H_1$, thanks to the maximal Schmidt rank of $|\phi\rangle$. Now, choose $M_2 \in\mathcal{B(H}_2)$ to be the unitary that transforms the orthogonal basis $\{|y_m\rangle\}_{m=1}^{d_1}$ (by extending the Schmidt vectors appropriately) into $\{|f_k\rangle\}_{k=1}^{d_2}$ in $\H_2$.  
\end{proof}

We make use of this fact to convert a generic state $\pure$ into another pure state 
$(M_a\otimes I_b \otimes M_c)\pure$, in such a way that 
\begin{equation} 
\label{phi0}
(M_a \otimes M_c )|\phi_0\rangle_{ac} = |\Omega\rangle_{ac},
\end{equation}
where $|\phi_0\rangle_{ac}$ belongs to the decomposition of $\pure$ given by Eq.~\eqref{phi_i}. Lemma~\ref{max_entangle} applies  because $|\phi_0\rangle_{ac}$  has maximal Schmidt rank, owing to its generic nature. Let $\{\tilde{A}_i\}$ denote the set of matrices associated to $(M_a\otimes I_b \otimes M_c)\pure$, derived using Eq.~\eqref{phi_i} and Eq. \eqref{A_i}. With the modified 
$|\phi\rangle_{ac}$ being maximally entangled, these matrices have the form:
\begin{equation}
\label{SLOCC_matrices}
\tilde{A}_0 = \left[\begin{array}{c|c}
I_{00} & \textbf{0}_{0\bar{d}} \end{array} \right], \quad \tilde{A}_i = M_aA_iM_c^T, \quad i = 1, \dots, d_b-1,
\end{equation} 
where the block $I_{00}$ denotes the identity matrix in $\mathds{M}^{d_a \times d_a},$ corresponding to the Schmidt coefficients of the maximally entangled state, and $\textbf{0}_{0\bar{d}}$ denotes the zero matrix in $\mathds{M}^{d_a \times \bar{d}}$, with 
$\bar{d} \equiv d_c-d_a$. 
Since the dimension of $\H_0(\pure)$ and $\H_0(M_a \otimes I_b \otimes M_c\pure)$ are the same thanks to Proposition~\ref{SLOCC:DQLS}. Hence, in order to determine the DQLS nature of $\pure$ we can study the set of linear equations~\eqref{conditions} associated to $\{\tilde{A}_i\},$ instead of $\{A_i\}$. 

The form of $\tilde{A}_0$ in Eq.~\eqref{SLOCC_matrices} considerably simplifies the problem of solving the set of equations~\eqref{conditions} for $\{\tilde{A}_i\}$. For this reason, and for the sake of simplicity, we hereafter denote by 
{\em $\pure$ an SLOCC-transformed generic tripartite pure state}, such that the corresponding 
$|\phi_0\rangle_{ac} = |\Omega\rangle_{ac}$ in Eq.~\eqref{phi_i}, with $|\Omega\rangle_{ac}$ being the maximally 
entangled state. We also drop the tilde in denoting $\{A_i\}$, the set of matrices constructed from the transformed 
$\pure$ following Eqs.~\eqref{phi_i}-\eqref{A_i}.
In order to analyze the form of the solutions, it is convenient to introduce a block decomposition of the matrices involved in Eq.~\eqref{conditions}. Since $X_a$ is the smallest matrix, it is not further decomposed. 
The decomposition of the other matrices is summarized in Table~\ref{block}. 
\noindent 
Eq.~\eqref{conditions}, with $A_0 = \left[\begin{array}{c|c} I_{00} & \textbf{0}_{0\bar{d}} \end{array}\right],$ 
simplifies the form of the matrix $X_c$ as shown below:
\begin{equation}
\tilde{X}_{00} = X_a, \quad   \tilde{X}_{0\bar{d}} = \textbf{0}_{0\bar{d}}. 
\label{X_c1}
\end{equation}

\begin{table}[t] 
\centering
\begin{tabular}{|c|c|c|}
\hline 
{\sf Matrix} & {\sf Dimensions} & {\sf Block form}  \\ 
\hline\hline
\multirow{2}{*}{$A_i$, $i = 1,\dots,d_b-1$} & \multirow{2}{*}{$d_a \times d_c$} & \multirow{2}{*}{$\left[\begin{array}{c|c} A^i_{00} & A^i_{0\bar{d}} 
\end{array} \right]$} \vspace*{0mm}\\ 
[3ex] 
\multirow{2}{*}{$X_c^T$ } & \multirow{2}{*}{$d_c \times d_c$} & \multirow{3}{*}{$\left[\begin{array}{c|c} \tilde{X}_{00} & \tilde{X}_{0\bar{d}} \\ \hline \\ \tilde{X}_{\bar{d}0} &\ \tilde{X}_{\bar{d}\bar{d}} \end{array} \right]$} \\[8ex]
\hline
\end{tabular}
\caption{\label{block}The subscripts used to denote the matrix blocks are chosen based on the following logic: $00$,  $\bar{d}\bar{d}$, $0\bar{d}$ and  $\bar{d}0$ denote $d_a \times d_a$, $\bar{d} \times \bar{d}$, $d_a \times \bar{d}$ and $\bar{d} \times d_a$ dimensions for the blocks, respectively. Also, $\bar{d} \equiv d_c - d_a \geq 0$ following the assumption $d_c \geq d_a$. }
\end{table}

\subsection{Analytical characterization for a central qubit subsystem, $d_b=2$}
\label{sub:qubit}

We now provide analytical results that fully characterize the DQLS nature of generic tripartite pure states when the central subsystem 
is a qubit. In this case, the set of relevant equations \eqref{conditions} are associated to the matrices $A_0$ and $A_1$.  
Since we are considering the SLOCC-transformed generic pure state, $A_0$ is equal to identity padded 
with zeros for adjusting the dimensions.
As a consequence, Eqs.~\eqref{X_c1} hold for the different blocks of $X_c$. Now apply Eqs.
\eqref{conditions} to $A_1$ to obtain the following conditions: 
\begin{align} 
X_aA_{0\bar{d}} &= A_{0\bar{d}}\tilde{X}_{\bar{d}\bar{d}}, \label{cqubit1} \\
X_aA_{00} &= A_{00}X_a + A_{0\bar{d}}\tilde{X}_{\bar{d}0}, \label{cqubit2}
\end{align}
where, for notational simplicity, we have dropped the superscript 1 while denoting the blocks of $A_1$.
The case $d \times 2 \times d$ ($\bar{d}=0$) is easier to analyze as 
the relevant matrices $X_a,X_c,A_0$ and $A_1$ have the same dimensionality. As a result, 
in the decomposition of Table~\ref{block}, 
all the blocks except the $d_a \times d_a$ ones vanish, and Eqs.~\eqref{cqubit1}-\eqref{cqubit2} reduce to
\begin{equation}
\label{commute}
X_{a}A_{00} = A_{00}X_{a},
\end{equation}
with $X_c^T = X_a$ (notice that $A_{00}$ is the same as $A_1$, but we stick to the former notation for consistency). 
In the more general scenario $d_a\times 2\times d_c$, where $\bar{d} = d_c-d_a > 0$, we know from Theorem~\ref{No-Go} 
that generic pure states are non-DQLS when $d_a \leq \bar{d}$, in which case the corresponding DQLS subspace 
$\H_0$ is $d_a^2$-dimensional. As we are going to show, the only situation in which the DQLS property is found to be 
generic corresponds to the combination of dimensions $d_a \times 2\times (d_a+1)$. More precisely, our 
analytical characterization is summarized in the following theorem, whose lengthy proof we postpone, for clarity, to Appendix \ref{sub:proofs}:

\begin{thm}
\label{qubitb}
Let $\pure$ be a generic tripartite pure state in 
$ \H_a \otimes \H_b \otimes \H_c$, with dimensions $d_a \times d_b \times d_c$ and 
neighborhood structure $\N_{\rm tri} = \{\N_{ab},\N_{bc}\}$. When $d_b = 2$, 
\begin{enumerate}[i.]
\item $\pure$ is not DQLS relative to $\N_{\rm tri} $ when $d_a = d_c$. The dimension of $\H_0(\pure)$ is $d_a$.
\item $\pure$ is not DQLS relative to $\N_{\rm tri} $ when $d_c>d_a$, except for $d_c = d_a +1$. The dimension 
of $\H_0(\pure)$ is $\min\{d_a^2,(d_c-d_a)^2\}$.
\end{enumerate}
\end{thm}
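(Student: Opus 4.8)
The plan is to exploit the reformulation of the DQLS condition for generic tripartite states via the linear system~\eqref{conditions}, together with the SLOCC normalization in~\eqref{SLOCC_matrices} (so that $A_0=[\,I_{00}\mid\mathbf{0}_{0\bar{d}}\,]$ and the operative equations become~\eqref{X_c1}, \eqref{cqubit1}, \eqref{cqubit2}), organizing the discussion by the value of $\bar{d}\equiv d_c-d_a$. Since $d_b=2$, the hypothesis $d_ad_b>d_c$ of Proposition~\ref{DQLS:generic} reads $\bar{d}<d_a$, and in that regime Proposition~\ref{DQLS:generic}(iii) identifies $\dim\H_0(\pure)$ with the dimension of the solution space of~\eqref{conditions}; the whole problem thereby reduces to a dimension count for a homogeneous linear system. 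The complementary case $\bar{d}\geq d_a$ lies outside that regime but is covered verbatim by Theorem~\ref{No-Go} (which applies precisely when $2d_a\leq d_c$), giving $\dim\H_0(\pure)=d_a^2=\min\{d_a^2,\bar{d}^2\}$ and non-DQLS.

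First I would dispose of $\bar{d}=0$ ($d_a=d_c$): here all relevant matrices are $d_a\times d_a$, $A_0=I_{00}$, and~\eqref{cqubit1}--\eqref{cqubit2} collapse to~\eqref{commute}, i.e. $X_aA_{00}=A_{00}X_a$ with $X_c^T=X_a$ and $A_{00}$ generic. A generic matrix has $d_a$ distinct eigenvalues, so its commutant is the algebra of polynomials in it, of dimension $d_a$; hence $\dim\H_0(\pure)=d_a$ and (for $d_a\geq2$) $\pure$ is not DQLS. For $0<\bar{d}<d_a$ the unknown blocks in~\eqref{cqubit1}--\eqref{cqubit2} are $X_a$, $\tilde{X}_{\bar{d}0}$, $\tilde{X}_{\bar{d}\bar{d}}$ of sizes $d_a^2$, $d_a\bar{d}$, $\bar{d}^2$, subject to $d_a\bar{d}+d_a^2$ scalar equations, so a bare count already gives $\dim\H_0(\pure)\geq\bar{d}^2$. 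This settles the non-DQLS claim whenever $\bar{d}\geq2$, consistently with $\min\{d_a^2,\bar{d}^2\}=\bar{d}^2$ in that range.

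Next I would treat the exceptional case $\bar{d}=1$. Using the residual SLOCC freedom that preserves the form of $A_0$ -- which acts on the block $A_{00}$ by $A_{00}\mapsto G(A_{00}+A_{0\bar{d}}V)G^{-1}$ with $G\in\mathrm{GL}(d_a)$ and $V\in\mathds{M}^{\bar{d}\times d_a}$ arbitrary -- I would further assume $A_{00}=\mathrm{diag}(\mu_1,\dots,\mu_{d_a})$ with distinct $\mu_i$, and $A_{0\bar{d}}=\vec{a}\in\mathds{C}^{d_a}$ a generic column vector with all entries nonzero. Then~\eqref{cqubit2} reads $X_aA_{00}-A_{00}X_a=\vec{a}\,\vec{r}^{\,T}$ with $\vec{r}^{\,T}\equiv\tilde{X}_{\bar{d}0}$; comparing diagonal entries yields $a_ir_i=0$, hence $\vec{r}=0$, hence $X_a$ commutes with $A_{00}$ and so is diagonal; finally~\eqref{cqubit1}, i.e. $X_a\vec{a}=\tilde{X}_{\bar{d}\bar{d}}\,\vec{a}$, forces all diagonal entries of $X_a$ to agree, so $X_a$ (and then $X_c$) is a scalar multiple of the identity. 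The solution space is one-dimensional, $\dim\H_0(\pure)=1=\min\{d_a^2,1\}$, and $\pure$ is DQLS.

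What remains -- and this is the technical heart of the proof, the reason it is deferred to the appendix -- is to upgrade $\dim\H_0(\pure)\geq\bar{d}^2$ to an equality in the range $2\leq\bar{d}<d_a$; equivalently, to show that the $d_a\bar{d}+d_a^2$ equations~\eqref{cqubit1}--\eqref{cqubit2} are linearly independent for generic data, i.e. that the associated linear map is surjective. Since maximal rank is a Zariski-open condition on the entries of $A_{00}$ and $A_{0\bar{d}}$, it suffices to exhibit one admissible pair for which surjectivity holds. Passing to the eigenbasis of $A_{00}$, equation~\eqref{cqubit2} expresses the off-diagonal part of $X_a$ through $\tilde{X}_{\bar{d}0}$ (dividing by the eigenvalue gaps) and imposes only $\mathrm{diag}(A_{0\bar{d}}\tilde{X}_{\bar{d}0})=0$, so the problem reduces to solving $X_aA_{0\bar{d}}-A_{0\bar{d}}\tilde{X}_{\bar{d}\bar{d}}=C$ for arbitrary $C\in\mathds{M}^{d_a\times\bar{d}}$ using the residual freedom in $(\mathrm{diag}(X_a),\tilde{X}_{\bar{d}0},\tilde{X}_{\bar{d}\bar{d}})$; I would settle this by putting $A_{0\bar{d}}$ into a convenient normal form and solving block by block, distinguishing the sub-cases $\bar{d}\leq d_a-\bar{d}$ and $\bar{d}>d_a-\bar{d}$. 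Carrying out this maximal-rank argument uniformly over $2\leq\bar{d}<d_a$, and verifying that the genericity invoked along the way is implied by that of $\pure$ (using the properties of generic matrices collected in the appendix), is the main obstacle.
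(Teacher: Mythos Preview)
Your overall strategy is correct and aligns with the paper's: both reduce to the SLOCC-normalized system \eqref{cqubit1}--\eqref{cqubit2}, handle $\bar d=0$ via the commutant of a generic matrix (Lemma~\ref{distinct_eig}), invoke Theorem~\ref{No-Go} for $\bar d\ge d_a$, and for $0<\bar d<d_a$ identify $\dim\H_0(\pure)$ with the solution-space dimension, which by parameter counting is at least $\bar d^2$. Your observation that this lower bound already settles non-DQLS for $\bar d\ge 2$, leaving only the exact value of $\dim\H_0$ to pin down, is a nice economy.

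Where you diverge is in how you establish the matching upper bound. You treat $\bar d=1$ by a separate hands-on computation (diagonalize $A_{00}$, read off $\vec r=0$ from the diagonal, etc.), and for $2\le\bar d<d_a$ you propose a Zariski-openness argument plus an explicit surjectivity check in a normal form for $A_{0\bar d}$, with case distinctions on the relative size of $\bar d$ and $d_a-\bar d$. This is workable but leaves the heaviest step unexecuted. The paper instead proves full row rank of the vectorized coefficient matrix $\mathcal{C}$ in \eqref{vec} directly and uniformly over all $1\le\bar d<d_a$: a putative left null vector of the upper block of $\mathcal{C}$ corresponds to a matrix $P$ satisfying $P^TA_{0\bar d}=\mathbf 0$ and $A_{0\bar d}P^T=\mathbf 0$, impossible for generic $A_{0\bar d}$ with $\bar d<d_a$; a left null vector of the lower block corresponds to a $Q$ with $[A_{00},Q^T]=\mathbf 0$ and $Q^TA_{0\bar d}=\mathbf 0$, which Lemmas~\ref{distinct_eig} and~\ref{generic_kernel} force to be zero. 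This handles the $\bar d=1$ DQLS case and the exact dimension $\bar d^2$ for all larger $\bar d$ in one short stroke, without normal forms, case splits, or an explicit surjectivity construction. Your route would get there, but the paper's left-kernel analysis of $\mathcal{C}$ is both shorter and more transparent about where genericity enters.
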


\subsection{Stabilizability properties beyond $d_b=2$} 
\label{sub:beyond}

Consider next a general tripartite systems in $d_a \times d_b \times d_c$ dimensions, for $d_b \ne 2$. 
Numerical evidence shows that the dimension combinations which do not fall under the no-go condition of Theorem~\ref{No-Go} 
yield a positive result to the DQLS test -- with {\em some} exceptions. It appears as if this is an instance of an over-constrained system of 
equations, with Eq.~\eqref{conditions} having more constraints than in the $d_b =2$ cases, whereas the number of unknowns remain the same. This observation would suggest to pursue a constraint-counting analysis similar to the one used in proving Theorem~\ref{qubitb}. However, notwithstanding the fact that the number of constraints in Eq.~\eqref{conditions} increases with $d_b$, they remain linearly dependent on each other to a large extent. This is verified by the fact that the system is always guaranteed to admit at least one non-trivial solution, namely, $X_a$ and $X_c$ proportional to the identity matrix. In view of this intricate general structure, we focus on providing analytic results for some specific cases of interest, and use numerical tools to investigate the remaining ones.

\subsubsection{Analytical results.} We begin with the following observation: 
\begin{prop}
\label{higher-d_b}
If a generic tripartite pure state in $d_a \times d_b \times d_c$ dimensions is DQLS relative to $\N_{\rm tri} = \{\N_{ab},\N_{bc}\}$, 
the same property holds in any $d_a \times d'_b \times d_c$ dimensions, with $d'_b >d_b$.
\end{prop}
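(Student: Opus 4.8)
The plan is to reduce everything to the linear-algebraic reformulation established above and to couple it with a simple ``truncation'' argument relating a generic $d_a\times d_b'\times d_c$ state to a generic $d_a\times d_b\times d_c$ one. First, observe that the hypothesis can only be met when $d_a d_b>d_c$ (recall $d_a\leq d_c$ throughout this section): otherwise Theorem~\ref{No-Go} forces a generic state in $d_a\times d_b\times d_c$ dimensions to be non-DQLS, and there is nothing to prove. Since $d_b'>d_b$, this also gives $d_a d_b'>d_c$, so Proposition~\ref{DQLS:generic} applies in \emph{both} settings; in particular $\Sigma_a(\cdot)=\H_a$ and $\Sigma_c(\cdot)=\H_c$ generically, and, by part~(ii) of that proposition, being DQLS is equivalent to the linear system $X_aA_i=A_iX_c^T$ admitting only the scalar solutions $X_a=\lambda I_a$, $X_c=\lambda I_c$.

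Next I would set up the truncation. Working with un-normalized states (legitimate by Remark~\ref{norm}), decompose a generic $\pure\in\H_a\otimes\mathds{C}^{d_b'}\otimes\H_c$ as in Eq.~\eqref{phi_i}, $\pure=\sum_{i=0}^{d_b'-1}|\phi_i\rangle_{ac}\otimes|i\rangle_b$, and let $|\psi''\rangle\equiv\sum_{i=0}^{d_b-1}|\phi_i\rangle_{ac}\otimes|i\rangle_b\in\H_a\otimes\mathds{C}^{d_b}\otimes\H_c$ be the state obtained by discarding the last $d_b'-d_b$ components. The entries of $|\psi''\rangle$ in the standard basis form a subset of the i.i.d.\ complex Gaussian entries of $\pure$, so $|\psi''\rangle$ is a generic state in $d_a\times d_b\times d_c$ dimensions; moreover, since a coordinate projection pulls back a measure-zero set to a measure-zero set, for almost every $\pure$ the truncation $|\psi''\rangle$ \emph{simultaneously} enjoys genericity and, by the hypothesis combined with the dichotomy of Theorem~\ref{generic:proof}, the DQLS property. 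Crucially, by Eqs.~\eqref{phi_i}--\eqref{A_i} the matrices $\{A_i\}_{i=0}^{d_b-1}$ built from $|\psi''\rangle$ coincide exactly with the first $d_b$ matrices built from $\pure$, since each $A_i$ depends only on $|\phi_i\rangle_{ac}$.

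The conclusion then follows from the nesting of the two linear systems. For almost every generic $\pure$, part~(ii) of Proposition~\ref{DQLS:generic} applied to $|\psi''\rangle$ says the only $\{X_a,X_c\}$ satisfying $X_aA_i=A_iX_c^T$ for $i=0,\dots,d_b-1$ are scalar multiples of the identity; but the solution set of the larger system with $i=0,\dots,d_b'-1$ (the one governing $\pure$) is contained in the former, and always contains the scalars, hence equals it, so the converse direction of part~(ii) applied now to $\pure$ yields that $\pure$ is DQLS. As this holds on a measure-one set of states in $d_a\times d_b'\times d_c$ dimensions, a generic such state is DQLS. I expect the only delicate point to be the measure-theoretic bookkeeping --- checking that genericity of $\pure$ and of its truncation $|\psi''\rangle$ can be imposed together on a full-measure set --- which the Fubini/coordinate-projection argument settles cleanly; the remainder is an immediate consequence of the reformulation of Proposition~\ref{DQLS:generic} and the trivial containment of solution spaces.
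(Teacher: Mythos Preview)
Your proposal is correct and follows essentially the same approach as the paper: both arguments hinge on the observation that the solution set of the system $X_aA_i=A_iX_c^T$ for $i=0,\dots,d_b'-1$ is contained in that of the subsystem with $i=0,\dots,d_b-1$, which by hypothesis consists only of scalars. Your version is somewhat more careful about the measure-theoretic bookkeeping (explicitly arguing that the truncation of a generic $d_a\times d_b'\times d_c$ state is generic in $d_a\times d_b\times d_c$ dimensions, and invoking the dichotomy of Theorem~\ref{generic:proof}), whereas the paper simply asserts that ``$\{X_a,X_c\}$ that simultaneously solve any $d_b$ of these equations are proportional to the identity''; but the underlying idea is identical.
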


\begin{proof}
It can be verified that none of the above dimension combinations fall under the no-go Theorem~\ref{No-Go}. The set of 
equations $X_aA_i = A_iX_c^T$ for $i=0,\dots,(d'_b-1),$ where $\{A_i\}$ denotes a set of generic matrices in 
$\mathds{M}^{d_a \times d_c}$, determines the DQLS nature of generic pure states in $d_a \times d'_b \times d_c$ 
dimensions, following Proposition~\ref{DQLS:generic}.
Since, by assumption, generic pure states in $d_a \times d_b \times d_c$ dimensions are DQLS, $\{X_a,X_c\}$ that simultaneously solve any $d_b$ of these equations are proportional to the identity in the corresponding space. Hence, this holds for the entire set too and thus the generic pure states in $d_a \times d'_b \times d_c$ dimensions are also DQLS, for $d'_b > d_b$.
\end{proof}

\noindent 
Note that the converse of the above proposition is generally {\em false}. A counter example is given by considering 
$2 \times 3 \times 2$ vs.  $2 \times 2 \times 2$: as we will soon prove (Theorem~\ref{special_cases}), 
generic pure states are DQLS relative to $\N_{\rm tri}$ in the former case, yet we know from 
Theorem~\ref{qubitb} that they are not in the latter.

For the remainder of our analysis, we consider Eqs.~\eqref{conditions} in terms of the block-decomposition of the relevant matrices 
(Table~\ref{block}). As {\em we will be using the SLOCC-transformed version of the generic pure state}, the matrices 
$\{A_i\}$ have $A_0 = \left[\begin{array}{c|c} I_{00} & \textbf{0}_{0\bar{d}} \end{array} \right]$.
The consequence of this transformation is to simplify the block structure of $X_c$ by setting 
$\tilde{X}_{00} = X_a$ and $\tilde{X}_{0\bar{d}} = \textbf{0}$. By taking into account this simplification,
we can now rewrite Eqs. \eqref{conditions} with respect to the block decomposition of the relevant matrices as follows:
\begin{align}
X_a A_{0\bar{d}}^{i} &= A_{0\bar{d}}^{i}\tilde{X}_{\bar{d}\bar{d}},
 \label{many1}\\
X_a A_{00}^i &= A_{00}^{i}X_a + A_{0\bar{d}}^{i}\tilde{X}_{\bar{d}0}, \quad i=1,\dots,(d_b-1).  
\label{many2}
\end{align}
Notice that for $\bar{d} \equiv d_c -d_a>0$, Eq.~\eqref{many1} looks similar to the set of equations~\eqref{conditions} with $\{A_i\}$, $\{X_a,X_c\}$ replaced by $\{A_{0\bar{d}}^i\}$, $\{X_a,\tilde{X}_{\bar{d}\bar{d}}^T\}$, respectively, and the system dimensions 
given by $d_a \times (d_b-1) \times \bar{d}$. Thus, we see that the DQLS property of a generic pure state in $d_a \times (d_b-1) \times \bar{d}$ dimensions influences the solution space of Eq.~\eqref{many1}, according to Proposition~\ref{DQLS:generic}. The proposition given below builds further on this observation.

\begin{prop}
\label{partDQLS}
If a generic tripartite pure state in $d_a \times (d_b-1) \times \bar{d}$ dimensions is DQLS relative to 
$\N_{\rm tri} = \{\N_{ab},\N_{bc}\}$, the same holds for a generic tripartite pure state in $d_a \times d_b \times (d_a+\bar{d})$ dimensions, for $0<\bar{d} < d_a(d_b -1)$.
\end{prop}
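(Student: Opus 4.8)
The plan is to read Eq.~\eqref{many1} as the DQLS linear system~\eqref{conditions} for the \emph{smaller} $d_a\times(d_b-1)\times\bar{d}$ system, feed in the hypothesis, and then use Eq.~\eqref{many2} to propagate the resulting rigidity to the remaining block of $X_c$. First I would record that none of the dimension combinations involved is excluded by Theorem~\ref{No-Go}: for $d_a\times d_b\times(d_a+\bar{d})$ the condition $d_a d_b>d_a+\bar{d}$ is exactly the hypothesis $\bar{d}<d_a(d_b-1)$, so (noting $d_a\le d_a+\bar{d}$) Proposition~\ref{DQLS:generic} applies and a generic $\pure$ on $d_a\times d_b\times(d_a+\bar{d})$ is DQLS relative to $\N_{\rm tri}$ iff the only solutions $\{X_a,X_c\}$ of Eqs.~\eqref{conditions} are $X_a=\lambda I_{d_a}$, $X_c=\lambda I_{d_c}$ for some $\lambda\in\mathds{C}$. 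Passing to the SLOCC-normalized representative of $\pure$ (Proposition~\ref{SLOCC:DQLS}) so that $A_0=[\,I_{00}\ |\ \textbf{0}_{0\bar{d}}\,]$ as in Eq.~\eqref{SLOCC_matrices}, Eq.~\eqref{X_c1} gives $\tilde{X}_{00}=X_a$, $\tilde{X}_{0\bar{d}}=\textbf{0}$, and the remaining conditions are Eqs.~\eqref{many1}--\eqref{many2}.

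The key step is that Eq.~\eqref{many1}, namely $X_a A_{0\bar{d}}^i=A_{0\bar{d}}^i\tilde{X}_{\bar{d}\bar{d}}$ for $i=1,\dots,d_b-1$, is precisely the system~\eqref{conditions} for a tripartite state of dimensions $d_a\times(d_b-1)\times\bar{d}$, with the $d_b-1$ matrices $\{A_{0\bar{d}}^i\}$ in $\mathds{M}^{d_a\times\bar{d}}$ playing the role of the $\{A_i\}$ and the unknowns being $\{X_a,\tilde{X}_{\bar{d}\bar{d}}^{\,T}\}$. Here one must check, using the genericity facts of Appendix~\ref{sec:generic} (a column block of a generic matrix is generic, and multiplication by the invertibles $M_a,M_c$ of Eq.~\eqref{SLOCC_matrices} — which are independent of the $A_i$, $i\ge 1$ — preserves genericity), that $\{A_{0\bar{d}}^i\}$ is jointly a generic tuple of matrices in $\mathds{M}^{d_a\times\bar{d}}$, i.e.\ of the kind produced by a generic state of the smaller system. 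Moreover, since the smaller system is DQLS by hypothesis, it is not ruled out by Theorem~\ref{No-Go}; for $\bar{d}\ge 2$ this forces $(d_b-1)\bar{d}>d_a$, so Proposition~\ref{DQLS:generic} is genuinely applicable to it and yields $\Sigma_a=\mathds{C}^{d_a}$ for that system. Invoking Proposition~\ref{DQLS:generic}(ii) for the smaller system then forces every solution of Eq.~\eqref{many1} to have $X_a=\lambda I_{d_a}$ and $\tilde{X}_{\bar{d}\bar{d}}=\lambda I_{\bar{d}}$ for some $\lambda\in\mathds{C}$.

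It remains to substitute $X_a=\lambda I_{d_a}$ into Eq.~\eqref{many2}: the $X_a$-terms cancel and one is left with $A_{0\bar{d}}^i\,\tilde{X}_{\bar{d}0}=\textbf{0}$ for all $i$. Stacking the $A_{0\bar{d}}^i$ into a single generic matrix in $\mathds{M}^{(d_b-1)d_a\times\bar{d}}$, which has trivial right kernel because $(d_b-1)d_a>\bar{d}$ by hypothesis, gives $\tilde{X}_{\bar{d}0}=\textbf{0}$. Collecting the blocks via Table~\ref{block}, $X_c^{\,T}=\lambda I_{d_a}\oplus\lambda I_{\bar{d}}=\lambda I_{d_c}$, hence $X_c=\lambda I_{d_c}$ and $X_a=\lambda I_{d_a}$; so the only solutions of Eqs.~\eqref{conditions} are the scalar ones and, by Proposition~\ref{DQLS:generic}(ii), a generic $d_a\times d_b\times(d_a+\bar{d})$ state is DQLS. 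The main obstacle is the genericity bookkeeping of the second paragraph --- verifying that the $0\bar{d}$ column blocks of the SLOCC-transformed matrices form a tuple generic enough for the smaller-system hypothesis to apply verbatim --- together with the boundary case $\bar{d}=1$ with $d_b-1<d_a$, where the smaller system is (trivially) DQLS but has non-maximal $\Sigma_a$, so Eq.~\eqref{many1} only pins $X_a$ down on a proper subspace; there one must either argue directly from Eqs.~\eqref{many1}--\eqref{many2}, or instead appeal to Theorem~\ref{qubitb} together with Proposition~\ref{higher-d_b}.
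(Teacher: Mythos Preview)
Your approach is essentially the same as the paper's: identify Eq.~\eqref{many1} as the system~\eqref{conditions} for the smaller $d_a\times(d_b-1)\times\bar{d}$ problem, invoke the DQLS hypothesis to force $X_a$ and $\tilde X_{\bar{d}\bar{d}}$ to be scalar, then use Eq.~\eqref{many2} together with the trivial right kernel of the stacked $A^i_{0\bar{d}}$ (valid since $\bar{d}<d_a(d_b-1)$) to kill $\tilde X_{\bar{d}0}$, and reassemble $X_c$.

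Where you go beyond the paper is in the bookkeeping, and this is genuinely useful. The paper simply asserts that DQLS of the smaller system forces $X_a$ and $\tilde X_{\bar{d}\bar{d}}$ to equal the identity on the \emph{full} spaces; but Proposition~\ref{DQLS:reform} only guarantees this on the combined ranges of the $A^i_{0\bar{d}}$ and their transposes. Your observation that, for $\bar{d}\ge 2$, the DQLS hypothesis on the smaller system excludes the no-go regime (so Proposition~\ref{DQLS:generic} applies and those ranges are full) is exactly what is needed and is left implicit in the paper. Likewise, your flagging of the boundary case $\bar{d}=1$ with $d_b-1<d_a$ is on point: there the smaller system is trivially DQLS yet $\Sigma_a$ is a proper subspace, so Eq.~\eqref{many1} alone does not pin down $X_a$ globally; the paper's one-line argument does not cover this, whereas your fallback to Theorem~\ref{qubitb}(ii) plus Proposition~\ref{higher-d_b} cleanly disposes of it. The genericity check for the $0\bar{d}$ blocks after the SLOCC move is also a detail the paper takes for granted.
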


\begin{proof}
If the generic pure states in $d_a \times (d_b-1) \times \bar{d}$ dimensions are DQLS, then the only choice of $X_a$ and $X_c$ 
that satisfy Eqs. \eqref{conditions} is proportional to the identity. In order to determine the DQLS property of generic pure states in $d_a \times d_b \times (d_a+\bar{d})$ dimensions, let us analyze the modified version of Eqs. \eqref{conditions}, given by Eqs. 
\eqref{many1}-\eqref{many2}. Since Eq.~\eqref{many1} is identical to Eq.~\eqref{conditions} with $\{A_i\}$, $\{X_a,X_c\}$ replaced by $\{A_{0\bar{d}}^i\}$, $\{X_a,\tilde{X}_{\bar{d}\bar{d}}^T\}$, respectively, we see that the only choice of $\{X_a,\tilde{X}_{\bar{d}\bar{d}}\}$ that satisfy Eq.~\eqref{many1} must be the identity, following our assumption. As a consequence, Eq.~\eqref{many2} is modified to $A_{0\bar{d}}^{i}\tilde{X}_{\bar{d}0} = \textbf{0}$, for all $i$. Since we are only considering the cases where $d_ad_b > d_c$ 
(or, $\bar{d} < d_a(d_b -1)$), which are not covered under the no-go Theorem~\ref{No-Go}, the only choice of $\tilde{X}_{\bar{d}0}$ that fall in the right kernel of each generic matrix in $\{A_{0\bar{d}}^i\}$ is the zero matrix (as established at the end of the proof of statement (iii) in Proposition~\ref{DQLS:generic}). Let us now collect together all the above observations, in order to determine the solution space to Eqs.~\eqref{many1}-\eqref{many2} in $d_a \times d_b \times (d_a+\bar{d})$ dimensions. We already showed that $X_a = I$. Recall that, in block form, $X_c^T = \left[\begin{array}{c|c} 
X_a & \textbf{0} \\ \hline  
\tilde{X}_{\bar{d}0} &\ \tilde{X}_{\bar{d}\bar{d}} 
\end{array} \right].$ 
If $\tilde{X}_{\bar{d}0}$ and $\tilde{X}_{\bar{d}\bar{d}}$ are the zero matrix and the identity matrix, respectively, it follows that $X_c = I.$ Therefore, invoking Proposition~\ref{DQLS:generic}, generic pure states in $d_a \times d_b \times (d_a+\bar{d})$ dimensions are also DQLS.
\end{proof}

We are now in a position to characterize a number of special cases, which we collect in the following theorem.  
Since the proof is lengthy, we again defer it to Appendix \ref{sub:proofs}. 
\begin{thm}
\label{special_cases}
Let $\pure$ be a generic tripartite pure state in $\H_a\otimes\H_b \otimes \H_c$, with neighborhood structure 
$\N_{\rm tri} = \{\N_{ab},\N_{bc}\}$ and central subsystem dimension $d_b >2$. Then, $\pure$ is DQLS relative 
to $\N_{\rm tri}$ for the following combinations of subsystem dimensions:  
\begin{enumerate}[i.]
\item $d_a \times d_b \times d_a$, 
\item $d_a \times d_b \times (d_a+1),$
\item $d_a \times d_b \times n d_a$, with $1<n<d_b$.
\end{enumerate}
\end{thm}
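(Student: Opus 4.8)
The plan is to combine the measure-zero/measure-one dichotomy of Theorem~\ref{generic:proof} with the structural reductions of Propositions~\ref{higher-d_b} and~\ref{partDQLS}, so that in each family it suffices to exhibit one explicit witness. First observe that all three triples obey $d_a d_b > d_c$ (for (i) because $d_b>1$, for (ii) because $d_a(d_b-1)>1$, for (iii) because $d_b>n$), so none falls under the no-go Theorem~\ref{No-Go}, and Theorem~\ref{generic:proof} applies: the DQLS set has measure zero or measure one, and it is measure one precisely when the polynomial $D(\psi)$ of Eq.~\eqref{det} is not identically zero, i.e.\ precisely when there is \emph{some} pure state $\pure$ for which the linear system $X_a A_i = A_i X_c^T$, $i=0,\dots,d_b-1$, of Eq.~\eqref{conditions} admits only the scalar solutions $X_a=\lambda I$, $X_c=\lambda I$. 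So it is enough, for each triple, to produce one tuple $\{A_i\}\subset\mathds{M}^{d_a\times d_c}$ making \eqref{conditions} rigid in this sense; by Remark~\ref{norm} normalization can be ignored, and by taking $A_0=[\,I_{00}\mid\mathbf{0}_{0\bar d}\,]$ as in Eq.~\eqref{SLOCC_matrices} one immediately gets $\tilde X_{00}=X_a$, $\tilde X_{0\bar d}=\mathbf{0}$ and reduces the remaining equations to Eqs.~\eqref{many1}--\eqref{many2}.

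Family (i), $d_a\times d_b\times d_a$: here $\bar d=0$, so $A_0=I$ gives $X_c^T=X_a$ and \eqref{conditions} collapses to $X_aA_i=A_iX_a$ for $i=1,\dots,d_b-1$ (at least two equations, since $d_b>2$); thus $X_a$ must lie in the common commutant of the $A_i$. Taking $A_1=\mathrm{diag}(1,\dots,d_a)$, $A_2$ the all-ones matrix, and $A_i=\mathbf{0}$ for $i\ge 3$, commuting with $A_1$ makes $X_a$ diagonal and then commuting with $A_2$ makes all its diagonal entries equal, so $X_a=\lambda I$: the system is rigid, and generic states are DQLS.

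Family (ii), $d_a\times d_b\times(d_a+1)$: I use that $(d_a,2,d_a+1)$ is \emph{already} DQLS, since by Theorem~\ref{qubitb}(ii) its DQLS subspace has dimension $\min\{d_a^2,1\}=1$; Proposition~\ref{higher-d_b} then propagates DQLS to every $d_b>2$. Family (iii), $d_a\times d_b\times nd_a$ with $1<n<d_b$: I induct on $n$. By Proposition~\ref{higher-d_b} it is enough to treat $d_b=n+1$, and by Proposition~\ref{partDQLS} with $\bar d=(n-1)d_a$ (which obeys $0<\bar d<nd_a$, and $\bar d\ge d_a$, so that the sub-system is of the type governed by Proposition~\ref{DQLS:generic} and the required rigidity is genuine) the triple $(d_a,n+1,nd_a)$ is DQLS whenever $(d_a,n,(n-1)d_a)$ is; for $n\ge 3$ the latter is the case-(iii) statement with ratio $n-1<n$, supplied by the inductive hypothesis, so the recursion terminates at $n=2$. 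For the base case $(d_a,3,2d_a)$, $\bar d=d_a$ and all blocks are $d_a\times d_a$: take $A_0=[\,I\mid\mathbf{0}\,]$, $A_1=[\,\mathbf{0}\mid I\,]$, $A_2=[\,J\mid\mathrm{diag}(1,\dots,d_a)\,]$ with $J$ the all-ones matrix. Then Eq.~\eqref{many1} gives $X_a=\tilde X_{\bar d\bar d}$ (from $A_1$) and $[X_a,\mathrm{diag}(1,\dots,d_a)]=\mathbf{0}$ (from $A_2$), so $X_a$ is diagonal; Eq.~\eqref{many2} then gives $\tilde X_{\bar d0}=\mathbf{0}$ (from $A_1$) and $[X_a,J]=\mathbf{0}$ (from $A_2$), whence all diagonal entries of $X_a$ agree, $X_a=\lambda I$, $\tilde X_{\bar d\bar d}=\lambda I$ and hence $X_c=\lambda I$. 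Rigidity of one state then gives, via Theorem~\ref{generic:proof}, that generic states are DQLS in all three families.

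The step I expect to carry the weight is the case (iii) recursion together with its base case: one must verify that each invocation of Propositions~\ref{higher-d_b} and~\ref{partDQLS} stays inside its stated dimension range (in particular that $\bar d=(n-1)d_a\ge d_a$, so the sub-system is covered by Proposition~\ref{DQLS:generic} and not by the no-go Theorem~\ref{No-Go}), that the induction is well-founded and exhausts every $1<n<d_b$, and that the base-case block identities are carried out correctly. No individual step is conceptually deep, but the combined bookkeeping is substantial, which is why the full argument is deferred to the appendix; an alternative, more laborious but self-contained route is to run the constraint-counting analysis of Eqs.~\eqref{many1}--\eqref{many2} directly, in the spirit of the proof of Theorem~\ref{qubitb}.
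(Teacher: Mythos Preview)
Your proof is correct, and for parts (ii) and the inductive extension in (iii) it coincides with the paper's argument (same use of Proposition~\ref{higher-d_b} and Proposition~\ref{partDQLS}, with the paper leaving the induction implicit). The genuine difference is in how you handle part (i) and the base case $(d_a,3,2d_a)$ of part (iii). The paper works directly with \emph{generic} matrices $\{A_i\}$ and appeals to the Appendix lemmas on commutants of generic matrices (Lemmas~\ref{distinct_eig}--\ref{2commute}): in (i), any $X_a$ commuting with two independent generic $A_{00}^i$ must be scalar; in the base case of (iii), the paper inverts the generic square blocks $A^i_{0\bar d}$, forms $Q=A^2_{0\bar d}(A^1_{0\bar d})^{-1}$, and uses Lemma~\ref{2commute} again. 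You instead exploit the dichotomy of Theorem~\ref{generic:proof}: since the DQLS set has measure zero or one, it suffices to exhibit \emph{one} explicit tuple $\{A_i\}$ (equivalently, one state with $D(\psi)\neq 0$) for which the system \eqref{conditions} is rigid, and you do this with the elementary pair $\mathrm{diag}(1,\dots,d_a)$ and the all-ones matrix $J$. Your route is more self-contained in that it bypasses the generic-matrix lemmas entirely, at the price of relying on Theorem~\ref{generic:proof} as an extra input; the paper's route is more uniform with the rest of the analysis (everything phrased for generic $A_i$) and yields the Appendix lemmas as reusable tools. Both are clean; your witness argument is arguably the shorter path to this particular theorem.
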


\subsubsection{Numerical results.}
In order to investigate the cases where an analytical proof is lacking, we resort to numerical techniques, by 
generating random pure states in various dimensions\footnote{Specifically, we employed T. Cubitt's {\sf randPsi.m} 
function in \textsc{Matlab}, available online at {http://www.dr-qubit.org/matlab/randPsi.m}.}
and assessing their DQLS property relative to $\N_{\rm tri} = \{\N_{ab},\N_{bc}\}$. In light of Theorem~\ref{generic:proof}, 
we know that the set of DQLS states has either measure zero or measure one, hence a randomly generated state with probability one will have the same property of almost all its peers.  Table~\ref{numerical1} summarizes our numerical 
results in $d_a \times d_b \times d_c$ dimensions when the central subsystem is a qutrit, $d_b = 3$.
\begin{table}[t]
\centering
\begin{tabular}{|c|c|c|c|c|c|c|c|c|c|c|c|c|c|c|c|c|c|} \hline
\backslashbox{$d_a$}{$\bar{d}$} & 0 & 1 & 2 & 3 & 4 & 5 & 6& 7& 8& 9& 10 &11 & 12 \\ \hline 
2 & $\checkmark$ & $\checkmark$ & $\checkmark$ & $\checkmark$ & ${\bm \times}$ &  ${\bm \times}$ &${\bm \times}$ & ${\bm \times}$ & ${\bm \times}$   & ${\bm \times}$ &${\bm \times}$ & ${\bm \times}$ & ${\bm \times}$  \\
\hline 
3 & $\checkmark$ & $\checkmark$ & $\checkmark$ & $\checkmark$ & $\checkmark$ &  $\cirtick$  & ${\bm \times}$   & ${\bm \times}$  & ${\bm \times}$  & ${\bm \times}$  & ${\bm \times}$ & ${\bm \times}$  & ${\bm \times}$  \\
\hline 
4 & $\checkmark$ & $\checkmark$ & $\cirtick$ &$\checkmark$ & $\checkmark$ & $\checkmark$ & $\cirtick$  & $\circross$   &${\bm \times}$  & ${\bm \times}$ & ${\bm \times}$ & ${\bm \times}$  & ${\bm \times}$ \\
\hline 
5 & $\checkmark$  & $\checkmark$  & $\cirtick$  & $\cirtick$  & $\checkmark$  &$\checkmark$ & $\checkmark$ & $\cirtick$ & $\cirtick$  &  $\circross$  & ${\bm \times}$ &${\bm \times}$ &  ${\bm \times}$ \\ \hline 
\end{tabular} 
\caption{\label{numerical1} 
DQLS behavior of random pure states for $d_b =3$. $\bar{d} \equiv d_c - d_a \geq 0$  without loss of generality. 
A check mark indicates that the tested random pure states exhibit DQLS behavior, whereas cross mark means the opposite. 
Circled symbols highlight the DQLS behavior of those dimension combinations that are not covered by our analytical results. }
\end{table}
It can be seen that, for fixed values of $d_a$, increasing $\bar{d}$ results in a gradual shift from DQLS to non-DQLS behavior, 
as intuitively expected. This is not the case, however, when $d_b = 2$, whereby $d_a \times 2 \times (d_a +1)$ is the only dimension combination for which random states are DQLS, generically. With the exception of $4 \times  3 \times 11$ and $5 \times  3 \times 14$ dimensions, the transition from DQLS to non-DQLS behavior occurs precisely when $\bar{d} = (d_b-1)d_a$, that is, $d_ad_b = d_c$, which corresponds to the threshold for the applicability of the no-go Theorem~\ref{No-Go}. The above two dimension combinations 
are the only cases in this table that are not DQLS and do not fall under the no-go category. Additional numerical investigation of 
such ``exceptional cases'' suggest that (i) they do occur for $\bar{d}$ close to $(d_b-1)d_a$ (the no-go threshold) for $d_a > d_b$; 
and (ii) the number of such occurrences is approximately given by $d_a/d_b$. These trends are 
captured in Table~\ref{numerical2}, where we have chosen $d_b = 4$.

\begin{table}[t]
\centering
\begin{tabular}{|c|c|c|c|c|c|c|c|c|c|c|c|c|c|c|} \hline
\backslashbox{$d_a$}{$\bar{d}$} & 11 & 12 & 13 & 14 & 15 & 16 & 17& 18& 19& 20 & 21 & 22 & 23 & 24\\[10pt] \hline 
4 & $\checkmark$ & ${\bm \times}$ &${\bm \times}$ &  ${\bm \times}$ & ${\bm \times}$ &${\bm \times}$ &  ${\bm \times}$ & ${\bm \times}$ &${\bm \times}$ &  ${\bm \times}$ & ${\bm \times}$ &${\bm \times}$ &  ${\bm \times}$ & ${\bm \times}$ \\
\hline 
5 &  $\checkmark$ &$\checkmark$ & $\checkmark$ & $\circross$ & ${\bm \times}$ &${\bm \times}$ &  ${\bm \times}$ & ${\bm \times}$ &${\bm \times}$ &  ${\bm \times}$ & ${\bm \times}$ &${\bm \times}$ &  ${\bm \times}$ & ${\bm \times}$\\
\hline
6 &  $\checkmark$ & $\checkmark$ & $\checkmark$ & $\checkmark$ & $\checkmark$ &   $\checkmark$ & $\circross$ & ${\bm \times}$ &${\bm \times}$ &  ${\bm \times}$ & ${\bm \times}$ &${\bm \times}$ &  ${\bm \times}$ & ${\bm \times}$ \\
\hline
7 & $\checkmark$ & $\checkmark$ &$\checkmark$ & $\checkmark$ & $\checkmark$ &  $\checkmark$ & $\checkmark$ &$\checkmark$ & $\checkmark$ &$\circross$ & ${\bm \times}$ &${\bm \times}$ &  ${\bm \times}$ & ${\bm \times}$\\
\hline
8 &  $\checkmark$&$\checkmark$ &$\checkmark$ & $\checkmark$ &$\checkmark$ &  $\checkmark$ &$\checkmark$ & $\checkmark$ & $\checkmark$ & $\checkmark$ & $\checkmark$ & $\circross$ & $\circross$ & ${\bm \times}$\\
\hline
\end{tabular} 
\caption{
DQLS behavior of random pure states for $d_b = 4$, with symbols having the same meaning as in Table \ref{numerical1}. 
In particular, circled symbols now correspond to the dimensions where the tested target states are not DQLS, 
but they do not fall under the no-go category.}
\label{numerical2}
\end{table} 

Based on our combined analytical and numerical results, we conclude our analysis of the generic DQLS property of random 
tripartite pure states with the following: 

\smallskip

\noindent
{\bf Conjecture 3.11.} {\em Let $\pure$ be a generic pure state in $d_a \times d_b \times d_c$ dimensions, with $d_b >2$. Then: 
\vspace*{-1mm}  
\begin{enumerate}[i.]
\item When $d_a \leq d_b$, $\pure$ is generically DQLS relative to $\N_{\rm tri}$ if and only if $d_c < d_ad_b$.
\item When $d_a > d_b$, $\pure$ is generically DQLS relative to $\N_{\rm tri}$ if $d_c < d_ad_b - \floor{d_a/d_b}$.
\end{enumerate}}

\section{Additional results and implications}
\label{sec:more}

\subsection{Stabilizability properties in the generic multipartite setting}

The analytical tools developed in Sec.~\ref{restate} and the results in Sec. \ref{sec:generictrip}
pertain to generic tripartite pure states relative to the bipartite neighborhood structure $\N_{\rm tri} = \{\N_{ab},\N_{bc}\}$. 
Addressing multipartite quantum systems, with arbitrary neighborhood structures, is very hard. Nonetheless, building on  our tripartite analysis, it is still possible to identify conditions that {\em preclude} generic DQLS behavior in multipartite settings. 
First, by extending Theorem~\ref{No-Go}, a no-go condition for the DQLS property of generic multipartite pure states 
may be established, based on the sizes of the relevant neighborhoods in $\N$:
\begin{thm}
\label{No-Go:ext}
Consider a generic multipartite pure state $\pure \in \H$ and a neighborhood structure $\N = \{\N_j\}_{j=1}^M$. $\pure$ is not DQLS relative to $\N$ if $\dim(\H_{\N_j}) \leq \dim(\H_{\overline{\N}_j})$, for all $j$. Additionally, the DQLS subspace of $\pure$ coincides with $\H$.  
\end{thm}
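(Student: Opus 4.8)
The plan is to run the argument of Theorem~\ref{No-Go}, but now for every neighborhood in $\N$ at once. The single input needed is the genericity fact recalled in Sec.~\ref{generic}: with probability one, a generic $\pure\in\H$ has maximal Schmidt rank along \emph{every} bipartition of the $N$ subsystems. Applying this to the cut $\N_j\,|\,\overline{\N}_j$ for a fixed $j$, the Schmidt rank of $\pure$ across that bipartition equals $\min\{\dim(\H_{\N_j}),\dim(\H_{\overline{\N}_j})\}$, which by the standing hypothesis $\dim(\H_{\N_j})\le\dim(\H_{\overline{\N}_j})$ is just $\dim(\H_{\N_j})$.

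First I would deduce that $\Sigma_{\N_j}(\pure)=\H_{\N_j}$ for each $j$: the Schmidt span relative to $\N_j$ is spanned by a set of $\dim(\H_{\N_j})$ linearly independent Schmidt vectors living in $\H_{\N_j}$, hence it is all of $\H_{\N_j}$ (equivalently, the one-neighborhood RDM $\rho_{\N_j}$ is full rank). Passing to the extended Schmidt span then gives $\overline{\Sigma}_{\N_j}(\pure)=\Sigma_{\N_j}(\pure)\otimes\H_{\overline{\N}_j}=\H_{\N_j}\otimes\H_{\overline{\N}_j}=\H$ for every $j$.

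Next I would intersect over $j$: by Theorem~\ref{DQLS:th} the DQLS subspace is $\H_0(\pure)=\bigcap_{j=1}^{M}\overline{\Sigma}_{\N_j}(\pure)=\bigcap_{j=1}^{M}\H=\H$. For any genuine multipartite system $\dim(\H)=\prod_a d_a>1$, so $\H_0(\pure)\neq\Span\{\pure\}$, and Theorem~\ref{DQLS:th} forces $\pure$ to fail the DQLS property relative to $\N$, with its DQLS subspace being the full $\H$, as claimed. Since all of these statements hold on a measure-one set of $\pure$, the conclusion is indeed generic.

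I do not expect a real obstacle: once the maximal-Schmidt-rank property is invoked the rest is a short chain of identities, and the theorem is the clean multipartite analogue of the tripartite no-go Theorem~\ref{No-Go} (where $\N_j=\N_{ab}$, $\overline{\N}_j=\{c\}$ realizes the hypothesis when $d_ad_b\le d_c$). The only step deserving a line of justification is the passage from ``Schmidt rank $=\dim(\H_{\N_j})$'' to ``$\Sigma_{\N_j}(\pure)=\H_{\N_j}$'', i.e.\ noting that a maximal family of Schmidt vectors on the smaller side of the cut must span that side.
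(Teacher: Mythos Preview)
Your proposal is correct and follows essentially the same approach as the paper's own proof: invoke the maximal Schmidt rank of a generic $\pure$ across each $\N_j\,|\,\overline{\N}_j$ cut, conclude $\Sigma_{\N_j}(\pure)=\H_{\N_j}$ and hence $\overline{\Sigma}_{\N_j}(\pure)=\H$ for every $j$, and intersect to obtain $\H_0(\pure)=\H$. The paper's argument is slightly more terse but logically identical.
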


\begin{proof}
For a generic pure state $\pure$, the Schmidt rank with respect to any bipartition is maximal. Therefore, for any 
$\N_j|\overline{\N}_j$ bipartition,  $\dim(\Sigma_{\N_j}(\pure))=\dim(\H_{\N_j})$, since $\dim(\H_{\N_j}) \leq \dim(\H_{\overline{\N}_j})$ by 
assumption. Hence, $\Sigma_{\N_j}(\pure) = \H_{\N_j}$ and $\overline{\Sigma}_{\N_j}(\pure) = \H$. The DQLS subspace in this case is $\H_0 (\pure)= \bigcap_{j=1}^M \overline{\Sigma}_{\N_j}(\pure) = \H$, implying that $\pure$ is not DQLS according to  Theorem~\ref{DQLS:th}. 
\end{proof}

\noindent 
As a consequence, it follows that in many systems of practical interest, where the number of subsystems is large but only 
finite-range (typically, nearest or next-to-nearest neighbor) couplings are considered, generic pure states are not DQLS, as 
one may intuitively expect. In fact, the above result also implies that, if one begins with an $N$-partite quantum system 
where generic states are DQLS relative to the chosen neighborhood structure, the sizes of the neighborhoods have to be scaled appropriately as $N$ increases, in order for the DQLS property to be retained.  

\medskip
\begin{remark}
\label{No-Go:non-generic}
Similar to Corollary \ref{full_red}, it is worth noting that the no-go Theorem~\ref{No-Go:ext} is applicable to any 
pure state $|\phi\rangle$ which has {\em full-rank RDMs} in all the neighborhoods under consideration. 
Since a full-rank RDM on neighborhood $\N_j$ is equivalent to saying that $\overline{ \Sigma}_{\N_j}(|\phi\rangle) = \H$, 
such states always fail to be DQLS. For instance, one may verify that qubit graph states have maximally 
mixed RDMs on any two-qubit neighborhoods and hence they are non-DQLS under nearest-neighbor interactions, 
consistent with known results \cite{Kraus2008,DQLS:p} and the fact that they cannot be exact ground states of 
two-body parent Hamiltonians \cite{graph}.  Another interesting class of states that are non-DQLS for the same reason 
comprises so-called $k$-uniform states, namely, $N$-party pure states such that their all $k$-body RDMs for	
$k \leq \ceil{N/2}$ are maximally mixed \cite{k-uniform}.
\end{remark}

\medskip

To derive other useful tests for the DQLS property by leveraging our tripartite results, we need a way to relate multipartite and multi-neighborhood settings to the known tripartite scenario. The key observation is that DQLS properties 
are preserved when aggregating, or coarse-graining, these neighborhoods:
\begin{define} 
Let $\pure \in \H=\otimes_{a = 1}^N \H_a$, where each $\H_a$ is a finite-dimensional Hilbert space, with 
neighborhood structure $\N = \{\N_j\}_{j=1}^M$.  A neighborhood structure $\N'=\{\N'_k\}_{k=1}^{M'}$ is a \emph{coarse-graining} of $\N$ if for each $\N_j \in \N$ there exists some $\N'_k \in \N'$, such that $\N_j \subseteq \N'_k$. Equivalently, $\N$ can be called a \emph{refinement} of $\N'$. 
\end{define}

\noindent 
We next show that the DQLS property of a general pure state relative to a given $\N$ is preserved if we move to any 
coarse-graining of the same, as expected:
\begin{prop}
\label{refinement}
If the multipartite pure state $\pure \in \H$ is DQLS relative to the neighborhood structure $\N= \{\N_j\}_{j=1}^M$, 
then $\pure$ is DQLS relative to any coarse-graining of $\N$ as well.
\end{prop}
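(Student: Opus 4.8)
The plan is to read off everything from the Schmidt-span criterion~\eqref{DQLS:eq} of Theorem~\ref{DQLS:th}: $\pure$ is DQLS relative to $\N=\{\N_j\}_{j=1}^M$ exactly when $\bigcap_{j=1}^M \overline{\Sigma}_{\N_j}(\pure)=\Span\{\pure\}$, and DQLS relative to a coarse-graining $\N'=\{\N'_k\}_{k=1}^{M'}$ exactly when $\bigcap_{k=1}^{M'}\overline{\Sigma}_{\N'_k}(\pure)=\Span\{\pure\}$. First I would note that $\N'$ is again a complete neighborhood structure: since $\bigcup_j\N_j=\{1,\dots,N\}$ and each $\N_j$ is contained in some $\N'_k$, we get $\bigcup_k\N'_k=\{1,\dots,N\}$, so the framework of Sec.~\ref{sec:back} applies verbatim to $\N'$. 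One inclusion is also immediate, since $\pure\in\overline{\Sigma}_{\N'_k}(\pure)$ for every $k$, so $\Span\{\pure\}\subseteq\bigcap_k\overline{\Sigma}_{\N'_k}(\pure)$. Hence the whole proposition boils down to the reverse inclusion $\bigcap_k\overline{\Sigma}_{\N'_k}(\pure)\subseteq\Span\{\pure\}$.

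The key step is a monotonicity property of extended Schmidt spans under enlargement of a neighborhood: if $\N_j\subseteq\N'_k$, then $\overline{\Sigma}_{\N'_k}(\pure)\subseteq\overline{\Sigma}_{\N_j}(\pure)$. To prove this I would use the identification $\overline{\Sigma}_{\N_j}(\pure)=\supp(\rho_{\N_j}\otimes I_{\overline{\N}_j})$ recorded in Sec.~\ref{sec:back}, together with the decomposition $\overline{\N}_j=(\N'_k\setminus\N_j)\cup\overline{\N'}_k$, so that $\H_{\overline{\N}_j}\simeq\H_{\N'_k\setminus\N_j}\otimes\H_{\overline{\N'}_k}$. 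Since $\rho_{\N_j}=\tr_{\N'_k\setminus\N_j}(\rho_{\N'_k})$, a general property of supports under partial trace gives $\supp(\rho_{\N'_k})\subseteq\supp(\rho_{\N_j})\otimes\H_{\N'_k\setminus\N_j}$; tensoring both sides with $\H_{\overline{\N'}_k}$ and regrouping the tensor factors into $\H_{\overline{\N}_j}$ then yields $\overline{\Sigma}_{\N'_k}(\pure)\subseteq\overline{\Sigma}_{\N_j}(\pure)$, as claimed.

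With this in hand, for each $j$ choose an index $k(j)$ with $\N_j\subseteq\N'_{k(j)}$, guaranteed by the definition of coarse-graining. Then
\begin{equation*}
\bigcap_{k=1}^{M'}\overline{\Sigma}_{\N'_k}(\pure)\ \subseteq\ \bigcap_{j=1}^{M}\overline{\Sigma}_{\N'_{k(j)}}(\pure)\ \subseteq\ \bigcap_{j=1}^{M}\overline{\Sigma}_{\N_j}(\pure)\ =\ \Span\{\pure\},
\end{equation*}
where the first step only discards (or duplicates) some of the intersected subspaces, the second applies the monotonicity property term by term, and the equality is the DQLS hypothesis for $\N$. Together with the trivial reverse containment, this gives $\bigcap_k\overline{\Sigma}_{\N'_k}(\pure)=\Span\{\pure\}$, so $\pure$ is DQLS relative to $\N'$ by Theorem~\ref{DQLS:th}.

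The only point needing genuine care is the support-under-partial-trace fact $\supp(\rho_{\N'_k})\subseteq\supp(\rho_{\N_j})\otimes\H_{\N'_k\setminus\N_j}$; it is elementary (if $\langle v|\rho_{\N_j}|v\rangle=0$ then $\langle v|_{\N_j}\rho_{\N'_k}|v\rangle_{\N_j}$ is positive semidefinite with vanishing trace, hence zero, which forces $|v\rangle$ to be orthogonal to $\supp(\rho_{\N'_k})$ along that factor), but it is the one inequality doing real work, the rest being bookkeeping with intersections. As an alternative I would note the same conclusion follows painlessly from the parent-Hamiltonian picture of Corollary~\ref{FFQL}: given a FF QL parent Hamiltonian $H=\sum_j H_j$ of $\pure$ respecting $\N$ with each $H_j\ge 0$ and $H_j\pure=0$, regroup it as $H=\sum_k\tilde H_k$ with $\tilde H_k\equiv\sum_{j:\,k(j)=k}H_j$ supported on $\N'_k$; then each $\tilde H_k\ge 0$ annihilates $\pure$ and the ground space of $H$ is unchanged, so $H$ is a FF QL parent Hamiltonian of $\pure$ respecting $\N'$, which by Corollary~\ref{FFQL} is equivalent to $\pure$ being DQLS relative to $\N'$.
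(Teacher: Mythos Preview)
Your proof is correct, but it takes a different route from the paper. The paper argues directly from Definition~\ref{DQLS:define}: if $\pure$ is DQLS relative to $\N$, there is a purely dissipative QL generator $\L(0,\{\overline{L}_k\})$ that makes $\dpure$ GAS, with each $\overline{L}_k$ a neighborhood operator on some $\N_j$; since $\N_j\subseteq\N'_{k'}$ for some $k'$, each $\overline{L}_k$ is also a neighborhood operator for $\N'$, so the \emph{same} generator is QL with respect to $\N'$ and witnesses DQLS there. Your alternative via Corollary~\ref{FFQL} is essentially this argument transposed to the Hamiltonian picture.

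Your primary argument, by contrast, works at the level of the Schmidt-span characterization and establishes the structural monotonicity $\overline{\Sigma}_{\N'_k}(\pure)\subseteq\overline{\Sigma}_{\N_j}(\pure)$ whenever $\N_j\subseteq\N'_k$, hence $\H_0^{\N'}(\pure)\subseteq\H_0^{\N}(\pure)$ for the corresponding DQLS subspaces. This is slightly more information than the paper extracts: it shows the DQLS subspace can only shrink under coarse-graining, not merely that one-dimensionality is preserved. The price is the support-under-partial-trace lemma, which you handle correctly. The paper's proof is shorter and more operational; yours is more quantitative and self-contained at the level of Theorem~\ref{DQLS:th}.
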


\begin{proof}
Let $\N'=\{\N'_k\}_{k=1}^{M'}$ denote the coarse-graining of $\N$. By Definition~\ref{DQLS:define}, $\pure$ is DQLS relative to $\N$ if and only if there exists a QL generator $\L(\overline{H},\{\overline{L}_k\})$ in standard form, with $\overline{H} = 0$, such that each term in $\L$ acts non-trivially only on a given neighborhood $\N_j \in \N$ and $\L$ makes $\dpure$ GAS. Since each $\N'_k \in \N'$ is such that $\N'_k \supseteq \N_j$ for some $\N_j \in \N$, $\L$ is QL with respect to the coarse-grained neighborhood structure $\N'$ as well. Therefore, the same QL generator renders $\pure$ DQLS relative to $\N'$.
\end{proof} 

\noindent 
This, in turn, implies that if $\pure$ is not DQLS relative to a certain neighborhood structure, then $\pure$ is {\em not} DQLS relative to any refinement as well. Given a multipartite generic pure state and an associated $\N$, one can then coarse-grain $\N$ into a two-neighborhood neighborhood structure $\N'$, similar to the one described in Fig.~\ref{fig:tripartite}, and analyze it in the light of the tripartite results given in the previous sections. If the generic pure state is not DQLS relative to $\N'$, then it is not DQLS relative to $\N$ also. 

We may organize all the results on the DQLS property of generic pure states into an algorithm 
that can decide the DQLS nature in many cases.  The algorithm takes as input a generic target pure state $\pure$ and a neighborhood structure $\N = \{\N_j\}_{j=1}^M$ on $\H$, and proceeds through the following steps:
\begin{enumerate} [(i)]
\item \emph{Check if there exists any neighborhood $\N_{\bar{k}} \in \N$ such that 
$\dim(\H_{\N_{\bar{k}}}) \leq \dim(\H_{\overline{\N}_{\bar{k}}})$. If so, form a modified neighborhood structure $\widetilde{\N}$ 
by ignoring all such neighborhoods}.
If there exists an $\N_{\bar{k}}$ satisfying the above condition, its corresponding extended Schmidt span 
$\overline{\Sigma}_{\N_{\bar{k}}}(\pure) = \H$, according to the reasoning given in the proof of Theorem~\ref{No-Go:ext}. As a result, $\overline{\Sigma}_{\N_{\bar{k}}}(\pure)$ contributes trivially to the DQLS subspace $\H_0 (\pure) = \bigcap_{j=1}^M \overline{\Sigma}_{\N_j}(\pure)$ relative to $\N$, and hence can be ignored in the analysis. In other words, for any neighborhood on which the Schmidt span of $\pure$ is full-rank, the invariance condition on the dissipative generator renders the corresponding neighborhood generator trivial, and therefore disposable (see Definition~\ref{DQLS:define}). This also shows that the DQLS subspaces of $\pure$ relative to $\N$ and $\widetilde{\N}$ coincide with each other. \emph{If all the neighborhoods in $\N$ can be eliminated in this way, then, as proved in Theorem~\ref{No-Go:ext}, $\pure$ is not DQLS relative to $\N$}. 

\item \emph{Check if there exists any pair of neighborhoods $\N_{\bar{k}},\N_{\bar{l}} \in \widetilde{\N}$, 
such that 
$\N_{\bar{k}} \cup \N_{\bar{l}} = \{1,\ldots, N\}$. Then if $\pure$ is DQLS relative to 
the neighborhood structure $\{\N_{\bar{k}},\N_{\bar{l}}\}$, $\pure$ is also DQLS relative to $\N$}.

\item As a result of ignoring the smaller neighborhoods in Step (i), \emph{if there exist subsystems 
that do not belong to any $\N_j \in \widetilde{\N}$, then $\pure$ is not DQLS relative to $\N$}. 
In such cases, the extended Schmidt spans of any of the remaining neighborhoods may be written as 
$\overline{\Sigma}_{\N_j} = (\Sigma_{\N_j} \bigotimes_{a \notin (\N_j \bigcup r)} \H_a ) \bigotimes_{a \in r} \H_a$,  
where $r$ collects all the subsystems that do not belong to any of the relevant neighborhoods. 
Then
\begin{equation*}
\H_0 (\pure)= \Big(\!\bigcap_{\N_j \in \tilde{\N}} \Sigma_{\N_j}\!\! \bigotimes_{a \notin (\N_j \bigcup r)} \!\! \H_a \Big) 
\bigotimes_{a \in r} \H_a,
\end{equation*}
and $\dim(\H_0(\pure))$ is at least equal to the product of the dimensions of the 
leftover subsystems. Since the latter exceeds one, 
we know that $\pure$ is not DQLS. 
\item \emph{Coarse-grain $\widetilde{\N}$ to form a two-neighborhood structure $\N'$, as 
in Fig.~\ref{fig:tripartite}. If $\pure$ is not DQLS relative to $\N'$, then $\pure$ is non-DQLS relative to $\N$ as well}. 
\end{enumerate}

One may have to resort to numerical techniques if this procedure fails to be conclusive, by explicitly determining whether 
$\H_0(\pure)$ is one-dimensional for the specified neighborhood structure on $\H$.

\subsection{Practical stabilization under QL constraints} 
\label{practical}

From a practical standpoint, a potential application of our results on the DQLS property of random pure states is 
to enable {\em arbitrarily accurate approximate stabilization} of target pure states that are otherwise non-DQLS. 
We illustrate the basic idea through a QIP-motivated example.   

Consider the 4-qubit GHZ state, $|\psi_{\rm GHZ}\rangle = (|0000\rangle +|1111\rangle)/\sqrt{2}$, with respect to 
three-body neighborhoods $\N = \{\N_{123}, \N_{234}\}$. The GHZ state belongs to the measure zero set of 
non-DQLS states in this setting \cite{DQLS:p}, although, as we know from  Theorem~\ref{special_cases}, random 
pure states of 4-qubits {\em are} generically DQLS relative to $\N$.  We now show that one can always find a DQLS 
pure state $|\psi_\epsilon\rangle$, such that $|\psi_{\rm GHZ}\rangle$ can be approximately stabilized by QL
dissipation with arbitrary accuracy $\epsilon >0$. To do so, pick a random unitary $U \in \mathcal{U}(16)$ with 
respect to the Haar measure and, for fixed $\epsilon$, define the integer $n\equiv \lceil 1/\epsilon\rceil$.  
Correspondingly, let $U_\epsilon \equiv U^{1/n}$ and $|\psi_\epsilon\rangle \equiv U_\epsilon |\psi_{\rm GHZ}\rangle$. 
Since the eigenphases of $U$ are uniformly distributed in $[0,2\pi)$, the eigenphases of $U_\epsilon$ are uniformly 
random within $[0,2\pi/ n)$, with the associated eigenvectors being unchanged \cite{mehta}. 
The set of pure states $|\psi_\epsilon\rangle$ that are generated from the unitaries following this modified distribution 
forms a {\em finite measure subset} of the pure states in $\H$, and it thus follows that is DQLS generically.  
We now claim that, with probability one, the preparation fidelity 
$F\equiv |\langle \psi_{\rm GHZ} | \psi_\epsilon\rangle|^2 =1 - c_U\epsilon + \mathcal{O}(\epsilon^2)$, for 
a positive constant $c_U$ depending on $U$. 
This follows from noticing that $U_\epsilon$ is approximately close to the identity operator acting on the $4$-qubit Hilbert 
space. If $H \equiv -i \text{Log}(U)$, where $\text{Log}(\cdot )$ restricts the eigenvalues of the Hermitian operator 
$H$ to $[0,2\pi)$, we may write $U_\epsilon = e^{-i \epsilon H}$.  By construction, 
$H$ is a bounded operator hence $|| H  ||_{\rm op}$ is finite, with $\epsilon || H  ||_{\rm op} \ll 1$ if $\epsilon$ is sufficiently 
small, as stipulated. By Taylor-expanding, we then have 
\begin{eqnarray*}
|| I - U_\epsilon ||_{\rm op} = || i \epsilon H + {(i\epsilon H)^2}/{2!} + \dots ||_{\rm op} \, \leq  \, 
\frac{\epsilon || H  ||_{\rm op}}{1-\epsilon || H  ||_{\rm op}}.
\end{eqnarray*}
Thus, $|| I - U_\epsilon  ||_{\rm op} \leq \epsilon || H ||_{\rm op}+\mathcal{O}(\epsilon^2)$. 
By using standard properties of operator norm, the desired result follows:
\begin{equation*}
F =  |1 - \langle \psi_{\rm GHZ} | (I - U_\epsilon) | \psi_{\rm GHZ} \rangle|^2 \, \geq\, 
\Big | 1 - \frac{\epsilon || H  ||_{\rm op}}{1-\epsilon || H  ||_{\rm op}} \Big |^2 \approx 1 -2 \epsilon ||H||_{\rm op} + 
{\mathcal O}(\epsilon^2).
\end{equation*}

A similar analysis may be carried out for any pure state that belongs to the zero measure set of non-DQLS states 
in a setting where pure states are generically DQLS.    
  
\subsection{Generic pure states as ground states of frustration-free Hamiltonians}

As we remarked [Corollary~\ref{FFQL}], DQLS states may be physically characterized as unique ground states of FF QL parent Hamiltonians. Thus, our results may be equivalently interpreted as identifying some QL notions under which {\em almost any} pure state may arise as the unique ground state of a FF Hamiltonian -- and, likewise, many other QL notions for which such a description is not possible. In the spirit of of practical stabilization as discussed above, it also follows that non-DQLS states in those settings for which random pure states are generically DQLS may be written as approximate unique ground states of some FF QL parent Hamiltonian. 

When generic states are not DQLS because of the no-go Theorem~\ref{No-Go:ext}, we can actually draw a stronger conclusion. 
As the proof of the above no-go shows, when neighborhoods are too small, in the sense that they all contain no more than half of the subsystems, the DQLS subspace coincides with the entire $\H$. We now show that the following property holds for all pure states:
\begin{prop} 
\label{ground_states}
Let $H = \sum_k{H_k} = \sum_k h_{\N_k} \otimes I_{\overline{\N}_k}$ be any FF Hamiltonian QL relative to $\N = \{\N_k\}$.
If $\pure$ is a ground state of $H$, its DQLS subspace $\H_0(\pure)$ is contained in the ground-state space of $H$.
\end{prop}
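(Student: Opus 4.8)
The plan is to show that the DQLS subspace $\H_0(\pure) = \bigcap_k \overline{\Sigma}_{\N_k}(\pure)$ is contained in the ground-state space of each frustration-free term $H_k$, and hence in the ground-state space of $H = \sum_k H_k$. The starting point is the observation, already recorded after Corollary~\ref{FFQL}, that the \emph{canonical} frustration-free parent Hamiltonian built from $\pure$ has terms $H_k^{\pure} \equiv I - \Pi_{\Sigma_{\N_k}(\pure)} \otimes I_{\overline{\N}_k}$, whose zero-eigenspace is exactly $\overline{\Sigma}_{\N_k}(\pure) = \Sigma_{\N_k}(\pure) \otimes \H_{\overline{\N}_k}$. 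So for the canonical Hamiltonian the claim is immediate: $\H_0(\pure)$ is by definition the intersection of these zero-eigenspaces. The work is to pass from the canonical parent Hamiltonian to an \emph{arbitrary} frustration-free $H$ that happens to have $\pure$ as a ground state.

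First I would fix $k$ and analyze the single term $H_k = h_{\N_k} \otimes I_{\overline{\N}_k}$. Since $H$ is frustration-free and $\pure$ is a ground state of $H$, the vector $\pure$ is a ground state of $H_k$; without loss of generality normalize so the ground energy of $H_k$ is $0$, i.e. $H_k \pure = 0$, which forces $h_{\N_k} \succeq 0$ and $h_{\N_k}$ annihilates the relevant part of $\pure$. The key step is to show that the $0$-eigenspace of $h_{\N_k}$ contains the Schmidt span $\Sigma_{\N_k}(\pure)$: write the Schmidt decomposition $\pure = \sum_l \mu_l |\eta_l\rangle_{\N_k}\otimes|\gamma_l\rangle_{\overline{\N}_k}$ with $\mu_l>0$ and $\{|\gamma_l\rangle\}$ orthonormal; then $0 = H_k\pure = \sum_l \mu_l (h_{\N_k}|\eta_l\rangle)\otimes|\gamma_l\rangle$, and orthonormality of the $|\gamma_l\rangle$ forces $h_{\N_k}|\eta_l\rangle = 0$ for every $l$, hence $h_{\N_k}$ vanishes on $\Sigma_{\N_k}(\pure) = \Span\{|\eta_l\rangle_{\N_k}\}$. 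Consequently the $0$-eigenspace of $H_k$ contains $\Sigma_{\N_k}(\pure)\otimes \H_{\overline{\N}_k} = \overline{\Sigma}_{\N_k}(\pure)$. (Note the containment may be strict — $h_{\N_k}$ could have extra kernel — but that only makes the ground-state space of $H$ larger, which is fine.)

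Next I would assemble the terms. Any $|\psi'\rangle \in \H_0(\pure) = \bigcap_k \overline{\Sigma}_{\N_k}(\pure)$ lies in the $0$-eigenspace of every $H_k$ by the previous paragraph, so $H|\psi'\rangle = \sum_k H_k|\psi'\rangle = 0$. Since $H \succeq 0$ on its ground space after the normalization (each $H_k \succeq 0$, so $H \succeq 0$ and its ground energy is $0$, attained by $\pure$), the vector $|\psi'\rangle$ is a ground state of $H$. Hence $\H_0(\pure)$ is contained in the ground-state space of $H$, as claimed. If one prefers not to renormalize term by term, the same argument goes through verbatim after subtracting from each $h_{\N_k}$ its smallest eigenvalue, which changes $H$ only by an additive constant and does not affect which vectors are ground states.

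The only mildly delicate point — and the place I would be most careful — is the frustration-free hypothesis: it is exactly what guarantees that $\pure$ being a ground state of $H$ implies $\pure$ is a ground state of each $H_k$, which is what licenses the per-term normalization and the eigenvector argument above. Without frustration-freeness, $h_{\N_k}$ need not annihilate $\Sigma_{\N_k}(\pure)$ and the conclusion fails. Everything else is routine linear algebra on Schmidt decompositions, essentially the same manipulation used in the proof of Theorem~\ref{DQLS:subspace}.
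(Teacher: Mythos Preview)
Your proof is correct and follows essentially the same route as the paper's: normalize each term's ground energy to zero, use frustration-freeness to get $H_k\pure=0$ for every $k$, show that $\Sigma_{\N_k}(\pure)\subseteq\ker(h_{\N_k})$, and conclude that every vector in $\H_0(\pure)=\bigcap_k\overline\Sigma_{\N_k}(\pure)$ is annihilated by each $H_k$. The only cosmetic difference is in establishing the containment $\Sigma_{\N_k}(\pure)\subseteq\ker(h_{\N_k})$: the paper argues via $\tr(h_{\N_k}\rho_{\N_k})=0$ together with positive semi-definiteness of both factors, whereas you read it off directly from the Schmidt decomposition and orthonormality of the $|\gamma_l\rangle$; both are standard and equivalent.
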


\begin{proof}
Without loss of generality, we can scale the ground-state energy of each $h_{\N_k}$ such that it coincides with zero, the same 
holding for $H$, thanks to its FF property. Since $H\pure = 0$ implies that $(h_{\N_k} \otimes I_{\overline{\N}_k})\pure = 0$, 
for all $k$,  we also have that $\text{tr}(h_{\N_k}\rho_{\N_k}) = 0$. Using the fact that both $h_{\N_k}$ and $\rho_{\N_k}$ are positive semi-definite, it also follows that 
\begin{equation}
\label{1}
\supp(\rho_{\N_k}) = \Sigma_{\N_k}(\pure) \subseteq \text{ker}(h_{\N_k}), \quad \forall \, k.
\end{equation}
Recall that the DQLS subspace of $\pure$ is given by 
$\H_0 (\pure)= \bigcap_k (\Sigma_{\N_k}(\pure)\otimes \H_{\overline{\N}_k})$. For any $|\phi\rangle \in \H_0(\pure)$, $|\phi\rangle \in  \Sigma_{\N_k}(\pure)\otimes \H_{\overline{\N}_k}$, for all $k$. 
It can then be verified that \(\Sigma_{\N_k}(|\phi\rangle) \subseteq \Sigma_{\N_k}(\pure),\) for all $k$.  Using Eq. \eqref{1}, 
this shows that $\Sigma_{\N_k}(|\phi\rangle) \subseteq \text{ker}(h_{\N_k})$, and hence $H|\phi\rangle = 0$, for any $|\phi\rangle \in \H_0(\pure)$. 
Accordingly, $\H_0(\pure) \subseteq \text{ker}(H)$.
\end{proof}	

\noindent 
From the above observations, we conclude that for the multipartite settings where Theorem~\ref{No-Go:ext} applies, 
{\em no} FF QL Hamiltonian other than a trivial (scalar) one can have a generic pure state in its ground-state space. 
This is to be expected in view of the fact that since generic states have full-rank neighborhood RDMs, each term 
$h_{\N_k}$ that has the corresponding RDM in its ground space can only be the identity, by Eq.~\eqref{1}. 

\medskip

\begin{remark}
We have already commented on the applicability of the extended no-go Theorem~\ref{No-Go:ext} to special classes of 
pure states with full-rank RDMs in Remark~\ref{No-Go:non-generic}. As a consequence of Proposition~\ref{ground_states}, 
these states are also {\em not} expressible as the exact ground states of any non-trivial FF QL Hamiltonians with appropriate neighborhood structure. Interestingly, the fact that qubit graph states are never exact ground states of non-trivial two-body FF Hamiltonians recovers a known result independently established in \cite{graph}. 
\end{remark}

\subsection{QL Stabilization beyond generic pure states}
\label{tdesign}

While our main emphasis has been on generic pure states, it is worth stressing that the key mathematical features upon which our conclusions rest may hold for different sets of target states. In the tripartite setting, Corollary~\ref{full_red} 
shows that when DQLS is a measure zero property relative to $\N_{\rm tri}$, arbitrary pure states with full-rank RDMs 
$\rho_{\overline{\N}_{bc}}, \rho_{\overline{\N}_{ab}}$ on subsystems $a$ and $c$ also behave the same way. Similarly, 
Remark~\ref{No-Go:non-generic} demonstrates that quantum states with full-rank neighborhood RDMs $\rho_{\N_j}$ fall under the reach of the extended no-go Theorem~\ref{No-Go:ext} and thus are non-DQLS.
 
In a similar spirit, since generating random pure states on $\H \simeq \mathds{C}^d = (\mathds{C}^2)^{\otimes N}$ 
is well known to entail resources that scale exponentially in $N$ \cite{bengtsson}, ``pseudo-random'' quantum states have been extensively analyzed in QIP \cite{Joseph2003,caves,Gross,Winton}, that share only some properties with the former yet suffice for relevant tasks. Loosely speaking, a quantum state $t$-design is a subset of states that approximates the Haar distribution in that a pure state sampled from the design is indistinguishable from a Haar-sampled random pure state, given access to at most $t$ copies of the state \cite{caves}. More convenient to our scope is an equivalent definition of a $t$-design in terms of polynomials on pure-state amplitudes \cite{ambainis}. Let $\pure  = \sum_{i=1}^d \alpha_i | i\rangle$ with respect to the standard basis for $\mathds{C}^d$, and let $P_{(t,t)}(\psi) \equiv P(\alpha_1,\dots , \alpha_d, \alpha^*_1, \dots, \alpha^*_d)$ be a complex polynomial in $2d$ variables that is {\em balanced} in the sense that it has degree at most $t$ in both $\alpha_1, \ldots,\alpha_d$ and $\alpha_1^*, \ldots, \alpha_d^*$. A probability distribution $(p_i, |\phi_i\rangle)$ is called an {\em exact quantum state $t$-design} if the expected value of any $t$-order balanced polynomial  $P_{(t,t)}(\psi)$ over the set coincides with the expectation over {\em all} (normalized) pure states in $\H$,   
\begin{equation*}
\int_{\psi} P_{(t,t)} (\psi)\,d\psi= \sum_i p_i \, P_{(t,t)} (\phi_i), \quad \sum_i p_i =1, 
\end{equation*}
where the integral is taken with respect to the Haar-induced (Fubini-Study) measure.

Our claim is that, for sufficiently large $t$ and suitable subsystem dimensions, 
pure target states sampled from a $t$-design share the DQLS properties of 
fully random pure states, to high probability. In essence, this follows from the fact that suitable {\em concentration bounds} hold for the statistical moments of $t$-designs \cite{low_paper}. Consider first a tripartite system on $\H  = \H_a \otimes \H_b \otimes \H_c$, with neighborhood structure $\N_{\rm tri} = \{\N_{ab},\N_{bc}\}$, and let  $|\phi_t\rangle \in  \H$ be sampled from a $t$-design. The first relevant bound constraints the entropy across any bipartition: specifically, by using Lemma 2.7 in 
\cite{low_paper} for an exact design ($\epsilon=0$), one may show that the probability for the von Neumann entropy of subsystem $a$ across the bi-partition $a|bc$ (say, $S(\rho_a))$, to deviate from its maximum value by more than $\delta >0$ is upper-bounded as 
\begin{equation} 
{\mathbb P}[ |S(\rho_a)-S_{\text{max}}(\rho_a)| \geq \delta ] \leq 4 \Big( \frac{d_a }{C d_b d_c} t \Big)^{t/8}, 
\label{tbound}
\end{equation}
where $\delta = \alpha + d_a/(d_b d_c \ln 2 ), \alpha >0$, and the constant $C \approx (4/9\pi^3) (2^\alpha -1)^2$. The above bound implies that, with high probability, $|\phi_t\rangle$ has full-rank RDM on subsystem $a$, with a similar result holding across the bi-partition $c|ab$. Similarly, Theorem 1.2 in \cite{low_paper} shows that, if $P_{(k,k)}(\cdot)$ is any balanced polynomial of degree at most $k$, its expectation is close to its Haar expectation value with high probability, provided that $t\geq 2 k$.

Suppose now that the subsystem dimensions satisfy $d_c < d_a d_b$ (note that $d_a < d_b d_c $ is also automatically satisfied 
in our setting), so that by Theorem~\ref{generic:proof} DQLS states form either a measure zero or a measure one set. 
In the former case, thanks to the bound in Eq. \eqref{tbound}, Corollary~\ref{full_red} is applicable to 
$|\phi_t\rangle$ and it thus follows that the state is not DQLS, 
with high probability.
On the other hand, when the DQLS property is generic relative to $\N_{\rm tri}$, the polynomial $D(\cdot)$ defined in 
Eq. (\ref{det}) is different from zero. By construction, $D(\phi_t)$ is a balanced polynomial of degree at most $(d_a^2+d^2_c-1)$ 
in terms of $|\phi_t\rangle$ in the standard basis, thus its expectation value over the $t$-design coincides 
with that of the Haar measure, as long as $t \geq (d_a^2+d^2_c-1)$. Since this expectation value is strictly positive when Haar random pure states are DQLS with probability one, $D(\phi_t)$ is non-zero, with high probability due to Theorem 1.2 in \cite{low_paper}. 
It then follows that $|\phi_t\rangle$ is DQLS relative to $\N_{\rm tri}$, 
with high probability, as anticipated. By a similar line of reasoning, it can also be verified that for suitable subsystem-dimensions and large enough $t$, states 
sampled from a $t$-design are non-DQLS with high probability, whenever the no-go Theorems~\ref{No-Go} and~\ref{No-Go:ext} 
are applicable in the tripartite and multipartite cases, respectively.

\section{From quasi-local stabilization to uniquely determined quantum states}
\label{sec:marginal}

\subsection{DQLS vs UDA pure states}

As mentioned in the Introduction, the task of stabilizing a globally specified quantum state 
in the presence of QL constraints is naturally related to aspects of the quantum marginal problem 
and the general theme of how ``parts'' relate to the ``whole'' in multipartite settings. Mathematically, a direct 
connection between a DQLS state and its set of neighborhood RDMs is established in 
Theorem~\ref{DQLS:th}: the intersection of the extended Schmidt spans or, equivalently, 
the supports of the RDMs (extended to a global state by appending identities of the proper dimensions) 
must return the support of the target alone.  Since the relevant set of RDMs in the DQLS problem is derived 
from the given target pure state $\pure$, the ``existence'' part of the quantum marginal problem is automatically satisfied. 
Notwithstanding, knowledge about the QL stabilizability properties of $\pure$ may shed light onto whether and how the state 
may be uniquely reconstructed solely based on QL information.  

Following established terminology \cite{Chen}, given a list of density operators $\{\rho_\ell \}$, we say that 
$\pure$ is \emph{Uniquely Determined among all Pure states} (UDP) or \emph{Uniquely Determined among All states} (UDA) if 
no other pure or, respectively, arbitrary quantum state has the same $\{\rho_\ell\}$ as its RDMs 
(or marginals); 
for example, $k$-body RDMs have the form $\rho^{i_1\ldots i_k} = \tr_{i_{k+1}\ldots i_N}(\pure \langle\psi|)$, $i_j \in \{1,\ldots,N\}$.
While the UDA property clearly implies UDP, it was proved recently that the converse does {\em not} hold true 
in general \cite{UDAvsUDP}. 
It is also known that a DQLS state is always UDA by its neighborhood RDMs \cite{FFQLS};
for later reference, we re-establish this result in the present context:
\begin{thm} [\cite{FFQLS}] 
\label{UDA}
If a multipartite pure state $\pure \in \H$ is DQLS relative to the neighborhood structure $\N = \{\N_j\}_{j=1}^M$, 
then $\pure$ is UDA by the set of all its neighborhood RDMs $\{ \rho_{\N_j} \}$.
\end{thm}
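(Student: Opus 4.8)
The plan is to argue directly from the Schmidt-span characterization of DQLS states recorded in Theorem~\ref{DQLS:th}. Suppose $\sigma \in \mathcal{D}(\H)$ is an \emph{arbitrary} density operator—pure or mixed—whose neighborhood RDMs coincide with those of $\pure$, that is, $\sigma_{\N_j} = \rho_{\N_j} = \tr_{\overline{\N}_j}(\dpure)$ for every $j = 1,\dots,M$. The goal is to show $\sigma = \dpure$, which is exactly the UDA property.

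First I would establish the elementary support-containment property: for any bipartition $\H = \H_{\N_j}\otimes\H_{\overline{\N}_j}$ and any state $\sigma$ on $\H$, one has $\supp(\sigma) \subseteq \supp(\sigma_{\N_j}) \otimes \H_{\overline{\N}_j}$. Indeed, if $|v\rangle_{\N_j} \perp \supp(\sigma_{\N_j})$, then $0 = {}_{\N_j}\langle v|\sigma_{\N_j}|v\rangle_{\N_j} = \tr_{\overline{\N}_j}\big({}_{\N_j}\langle v|\sigma|v\rangle_{\N_j}\big)$ with a positive-semidefinite argument, so ${}_{\N_j}\langle v|\sigma|v\rangle_{\N_j} = 0$ and hence $\sigma$ has no overlap with the subspace $|v\rangle_{\N_j}\otimes\H_{\overline{\N}_j}$. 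Applying this with $\sigma_{\N_j} = \rho_{\N_j}$, and recalling from the discussion following Definition~\ref{schmidt_span} that $\supp(\rho_{\N_j}) = \Sigma_{\N_j}(\pure)$ and $\Sigma_{\N_j}(\pure)\otimes\H_{\overline{\N}_j} = \overline{\Sigma}_{\N_j}(\pure)$, I obtain $\supp(\sigma) \subseteq \overline{\Sigma}_{\N_j}(\pure)$ for every $j$.

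Intersecting over all neighborhoods and invoking the DQLS hypothesis via Eq.~\eqref{DQLS:eq} then gives $\supp(\sigma) \subseteq \bigcap_{j=1}^M \overline{\Sigma}_{\N_j}(\pure) = \H_0(\pure) = \Span\{\pure\}$. A density operator whose support is contained in a one-dimensional subspace must be the rank-one projector onto that subspace, so $\sigma = \dpure$. Since $\sigma$ was any state compatible with the marginals $\{\rho_{\N_j}\}$, this establishes that $\pure$ is UDA by its neighborhood RDMs.

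There is no serious obstacle here: the proof is essentially a repackaging of Theorem~\ref{DQLS:th} once one notices that agreement of marginals forces the global support into each extended Schmidt span. The only mildly non-routine ingredient is the support-containment lemma together with the bookkeeping identifying $\overline{\Sigma}_{\N_j}(\pure)$ with $\supp(\rho_{\N_j}\otimes I_{\overline{\N}_j})$—both of which are already available in the preceding material—so the argument remains short.
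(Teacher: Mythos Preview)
Your argument is correct, and it reaches the same intermediate conclusion as the paper---namely, that any state $\sigma$ sharing the neighborhood RDMs of $\pure$ must satisfy $\supp(\sigma)\subseteq\H_0(\pure)$---but by a different and more elementary route. The paper proceeds via purifications: it purifies $\tau$ on an ancilla space $\overline{\H}$, observes that equality of the $\N_j$-marginals forces the purifications of $\tau$ and $\pure$ to be related by a unitary on $\H_{\overline{\N}_j}\otimes\overline{\H}$, and then extracts from this unitary a family of operators $X^i_{\overline{\N}_j}$ that express each eigenvector of $\tau$ in the form $(I_{\N_j}\otimes X^i_{\overline{\N}_j})\pure$, so that Theorem~\ref{DQLS:subspace} places it inside $\H_0(\pure)$. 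Your approach bypasses both the purification machinery and Theorem~\ref{DQLS:subspace}: the support-containment lemma $\supp(\sigma)\subseteq\supp(\sigma_{\N_j})\otimes\H_{\overline{\N}_j}$, obtained directly from positive-semidefiniteness, already lands you in each extended Schmidt span, and the intersection is $\H_0(\pure)$ by definition. What your version gains is brevity and self-containment; what the paper's version buys is an explicit operator-level description of how the eigenvectors of any compatible $\tau$ sit inside $\H_0(\pure)$, which ties the result back to the reformulation in Sec.~\ref{restate} and motivates the subsequent remark about using $\H_0(\pure)$ to narrow the search for non-UDA witnesses.
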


\begin{proof}
We first show that any quantum state $\tau \in \mathcal{D}(\H)$ with the same set of RDMs 
$\{\tr_{\overline{\N}_j}(\tau)=\rho_{\N_j}\}$ as $\pure \langle \psi |$, is such that $\supp(\tau) \subseteq \H_0(\pure)$.
Express $\tau \equiv \sum_i p_i |\phi_i\rangle \langle \phi_i|$ in its eigenbasis, and let 
$|\phi\rangle \equiv \sum_i \sqrt{p_i} |\phi_i\rangle|i\rangle \in \H \otimes \overline{\H}$
be a purification of $\tau$, with $\{|i\rangle\}$ forming an orthonormal basis in the ancilla space $\overline{\H}$. 
Also, let $|\overline{\psi}\rangle \equiv \pure |k\rangle$, with $|k\rangle \in \overline{\H}$.  
For fixed $j$, consider the 
$\N_j|\overline{\N}_j$ bi-partition of $\H$. By assumption,
$\tr_{\overline{\N}_j\overline{\H}} (|\phi\rangle\langle\phi |) = 
\tr_{\overline{\N}_j\overline{\H}}(|\overline{\psi}\rangle \langle \overline{\psi}|)$; it then follows that 
$(I_{\N_j}\otimes U_{\overline{ \N}_j\overline{ \H}}) |\overline{\psi}\rangle = |\phi\rangle$, for all $j$, where 
$U_{\overline{ \N}_j\overline{\H}} \in \mathcal{B}(\H_{\overline{ \N}_j}\otimes \H_{\overline{\H}})$ is a unitary 
transformation. Furthermore, each eigenvector of $\tau$ with non-zero eigenvalue is expressible as
$|\phi_i\rangle \equiv ( I_{\N_j} \otimes X^i_{\overline{ \N}_j})\pure$, with 
$X^i_{\overline{ \N}_j} = \langle i | U_{\overline{ \N}_j\overline{\H}}|k\rangle/ \sqrt{p_i},$ for all $p_i \neq 0$, and for 
all $\N_j \in \N$. As a result, $|\phi_i\rangle \in \H_0(\pure)$ for all $i$, by Theorem~\ref{DQLS:subspace}, which implies 
$\supp(\tau) \subseteq \H_0(\pure)$. Since $\pure$ is DQLS relative to $\N$, $\H_0(\pure)$ is one-dimensional 
and thus $\tau = \dpure$, as claimed. 
\end{proof}

\noindent 
In addition to showing that DQLS implies UDA, the above result can be useful in {\em searching for UDA states relative to} $\N$:
as the proofs shows, if the given pure state $\pure$ is not UDA, any state $\tau$, pure or mixed, that shares the same 
RDMs must be supported in the DQLS subspace of $\pure$ relative to $\N$. Hence, characterizing the DQLS subspace of 
a non-UDA pure state $\pure$ may be useful in reducing the search space for other global states that are ``joined'' or 
determined by the same RDMs as $\pure$. 

While DQLS implies UDA, the converse is {\em not} true in general: a counter-example is provided by the W-state 
on $N$-qubits, which is known to be UDA by the set of all its two-body RDMs \cite{W_UJ}, 
yet fails to be DQLS relative to any non-trivial neighborhood structure \cite{DQLS:p}.
That the DQLS property is a strictly stronger requirement than UDA is to be expected, since DQLS implies that 
knowledge about {\em the support of the  neighborhood RDMs alone} can fully specify the global state. 
Conversely, non-DQLS 
states that are UDA require complete information about the RDMs in order to uniquely specify the global state. 
Uniquely determined pure states are, in turn, a strict subset of all possible pure states: for instance, the 
GHZ-state on $N$-qubits, which is not DQLS relative to any non-trivial neighborhood structure \cite{DQLS:p}, 
is non-UDA (hence non-UDP) with respect to the set of $k$-body RDMs for any $k<N$.  
A pictorial summary is given in Fig. \ref{figure2}.

\begin{figure}[t]
\centering
\includegraphics[width=0.4\textwidth]{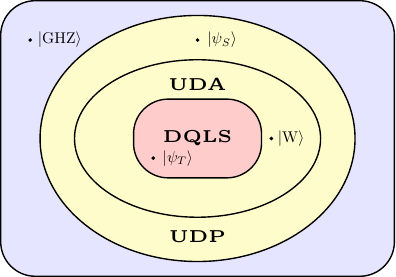}
\vspace*{1mm}
\caption{Inclusions between relevant subsets of multipartite pure quantum states. The two-excitation Dicke state 
on $N=4$ qubits, $|\psi_T\rangle$, was shown to be DQLS relative to any two three-body RDMs in \cite{DQLS:p}. 
An explicit symmetric state $|\psi_S \rangle$, also on $N=4$ qubits, that is UDP but not UDA was exhibited in \cite{UDAvsUDP}. }
\label{figure2}
\end{figure}

\subsection{Generic pure-state reconstruction in multipartite quantum systems}

Studying a multipartite quantum state based on the properties of its RDMs yields important insight on how correlations 
are distributed between different parties in the system: in fact, if an $N$-party state can be uniquely reconstructed 
using a given set of $k$-party RDMs ($k < N$), then the correlations it contains are limited to subsets of these $k$ 
parties \cite{Zhou}. Early results in this direction are due to Linden {\em et al.} \cite{3qubitUJ}, who established the UDA 
nature of generic pure states of 3-qubit systems with respect to their 2-party marginals. These studies were subsequently 
extended to arbitrary 3-qudit systems in \cite{red_states,red_states_impr,diosi} and to 4-qudit systems in \cite{gunhe}. 
In particular, in Linden \& Wootters showed that (i) a generic $\pure$ in $d_a \times d_b \times d_c$ dimensions 
(with $d_a \leq d_c$ w.l.o.g.), is UDA by $\rho_{ab}$ and $\rho_{bc}$, provided that $d_b \geq d_a + d_c -1$ and, building 
on this result, that (ii) a generic $\pure$ on $N$ qudits, with $N$ sufficiently large, is {\em UDA by only two} out 
of its $N \choose k$ $k$-body RDMs, where it suffices that $k \leq \lceil 2N/3\rceil$.
For $N=3$, it is worth noting that, assuming validity of Conjecture 3.11, the inequality in (i) implies $d_a d_b > d_c$, 
so that $\pure$ is indeed DQLS with respect to $\N_{\rm tri}$, hence consistently UDA.  For the case of $N$ qubits, 
the bound in (ii) was further improved to $k \leq \lceil N/2\rceil$ in \cite{red_states_impr}, at the cost of considering 
{\em about $N/2$} such RDMs.

Here, we strengthen the multipartite result by showing that a generic $N$-qudit pure state is UDA by only two of its 
$k$-body RDMs, with $k \approx N/2$, directly leveraging DQLS arguments. This affords two important advantages: 
not only does our approach avoid the need for full information about the RDMs, but it is {\em constructive} in nature, 
offering an explicit procedure for obtaining the global state via the DQLS condition in Eq.~\eqref{DQLS:eq}. 

\begin{prop}
\label{UDA:generic}

A generic multipartite pure state on $N$ qudits ($N > 3$) is UDA by its RDMs on any two neighborhoods $\N_{ab}$ and $\N_{bc}$ 
such that $\N_{ab} \cup \N_{bc} = \{1,2,\dots, N\}$, where it suffices that 
each contains no more than $ \floor{\frac{N+3}{2}}$ subsystems.
\end{prop}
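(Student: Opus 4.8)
The plan is to reduce the $N$-qudit problem to the tripartite analysis of Section~\ref{sec:generictrip} by \emph{coarse-graining}. Given a generic $\pure$ on $N$ qudits, I would group the subsystems into three blocks: the overlap $B\equiv\N_{ab}\cap\N_{bc}$, together with $A\equiv\N_{ab}\setminus B$ and $C\equiv\N_{bc}\setminus B$. Since $\N_{ab}\cup\N_{bc}=\{1,\dots,N\}$, these three blocks partition $\{1,\dots,N\}$, so $\pure$ may be regarded as a tripartite pure state in $\H_A\otimes\H_B\otimes\H_C$; because this grouping merely relabels the computational basis of $\C^{d^N}$, a Haar-generic $\pure$ stays generic with respect to the $A|B|C$ factorization, and in particular satisfies the maximal-Schmidt-rank and matrix-genericity hypotheses on which the tripartite theorems rest. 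The two RDMs in question are exactly $\rho_{\N_{ab}}=\rho_{AB}$ and $\rho_{\N_{bc}}=\rho_{BC}$. If the coarse-grained state is DQLS relative to $\N_{\rm tri}=\{\N_{ab},\N_{bc}\}$, then Theorem~\ref{UDA} gives at once that $\pure$ is UDA by $\{\rho_{AB},\rho_{BC}\}$; and the reconstruction is \emph{constructive}, since Theorem~\ref{DQLS:th} shows that $\pure$ then spans $\overline{\Sigma}_{\N_{ab}}(\pure)\cap\overline{\Sigma}_{\N_{bc}}(\pure)$, a subspace computed directly from the \emph{supports} of $\rho_{AB}$ and $\rho_{BC}$ (tensor each support with the identity on the complementary block, then intersect).

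It remains --- and this is the heart of the argument --- to choose the blocks so that (a)~$|\N_{ab}|,|\N_{bc}|\le\floor{\frac{N+3}{2}}$ and (b)~the coarse-grained dimensions fall into a combination for which generic tripartite states are \emph{provably} DQLS, i.e.\ one of the cases of Theorem~\ref{special_cases}. I would fix $|B|=2$ and split the remaining $N-2$ subsystems between $A$ and $C$ as evenly as possible. Then $|\N_{ab}|=|A|+2$, $|\N_{bc}|=|C|+2$, and a short computation gives $|A|,|C|\ge 1$ (using $N>3$) together with $\max\{|\N_{ab}|,|\N_{bc}|\}=\floor{\frac{N+3}{2}}$. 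Assuming the $N$ qudits all have dimension $d$: for $N$ even, $|A|=|C|$, so the coarse-grained triple has equal outer dimensions and central dimension $d^2>2$, which is case~(i) of Theorem~\ref{special_cases} (the form $d_a\times d_b\times d_a$), hence DQLS; for $N$ odd, $|C|=|A|+1$, so $\dim\H_C=d\,\dim\H_A$ and the triple is of the form $d_a\times d_b\times n d_a$ with $n=d$ and $1<n<d_b=d^2$, which is case~(iii), hence DQLS. In both cases the product of the first two coarse-grained dimensions exceeds the third, so the no-go Theorem~\ref{No-Go} does not intervene. Assembling this with Theorem~\ref{UDA}, and reading off $\overline{\Sigma}_{\N_{ab}}(\pure)\cap\overline{\Sigma}_{\N_{bc}}(\pure)$ as the reconstruction, completes the proof and yields the improvement over earlier bounds, which needed information about order-$N/2$ of the RDMs.

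The main obstacle lies exactly in step~(b): staying merely \emph{outside} the no-go region of Theorem~\ref{No-Go} is not enough, because there the generic DQLS property is only \emph{conjectured} in general (Conjecture~3.11) and, for the exceptional family $d_a\times 2\times d_a$, it is genuinely \emph{false} (Theorem~\ref{qubitb}(i)) --- and that family is precisely what a single-qubit overlap ($|B|=1$ in a register of qubits) would generate. Taking $|B|=2$ makes $\dim\H_B\ge 4>2$ and sidesteps this, which is also why the allowed neighborhood size has to be $\floor{\frac{N+3}{2}}$ (``about half plus one'') rather than $\floor{N/2}+1$. A secondary, more technical point is the case of non-uniform qudit dimensions, where $|A|=|C|$ no longer forces $\dim\H_A=\dim\H_C$: there one instead partitions $A\cup C$ so as to keep $\dim\H_A$ and $\dim\H_C$ as close as possible and invokes case~(ii) or~(iii) of Theorem~\ref{special_cases}; I would dispatch this after the uniform-dimension case.
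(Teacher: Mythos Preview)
Your proposal is correct and follows essentially the same route as the paper: coarse-grain the $N$ qudits into three blocks $A,B,C$, land in one of the provably-DQLS cases of Theorem~\ref{special_cases}, and then invoke Theorem~\ref{UDA}. The paper's proof differs only in the bookkeeping of the central block: it takes $|B|=1$ for $N$ odd with $d>2$ (invoking case~(i) since $d_b=d>2$), and reserves $|B|=2$ for $N$ even and for $N$ odd with $d=2$ (the latter to avoid the $d_a\times 2\times d_a$ failure you correctly flag). Your uniform choice $|B|=2$ actually streamlines the case analysis---case~(i) for $N$ even and case~(iii) with $n=d<d^2=d_b$ for $N$ odd covers all $d\ge 2$ at once---while still meeting the size bound $\floor{(N+3)/2}$, so nothing is lost.
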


\begin{proof} Let us organize the system of $N$ qudits into three subsystems $a,b$ and $c$, such that their dimensions 
are given by
\begin{itemize}
\item $d^{\frac{N-2}{2}} \times d^2 \times d^{\frac{N-2}{2}}$ for $N$ even, $d\geq 2$.
\item $d^{\frac{N-1}{2}} \times d \times d^{\frac{N-1}{2}}$ for $N$ odd, $d>2$.
\item $d^{\frac{N-3}{2}} \times d^2 \times d^{\frac{N-3}{2}+1}$ for $N$ odd, $d=2$.
\end{itemize}
From Theorem~\ref{special_cases}, we know that a generic tripartite pure state in $d_a \times d_b \times d_c$ dimensions 
($d_b >2$) is DQLS relative to the neighborhood structure $\N_{\rm tri} = \{\N_{ab},\N_{bc}\}$ when $d_a = d_c,d_b=d_c/2$. 
Hence, by Theorem~\ref{UDA}, in order to uniquely reconstruct a generic pure state in these dimensions, knowledge about the 
RDMs $\{\rho_{ab},\rho_{bc}\}$, is sufficient. The separate consideration of the $d=2$ case  (for $N$ odd) is due to the fact that 
generic pure states in $d_a \times 2 \times d_c$ dimensions are not DQLS when 
$d_a = d_c$, due to Theorem~\ref{qubitb}.
\end{proof}

It is important to appreciate that both the neighborhood structure and the {\em order} of the subsystems are, in general, 
important in determining the DQLS property of $\pure$. For example, a generic tripartite pure state in $2\times 2 \times 5$ 
dimensions is {\em not} DQLS relative to $\N_{\rm tri}= \{\N_{ab},\N_{bc}\}$, by Theorem~\ref{No-Go}. But, $\pure$ 
{\em is} DQLS relative to $\N_{\rm tri}' = \{\N_{ac},\N_{cb}\}$ (where $b$ and $c$ have been swapped), 
as generic states in $2 \times 5  \times 2$ dimensions are 
DQLS by Theorem~\ref{special_cases}. While $\N_{\rm tri}$ and $\N_{\rm tri}'$ are not equivalent from a control-theoretic 
standpoint as they represent different constraints, subsystems' ordering need not be important for 
quantum state reconstruction in typical settings where {\em all} $k$-body RDMs are assumed to be available.
Hence, when inferring the UDA property from DQLS results in such cases, it is allowable to permute subsystems as needed.  
With reference to the above example, choosing $\N_{\rm tri}'$ allows one to conclude that $\pure$ is UDA by all two-body 
RDMs -- in particular, $\{\rho_{ac},\rho_{bc}\}$ alone suffice.

One may verify that the results in Proposition~\ref{UDA:generic} extends those of \cite{diosi} which, 
while also being constructive, are {\em restricted to UDP}. Further to that, as anticipated, Eq.~\eqref{DQLS:eq} requires 
information only about the support of the relevant RDMs, as opposed to full information about these RDMs, as in previous 
works. We may thus summarize our result about generic-state reconstruction as follows: \emph{A generic pure state on 
$N$ qudits ($N$ large) is UDA by the support of any of its two $k$-body RDMs, where $k \approx N/2$ and the 
reconstruction procedure is given by Eq.~\eqref{DQLS:eq}}.

\section{Conclusion}
\label{sec:conclusion}

We have characterized the extent to which {\em generic} pure states on a finite-dimensional multipartite quantum system 
can be the unique asymptotic equilibrium state of purely dissipative Markovian dynamics, subject to specified 
locality constraints. Beside further contributing to the understanding of fundamental properties of both generic 
quantum states and quantum stabilization problems, we have shown how our results have implications for a 
number of problems relevant to QIP.  Specifically, on the one hand we have addressed QL stabilization of pure states 
drawn from a state $t$-design, as opposed to the full Haar measure, and approximate QL stabilization of pure states 
that are not stabilizable in exact form; on the other hand, we have revisited uniquely determined quantum states and 
quantum state tomography based on reduced density matrices in the light of the underlying stabilizability properties.

A number of related questions are prompted by the present analysis. First, it may be interesting to understand whether 
(and, if so, why) the fact that stabilizability occurs either with probability one or with probability zero is a feature exhibited 
by more general constrained stabilization settings than the QL tripartite one we have demonstrated here -- both in 
regard to $N>3$ and the possibility of different resource constraints (for instance, symmetry-constrained dissipative 
dynamics), as well as to in connection to ``phase transition'' phenomena that are known to occur for entanglement 
sharing and quantum marginals \cite{ss1}.  Second, by making contact with the 
analysis of mixed-state QL stabilization in \cite{FFQLS}, it is natural to ask about stabilizability properties of {\em 
random mixed states}.  In this case, one may consider the standard induced ensemble resulting from taking a partial 
trace over part of a random pure state or ensembles of random density matrices distributed according to other probability
measures \cite{bengtsson,Osipov}, again beginning from the simplest setting of a tripartite target system.

\section*{Acknowledgements}

It is a pleasure to thank Abhijeet Alase, Ramis Movassagh, and Michael Walter for insightful discussions on 
various aspects relevant to the present study. Work at Dartmouth was supported by the US NSF under Grant 
No. PHY-1620541, and the Constance and Walter Burke Special Projects Fund in Quantum Information Science.

\section{Appendix} 
\label{sec:app}

\subsection{Properties of generic matrices}
\label{sec:generic}

\begin{lemma}
\label{distinct_eig}
Let $A \in \mathds{M}^{d \times d}$ be a generic matrix. If $B$ is any matrix such that $[A,B] = \textbf{0}$, 
then $B = P(A)$, where $P(A)$ is a polynomial in $A$ with degree at most $(d-1)$.
\end{lemma}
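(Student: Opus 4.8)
The plan is to exploit the fact that a generic square matrix has $d$ distinct eigenvalues, which reduces the claim to the standard description of the centralizer of a matrix with simple spectrum. First I would recall --- or invoke the collection of generic-matrix facts in Appendix~\ref{sec:generic} --- that the set of $A \in \mathds{M}^{d\times d}$ whose characteristic polynomial has a repeated root is the zero set of the discriminant, a fixed nonzero polynomial in the entries of $A$; hence a generic $A$ is diagonalizable with pairwise distinct eigenvalues $\lambda_1,\dots,\lambda_d$, say $A = S\Lambda S^{-1}$ with $\Lambda = \mathrm{diag}(\lambda_1,\dots,\lambda_d)$.

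Next I would transport the commutation relation through $S$: writing $C \equiv S^{-1} B S$, the hypothesis $[A,B]=\textbf{0}$ becomes $[\Lambda, C] = \textbf{0}$, which entrywise reads $(\lambda_i - \lambda_j)\, C_{ij} = 0$ for all $i,j$. Since the $\lambda_i$ are distinct, $C_{ij}=0$ whenever $i\neq j$, so $C = \mathrm{diag}(c_1,\dots,c_d)$ is diagonal.

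The final step is Lagrange interpolation: because the nodes $\lambda_1,\dots,\lambda_d$ are distinct, there is a (unique) polynomial $P$ of degree at most $d-1$ with $P(\lambda_i)=c_i$ for every $i$. Then $P(\Lambda) = C$, and conjugating back gives $P(A) = S P(\Lambda) S^{-1} = S C S^{-1} = B$, as claimed. If one prefers to bypass diagonalizability, the same conclusion follows from the general fact that the centralizer of any $A$ contains the commutative algebra $\mathds{C}[A]$, which for an $A$ with simple spectrum (in particular, nonderogatory) already has dimension $d$ and therefore exhausts the centralizer; the degree bound is then just reduction modulo the degree-$d$ minimal polynomial.

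I do not anticipate a genuine obstacle here; the only point deserving a line of care is the genericity input, namely that ``distinct eigenvalues'' is a measure-one property --- this is precisely the non-vanishing of the discriminant polynomial and is among the generic-matrix properties used throughout the paper.
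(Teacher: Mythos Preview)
Your proposal is correct and follows essentially the same route as the paper: both arguments use that a generic $A$ has distinct eigenvalues, diagonalize, observe that any commuting $B$ must be diagonal in the same basis, and then produce the interpolating polynomial of degree at most $d-1$. The only cosmetic difference is that the paper phrases the last step via invertibility of the Vandermonde matrix $V_A$ built from the eigenvalues (solving $V_A\vec{c}=\vec{b}$ for the coefficient vector), whereas you invoke Lagrange interpolation directly; these are two names for the same construction.
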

\begin{proof}

The characteristic polynomial of the generic square matrix $A$ has arbitrary
coefficients, hence its roots are all distinct and $A$ is diagonalizable. Let $\{\mu_j(A)\}_{j=1}^d$ denote the spectrum of $A$ with $\mu_i(A) \neq \mu_j(A)$ when $i \neq j$. Since $[A,B] = \textbf{0}$, $A$ and $B$ are diagonal in the same basis. Let $\{\mu_j(B)\}_{j=1}^d$ be the spectrum of $B$ and $\vec{b} =[\mu_1(B),\dots, \mu_d(B)]^T$. In a basis where both $A$ and $B$ are diagonal, their non-zero entries are related as $\mu_j(b) = P(\mu_j(A))$, for all $j$, if $B$ is a polynomial in $A$ where $P(\cdot)$ is of some finite degree. In order to show this, construct the  Vandermonde matrix
\begin{equation*}
V_A \equiv \left[\begin{array}{c c c c} 1 & \mu_1(A) & \dots \mu_1^{d-1}(A)  \\ 1 & \mu_2(A) & \dots \mu_2^{d-1}(A) \\ \vdots & \ddots &\vdots \\ 1 & \mu_d(A) & \dots \mu_d^{d-1}(A) \end{array} \right].
\end{equation*}
Then, det$(V_A) = \prod_{1\leq i<j\leq d}(\mu_i(A)-\mu_j(A))$, which is non-zero as all the $\mu_j(A)$s are distinct. Accordingly, $V_A$ is invertible. Thanks to the Cayley-Hamilton theorem, any polynomial in $A$ is expressible as $P(A) = \sum_{k=0}^{d-1}c_kA^k$, $c_k\in \mathds{C}$. In a basis where $A$ (hence $P(A)$) is diagonal, the non-zero entries of $P(A)$ are given by the entries of the vector $V_A\vec{c}$, where $\vec{c} = [c_0,\dots, c_{d-1}]^T$. In order to verify that the same holds for $B$, choose in particular $\vec{c} = V_A^{-1}\vec{b}$, so that $\mu_j(B) = P(\mu_j(A))$ for all $j$; thus, $B = P(A)$, where $P(\cdot )$ is a polynomial of degree at most $(d-1)$, as claimed.
\end{proof}

\begin{lemma}
\label{generic_kernel}
Let $A \in \mathds{M}^{d \times d}$ be a generic matrix. $\vec{b} \in \mathds{C}^d$ is a generic vector independent of $A$. If $P(A)$ is a polynomial in $A$ satisfying $P(A)\vec{b} = \vec{0}$, then $P(A)$ is the zero polynomial.
\end{lemma}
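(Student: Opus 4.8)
The plan is to use the spectral structure that genericity forces on $A$. As noted in the proof of Lemma~\ref{distinct_eig}, a generic $A\in\mathds{M}^{d\times d}$ has $d$ distinct eigenvalues $\mu_1,\dots,\mu_d$ and is hence diagonalizable; fix a right-eigenvector basis $\{\vec u_j\}$ of $\mathds{C}^d$ and let $\{\vec w_j\}$ be the biorthogonal dual basis, $\vec w_i^{\,\dagger}\vec u_j=\delta_{ij}$, so that $A=\sum_{j}\mu_j\,\vec u_j\vec w_j^{\,\dagger}$. Since the characteristic polynomial $\chi_A$ has degree $d$, I may replace $P$ by its remainder modulo $\chi_A$ without changing $P(A)$, and thus assume $\deg P\le d-1$; for such a $P$, $P(A)=\sum_{j}P(\mu_j)\,\vec u_j\vec w_j^{\,\dagger}$.

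First I would feed this into the hypothesis: $P(A)\vec b=\sum_{j=1}^{d}P(\mu_j)\,(\vec w_j^{\,\dagger}\vec b)\,\vec u_j=\textbf{0}$. Because the $\vec u_j$ are linearly independent, this forces $P(\mu_j)\,(\vec w_j^{\,\dagger}\vec b)=0$ for every $j$.

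Next I would invoke the genericity of $\vec b$. Each $\vec w_j$ is a fixed nonzero vector once the generic $A$ is fixed, so $\vec b\mapsto\vec w_j^{\,\dagger}\vec b$ is a nonzero linear functional whose kernel is a proper subspace, hence of measure zero; the union over the finitely many $j$ is still null, so a generic $\vec b$ has $\vec w_j^{\,\dagger}\vec b\neq 0$ for all $j$. (Equivalently: a generic $A$ is non-derogatory and a generic $\vec b$ is then a cyclic vector for $A$, so $\{\vec b,A\vec b,\dots,A^{d-1}\vec b\}$ is a basis and $P(A)\vec b=\textbf{0}$ with $\deg P\le d-1$ immediately gives $P=0$.) It then follows that $P(\mu_j)=0$ for all $d$ distinct scalars $\mu_1,\dots,\mu_d$; a polynomial of degree at most $d-1$ with $d$ distinct roots vanishes identically, so $P$ is the zero polynomial and $P(A)=\textbf{0}$.

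The only point that requires a little care — and the one I would state explicitly — is the genericity bookkeeping for the pair $(A,\vec b)$: the exceptional set is $\{(A,\vec b):\ \vec w_j(A)^{\,\dagger}\vec b=0\ \text{for some }j\}$, and since for each generic $A$ the corresponding slice in $\vec b$ is measure zero, the whole set is null by Fubini. Everything else is elementary linear algebra, so I do not anticipate a genuine obstacle; the lemma is a routine consequence of Lemma~\ref{distinct_eig} together with the fact that a generic vector avoids any fixed finite family of hyperplanes.
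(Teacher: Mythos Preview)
Your proof is correct and follows essentially the same approach as the paper's: both diagonalize the generic $A$, expand $\vec b$ in the eigenbasis (your coefficients $\vec w_j^{\,\dagger}\vec b$ are precisely the paper's $b_j$), and use linear independence plus genericity of $\vec b$ to force $P(\mu_j)=0$ at all $d$ distinct eigenvalues. Your added remarks on reducing $P$ modulo $\chi_A$, the cyclic-vector alternative, and the Fubini bookkeeping are nice clarifications but not substantive departures from the paper's argument.
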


\begin{proof} 
Since $A$ is a generic square matrix, it has $d$ distinct eigenvalues and is diagonalizable with its set of left/right eigenvectors forming a basis for $\mathds{C}^d$. Let $\{a_j\}_{j=1}^d$ denote the spectrum of $A$ and $\{\vec{\alpha}_j\}_{j=1}^d$ the corresponding right-eigenvectors, respectively, such that $A\vec{\alpha}_j = a_j \vec{\alpha}_j$, for all $j$. Let $P(A) = \sum_{k=0}^{d-1}p_kA^k$, for $p_k \in \mathds{C}$, for all $k$.  Owing to the generic nature of $\vec{b}$, $\vec{b} = \sum_{j=1}^d{b_j\vec{\alpha}_j}$, for $b_j \in \mathds{C}$ and $b_j \neq 0$, for all $j$. Following our assumption,  $\vec{b}$ belongs to the right kernel of $P(A)$. That is,
\begin{equation*}
P(A)\vec{b} = \sum_{j=1}^d{b_j(\sum_{k=0}^{d-1} p_k A^k)\vec{\alpha}_j} = \sum_{j=1}^d{b_j(\sum_{k=0}^{d-1} p_k a^k)\vec{\alpha}_j} = \vec{0}.
\end{equation*}
Since $\{\vec{\alpha}_j\}$ are a linearly independent set, it follows that $\sum_{k=0}^{d-1}{p_ka_j^k} = 0$ for all $j$, which in turn shows that $P(A) = \textbf{0}$.
\end{proof}

\begin{lemma}
\label{2commute}
Let $A,B \in \mathds{M}^{d \times d}$ be two generic matrices, independent of each other. The only non-trivial matrix that commutes with both of them simultaneously is the identity.
\end{lemma}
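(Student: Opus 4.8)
The plan is to show that the centralizer of the pair $\{A,B\}$ in $\mathds{M}^{d\times d}$ is exactly $\mathds{C}I$, which is the content of the statement. First I would invoke Lemma~\ref{distinct_eig}: since $A$ is generic it has $d$ distinct eigenvalues, and any $C\in\mathds{M}^{d\times d}$ with $[A,C]=\mathbf 0$ is a polynomial in $A$, say $C=\sum_{k=0}^{d-1}p_k A^k$. It then only remains to impose the second commutation relation $[B,C]=\mathbf 0$ and show that it forces $p_1=\cdots=p_{d-1}=0$, leaving $C=p_0I$, a scalar.

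Substituting $C=\sum_{k=0}^{d-1}p_kA^k$ into $[B,C]=\mathbf 0$ and noting that the $k=0$ term drops out, the condition becomes $\sum_{k=1}^{d-1}p_k\,[A^k,B]=\mathbf 0$. Hence the lemma reduces to the claim that, for a generic pair $(A,B)$, the $d-1$ matrices $[A,B],[A^2,B],\dots,[A^{d-1},B]$ are linearly independent in $\mathds{M}^{d\times d}$. Their entries are polynomials in the entries of $A$ and $B$, so the rank of the $d^2\times(d-1)$ matrix whose columns are their vectorizations is full on the complement of the common zero set of its $(d-1)\times(d-1)$ minors; since those minors are polynomials, that set is either all of $\mathds{M}^{d\times d}\times\mathds{M}^{d\times d}$ or of measure zero. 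It therefore suffices to exhibit one pair $(A_0,B_0)$ for which linear independence holds, and the genericity of $A$ having distinct eigenvalues (used above) only excludes a further measure-zero set.

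For the explicit pair, I would take $A_0=\mathrm{diag}(\lambda_1,\dots,\lambda_d)$ with $\lambda_1,\dots,\lambda_d$ pairwise distinct and $B_0$ the all-ones matrix. Then $A_0^k=\mathrm{diag}(\lambda_i^k)$ and $[A_0^k,B_0]$ has $(i,j)$ entry $\lambda_i^k-\lambda_j^k$. A relation $\sum_{k=1}^{d-1}c_k[A_0^k,B_0]=\mathbf 0$, read off along the first row, gives $\sum_{k=1}^{d-1}c_k(\lambda_1^k-\lambda_j^k)=0$ for $j=2,\dots,d$; setting $f(x):=\sum_{k=1}^{d-1}c_kx^k$, this says $f(\lambda_1)=f(\lambda_2)=\cdots=f(\lambda_d)$, so $f-f(\lambda_1)$ is a polynomial of degree at most $d-1$ with $d$ distinct roots and hence vanishes, forcing $f$ to be constant and, since $f(0)=0$, $f\equiv 0$, i.e.\ all $c_k=0$. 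This establishes the linear independence, and with it the lemma.

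The only non-routine ingredient is the transfer from this single example to the generic case, which is the standard fact that a not-identically-vanishing polynomial condition holds off a measure-zero set; everything else is bookkeeping. As an alternative route one could bypass the commutator computation entirely: the pair $(A_0,B_0)$ above has no common invariant subspace (the invariant subspaces of $A_0$ are the coordinate subspaces, none of which is preserved by the all-ones $B_0$), so by Burnside's theorem it generates all of $\mathds{M}^{d\times d}$; since generating the full matrix algebra is again a generic condition, a generic pair $(A,B)$ generates $\mathds{M}^{d\times d}$, whence any $C$ commuting with both commutes with every matrix and is therefore central, i.e.\ scalar.
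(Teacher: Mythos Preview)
Your proof is correct, but it follows a genuinely different route from the paper's. The paper also starts with Lemma~\ref{distinct_eig} to write $X=P(A)$, but then argues directly in the eigenbasis of $A$: it picks a single eigenvector $\vec\beta$ of $B$, observes that $[P(A),B]=\mathbf 0$ together with the simplicity of the spectrum of $B$ forces $P(A)\vec\beta=\lambda\vec\beta$, expands $\vec\beta$ in the eigenbasis $\{\vec\alpha_j\}$ of $A$ with all coefficients nonzero (invoking the independence of the two generic matrices), and reads off $P(a_j)=\lambda$ for every eigenvalue $a_j$ of $A$, whence $P(A)=\lambda I$. No commutator computation, no explicit example, no Zariski-type transfer.

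Your approach instead recasts $[B,P(A)]=\mathbf 0$ as a linear-independence statement for $[A^k,B]$, $k=1,\dots,d-1$, and then verifies it on one carefully chosen pair $(A_0,B_0)$, transferring to the generic case via the standard ``nonvanishing polynomial $\Rightarrow$ measure-zero failure set'' argument. This buys you a cleaner handling of genericity: the paper's step ``eigenvectors of two independent generic matrices are independent of each other, so all $c_j\neq 0$'' is true but left unproved, whereas your single-example-plus-openness device makes the generic claim fully explicit. Your Burnside alternative is the most conceptual of the three and would also serve as a two-line replacement for the whole argument. The paper's proof, by contrast, is more self-contained in that it stays entirely within the eigenvector calculus already set up in Lemmas~\ref{distinct_eig}--\ref{generic_kernel} and needs no external algebraic input.
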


\begin{proof}
Let $X$ be a matrix that commutes simultaneously with $A$ and $B$. Due to the generic nature of $A$, it has all distinct eigenvalues and, following Lemma~\ref{distinct_eig}, $X = P(A)$, which is a polynomial of degree $(d-1)$ or lower. Also, $[P(A), B] = \textbf{0}$. Let $\{a_j\}_{j=1}^d$ denote the spectrum of $A$ and $\{\vec{\alpha}_j\}_{j=1}^d$ the corresponding right-eigenvectors, such that $A\vec{\alpha}_j = a_j \vec{\alpha}_j$, for all $j$. The latter form a basis for $\mathds{C}^d$.
Let $P(A) = \sum_{k=0}^{d-1} p_kA^k$, $p_k \in \mathds{C}$, for all $k$. If $\vec{\beta}$ denotes a right-eigenvector of $B$ with eigenvalue $b$, $BP(A)\vec{\beta} = bP(A)\vec{\beta}$. Due to the generic nature of $B$, all its eigenvalues are distinct and therefore, $P(A)\vec{\beta} = \lambda\vec{\beta}$ for some $\lambda \in \mathds{C}$. The eigenvectors of two independent generic matrices are also independent of each other, and hence $\vec{\beta} = \sum_{j=1}^d{c_j\vec{\alpha}_j}$, for $c_j \in \mathds{C}$ and $c_j \neq 0$, for all $j$. Hence,
\begin{equation*}
P(A)\vec{\beta} - \lambda\vec{\beta} = \sum_{j=1}^d{c_j \Big((\sum_{k=0}^{d-1}p_kA^k ) -\lambda \Big)\vec{\alpha}_j} = \sum_{j=1}^{d}{c_j \Big((\sum_{k=0}^{d-1} p_ka_j^k)-\lambda \Big)\vec{\alpha}_j} = \vec{0}.
\end{equation*}
This implies that $\sum_{k=1}^{d}{p_ka_j^k=\lambda}$ for all $j$, because $\{\vec{\alpha}_j\}$ are linearly independent. This in turn shows that $P(A) = \lambda I$, that is, $X$ is proportional to the identity.
\end{proof}

\subsection{Technical proofs}  
\label{sub:proofs}

We present here complete proofs that were not included in the main text for brevity.

\medskip

\noindent 
{\bf Theorem 3.7.} 
{\em Let $\pure$ be a generic tripartite pure state in 
$ \H_a \otimes \H_b \otimes \H_c$, with dimension $d_a \times d_b \times d_c$ and 
neighborhood structure $\N_{\rm tri} = \{\N_{ab},\N_{bc}\}$. When $d_b = 2$, 
\begin{enumerate}[i.]
\item $\pure$ is not DQLS relative to $\N_{\rm tri} $ when $d_a = d_c$. The dimension of $\H_0(\pure)$ is $d_a$.
\item $\pure$ is not DQLS relative to $\N_{\rm tri} $ when $d_c>d_a$, except for $d_c = d_a +1$. The dimension 
of $\H_0(\pure)$ is $\min\{d_a^2,(d_c-d_a)^2\}$.
\end{enumerate}}

\medskip
\noindent {\em Proof.}
We prove each statement separately. 
\begin{enumerate}[{\em i.}]

\item Let $d_a = d_c \equiv d$. The DQLS nature of generic pure states in $d\times 2 \times d$ dimensions is 
determined by the solution space of Eq.~\eqref{commute}. Since $A_{00}$ is a generic square matrix, it has all distinct 
eigenvalues. For this reason, any matrix that commutes with $A_{00}$ belongs to the space spanned by the first $(d-1)$ 
powers of $A_{00}$, including the identity (see Lemma~\ref{distinct_eig}). Thus, $\{X_a,X_c=X_a^T\}$ that satisfy 
Eqs.~\eqref{conditions} belongs to this $d$-dimensional space. According to Proposition~\ref{DQLS:generic}, the 
DQLS subspace $\H_0(\pure)$ is $d$-dimensional ($d>1$), hence $\pure$ is not DQLS.

\item In this case, $\bar{d} = d_c-d_a > 0$. We only need to focus on the case where $\bar{d} < d_a$, 
$\bar{d} \neq 0$, that is not covered under the no-go Theorem~\ref{No-Go}.  Let us reorganize 
Eqs.~\eqref{cqubit1}-\eqref{cqubit2} into a compact form by vectorizing the matrices $X_a,\tilde{X}_{0\bar{d}}$ 
and $\tilde{X}_{\bar{d}\bar{d}}$, i.e., by stacking up their columns and using the property 
${\rm vec}(AXB)= (B^T\otimes A){\rm vec}(X)$ \cite{bhatia}:
\begin{equation}
\label{vec}
\mathcal{C}\vec{v} \equiv
\left[\begin{array}{c c c}
A_{0\bar{d}}^T\otimes I_{00} & -I_{\bar{d}\bar{d}}\otimes A_{0\bar{d}} & \textbf{0} \\
A_{00}^T\otimes I_{00}-I_{00}\otimes A_{00}^T & \textbf{0} & -I_{00}\otimes A_{0\bar{d}}
\end{array} \right] 
\left[\begin{array}{l}
{\rm vec} (X_a)\\ {\rm vec}(\tilde{X}_{\bar{d}\bar{d}})\\ {\rm vec}( \tilde{X}_{\bar{d}0})
\end{array}\right]
= \left[\begin{array}{c}
\textbf{0}\\ \textbf{0}
\end{array} \right],
\end{equation}
Here, $I_{00},I_{\bar{d}\bar{d}}$ are the $d_a\times d_a$ and $\bar{d} \times \bar{d}$ identity matrices, respectively, 
$\mathcal{C}$ is called the coefficient matrix, and $ \vec{v}$ is formed by stacking the columns of ${\rm vec} (X_a), 
{\rm vec}(\tilde{X}_{\bar{d}\bar{d}})$ and  ${\rm vec}( \tilde{X}_{\bar{d}0})$.  This homogeneous system of linear equations 
has $\bar{d}d_a+d_a^2$ constraints and $\bar{d}d_a+d_a^2+\bar{d}^2$ unknowns. Evidently, it is under-constrained and 
has at least one non-trivial solution (corresponding to $X_a,\tilde{X}_{\bar{d}\bar{d}}$ proportional to the identity and 
$\tilde{X}_{0\bar{d}}$  being the zero matrix). The total number of linearly independent constraints in Eq.~\eqref{vec}, 
which is given by the row rank of $\mathcal{C}$, determines the dimension of the solution space. To determine the 
row rank of $\mathcal{C}$, observe that the first $d_a^2$ rows of $\mathcal{C}$ (upper block) are linearly independent 
of the rest (lower block), by construction. For the upper block, linearly dependent rows, if any, imply that there exist 
some non-zero vector ${\rm vec}(P)$ such that, 
${\rm vec}(P)^T[A_{0\bar{d}}^T\otimes I_{00} \; -I_{\bar{d}\bar{d}}\otimes A_{0\bar{d}}] = \vec{0}.$

This condition is equivalent to $P^TA_{0\bar{d}} = \textbf{0}$ and $A_{0\bar{d}}P^T = \textbf{0}$, where $P$ is the matrix form of ${\rm vec}(P)$. That is, columns (rows) of $P$ belong to the left (right) kernel of $A_{0\bar{d}}$. Due to its generic nature, $A_{0\bar{d}}$ does not have a non-trivial right kernel when $\bar{d} < d_a$. Therefore, there exists no non-trivial $P$ that simultaneously satisfies the conditions given above and hence, the rows in the upper block of $\mathcal{C}$ are linearly independent of each other. Similarly, linear dependencies of the lower block, if any, imply the existence of a vector ${\rm vec}(Q)$ such that
\begin{equation*}
{\rm vec}(Q)^T[A_{00}^T\otimes I_{00}-I_{00}\otimes A_{00}^T \quad -I_{00}\otimes A_{0\bar{d}}] = \vec{0}, 
\end{equation*}
or, equivalently, the conditions: $[A_{00},Q^T] = \textbf{0}$ and $Q^TA_{0\bar{d}} = \textbf{0}$, satisfied for some matrix $Q$. Any $Q^T$ that commutes with the generic matrix $A_{00}$ is such that $Q^T = P(A_{00})$, the latter being some polynomial in $A_{00}$, by Lemma~\ref{distinct_eig}. By invoking Lemma~\ref{generic_kernel}, it also follows that if $P(A_{00})A_{0\bar{d}}=\textbf{0}$ for another generic matrix $A_{0\bar{d}}$, then $P(A_{00})$ is the zero polynomial (trivial solution). This shows that the rows in the lower block of 
$\mathcal{C}$ are also linearly independent of each other.   Thus, we conclude that the coefficient matrix $\mathcal{C}$ has maximal row rank. Hence, the homogeneous system in Eq.~\eqref{vec} has $\bar{d}d_a+d_a^2$ independent constraints vs. 
$\bar{d}d_a+d_a^2+\bar{d}^2$ variables, and the dimension of the solution space is $\bar{d}^2$.

When $\bar{d} = 1$, it is thus verified that the only non-trivial solution corresponds to $X_a$ and $X_c$ proportional to the 
identity in the respective spaces. Accordinglyin $d_a \times 2 \times (d_a+1)$ dimensions, generic pure states are DQLS 
according to Proposition~\ref{DQLS:generic}. When $\bar{d} >1$ instead (with $\bar{d}<d_a$), Proposition~\ref{DQLS:generic} 
shows that the dimension of $\H_0(\pure)$, is $\bar{d}^2>1$, therefore generic states are non-DQLS, as claimed.\hfill$\Box$
\end{enumerate}

\newpage 

\noindent 
{\bf Theorem 3.10.} 
{\em Let $\pure$ be a generic tripartite pure state in $\H_a\otimes\H_b \otimes \H_c$, with neighborhood structure 
$\N_{\rm tri} = \{\N_{ab},\N_{bc}\}$ and central subsystem dimension $d_b >2$. Then, $\pure$ is DQLS relative 
to $\N_{\rm tri}$ for the following combinations of subsystem dimensions:  
\begin{enumerate}[i.]
\item $d_a \times d_b \times d_a$, 
\item $d_a \times d_b \times (d_a+1),$
\item $d_a \times d_b \times n d_a$, with $1<n<d_b$.
\end{enumerate}
}

\medskip
\noindent {\em Proof.}
We prove each statement separately.  
\begin{enumerate}[{\em i.}]

\item  $d_a \times d_b \times d_a$.
This case is an extension to Theorem~\ref{qubitb}(i), which shows that generic pure states in $d_a \times 2 \times d_a$ dimensions are not DQLS relative to $\N_{\rm tri}$. Eq.~\eqref{commute}, which determined the DQLS nature of generic pure states when $d_b = 2$, is now modified to $X_aA_{00}^i = A_{00}^iX_a,$ $i = 1, \dots, d_b-1.$
Due to Lemma~\ref{2commute}, the only non-trivial matrix that commutes with two or more generic matrices simultaneously is the identity. Thus, the only choice of $\{X_a,X_c=X_a^T\}$ that satisfy the set of equations~\eqref{conditions} for $d_b >2$ is proportional to the identity, and by Proposition~\ref{DQLS:generic}, generic pure states in  $d_a \times d_b \times d_a$ dimensions are DQLS relative to $\N_{\rm tri}$, for any $d_b>2$.

\item  $d_a \times d_b \times (d_a+1)$.
Following Theorem~\ref{qubitb}(ii), generic pure states in $d_a \times 2 \times (d_a+1)$ dimensions are DQLS relative to $\N_{\rm tri}$. It then suffices to apply Proposition~\ref{higher-d_b} to show that  generic pure states in $d_a \times d_b \times (d_a+1)$ dimensions are also DQLS relative to $\N_{\rm tri}$, for any $d_b >2$.

\item  $d_a \times d_b \times nd_a$, $1<n<d_b$.
When $n\geq d_b$, these cases fall under the no-go Theorem~\ref{No-Go}, and hence are not considered in this proof. 
We first prove that generic pure states in  $d_a \times 3 \times 2d_a$ dimensions are DQLS relative to $\N_{\rm tri}$. Consider the modified version of the set of equations~\eqref{conditions} given by Eqs.~\eqref{many1}-\eqref{many2}, for $d_b = 3$. The block decomposition of the relevant matrices are given in Table~\ref{block}. In this case, $A_{0\bar{d}}^i$ are generic square matrices since $\bar{d} = d_a$, and hence invertible. Eliminating $\tilde{X}_{\bar{d}\bar{d}}$ in  Eq.~\eqref{many1} for $i=1,2$ leads to the condition 
$ [ A_{0\bar{d}}^2(A_{0\bar{d}}^1)^{-1},X_a] = \textbf{0}.$
Let $Q \equiv  A_{0\bar{d}}^2(A_{0\bar{d}}^1)^{-1} $. Since $Q$ is a generic matrix, Lemma~\ref{distinct_eig} implies that $X_a = P(Q)$, with $P(\cdot)$ being a polynomial of degree at most $(d_a-1)$. Eq.~\eqref{many2} then rewrites
\begin{align*}
\tilde{X}_{0\bar{d}} &=  (A_{0\bar{d}}^1)^{-1}(P(Q)A^1_{00} - A^1_{00}P(Q)), 
\\
\tilde{X}_{0\bar{d}} &=  (A_{0\bar{d}}^2)^{-1}(P(Q)A^2_{00} - A^2_{00}P(Q)). 
\end{align*}
We see that RHS of both these equations are equal to each other. Owing to the full-rank nature of the generic matrices that are involved in these equations, we now show that they are satisfied only by $\tilde{X}_{0\bar{d}} = \textbf{0}$. For this purpose, observe that equating the LHS of the two equations we obtain $Q[P(Q),A^1_{00}] = [P(Q),A^2_{00}]$. This in turns leads to $[P(Q),(\bar{A}^1_{00}-A^2_{00})]=0$, where $\bar{A}^1_{00} = QA^1_{00}$. Due to the generic, invertible nature of the matrices involved in this relationship, $Q$ and $(\bar{A}^1_{00}-A^2_{00})$ are independent of each other. Following Lemma~\ref{2commute}, $P(Q) = I$, otherwise this implies that a non-identity matrix $P(Q)$ commutes with two independent, generic matrices $Q$ and $(\bar{A}^1_{00}-A^2_{00})$. Therefore, $X_a = P(Q) = I$. This in turn shows that  $\tilde{X}_{\bar{d}\bar{d}} = I$ and $\tilde{X}_{0\bar{d}} = \textbf{0}$.  Referring to Table~\ref{block}, these observations imply $X_c = I$. Thus, by Proposition~\ref{DQLS:generic}, generic pure states in $d_a \times 3 \times 2d_a$ dimensions are DQLS.

The above result is now extended with the help of Proposition~\ref{higher-d_b} and Proposition~\ref{partDQLS} to show that generic pure states in $d_a \times d_b \times nd_a$ dimensions are DQLS relative to $\N_{\rm tri}$, for any $d_b \geq 3$ and $1<n < d_b$. \hfill$\Box$
\end{enumerate}

\section*{References}

\end{document}